\setlist{noitemsep}
\crefname{fact}{fact}{facts}
\Crefname{fact}{Fact}{Facts}
\crefname{proposition}{proposition}{proposition}
\Crefname{proposition}{Proposition}{Proposition}
\DeclarePairedDelimiter\ceil{\lceil}{\rceil}
\algrenewcommand\algorithmicrequire{\textbf{Input:}}
\algrenewcommand\algorithmicensure{\textbf{Output:}}
\newtheorem{theorem}{Theorem}
\newtheorem{lemma}[theorem]{Lemma}
\newtheorem{proposition}[theorem]{Proposition}
\newtheorem{corollary}[theorem]{Corollary}
\newtheorem{fact}[theorem]{Fact}
\newtheorem{definition}[theorem]{Definition}
\newcommand{\dsl}[0]{\ensuremath{\llbracket}}
\newcommand{\dsr}[0]{\ensuremath{\rrbracket}}
\newcommand{\F}[0]{\ensuremath{\mathbb{F}_{2}}}
\newcommand{\expect}[2][]{\ensuremath{\underset{#1}{\mathbb{E}}\left[#2\right]}}
\newcommand{\splitatcommas}[1]{%
  \begingroup
  \ifnum\mathcode`,="8000
  \else
    \begingroup\lccode`~=`, \lowercase{\endgroup
      \edef~{\mathchar\the\mathcode`, \penalty0 \noexpand\hspace{0pt plus 1em}}%
    }\mathcode`,="8000
  \fi
  #1%
  \endgroup
}
\begin{document}

\title{Fast quantum interconnects via constant-rate entanglement distillation}
\author[1,2,*,$\dagger$]{Christopher A. Pattison}
\author[3,4,*]{Gefen Baranes}
\author[3]{J. Pablo Bonilla Ataides}
\author[3]{Mikhail D. Lukin}
\author[2,3,$\dagger$]{Hengyun Zhou}

\affil[1]{Institute for Quantum Information and Matter, California Institute of Technology, Pasadena, CA 91125, USA.}
\affil[2]{QuEra Computing Inc., 1284 Soldiers Field Road, Boston, MA, 02135, USA.}
\affil[3]{Department of Physics, Harvard University, Cambridge, Massachusetts 02138, USA.}
\affil[4]{Massachusetts Institute of Technology, 77 Massachusetts Avenue, Cambridge, MA 02138, USA.}
\affil[*]{These authors contributed equally.}
\affil[$\dagger$]{Correspondence should be addressed to: cpattiso@caltech.edu, hyzhou@quera.com.}
\twocolumn[
  \begin{@twocolumnfalse}

  \maketitle

    \begin{abstract}
    Distributed quantum computing allows the modular construction of large-scale quantum computers and enables new protocols for blind quantum computation.
    However, such applications in the large-scale, fault-tolerant regime place stringent demands on the fidelity and rate of entanglement generation which are not met by existing methods for quantum interconnects.
    In this work, we develop constant-rate entanglement distillation methods to address this bottleneck in the setting of noisy local operations.
    By using a sequence of two-way entanglement distillation protocols based on quantum error detecting codes with increasing rate, and combining with standard fault tolerance techniques, we achieve constant-rate entanglement distillation.
    We prove the scheme has constant-rate in expectation and further numerically optimize to achieve low practical overhead subject to memory constraints.
    We find our optimized schemes outperform existing computationally efficient quantum interconnect schemes by an order of magnitude in relevant regimes, leading to a direct speed-up in the execution of distributed quantum algorithms.
    \end{abstract}
\vspace*{0.2in}
  \end{@twocolumnfalse}
]

\section{Introduction}
The ability to interconnect multiple quantum nodes is key to a wide range of tasks in quantum communication, distributed quantum metrology, and distributed quantum computing~\cite{kimble2008quantum, Wehner2018QuantumInternet, Azuma2023QuantumRevModPhys}.
By exploiting entanglement between multiple network nodes, one can perform information-theoretic-secure quantum key distribution~\cite{bennett1984update,Ekert1991Bell, gisin2002quantum,lo2014secure}, or realize optical interferometry with very long baselines~\cite{gottesman2012longer,khabiboulline2019optical,khabiboulline2019quantum}.
Furthermore, a high-fidelity quantum interconnect can enable distributed quantum computing between multiple nodes~\cite{buhrman2003distributed, monroe2012large}, as well as blind quantum computation where an untrusted server can perform a desired computation without gaining knowledge about the computation itself~\cite{Broadbent2009BQC, fitzsimons2017private}.
The ability to distribute the computation among multiple nodes may also have additional benefits in terms of modularity and interchangeability of hardware systems, enabling easier scaling to large system sizes.

\begin{figure*}
  \includegraphics[width=\textwidth]{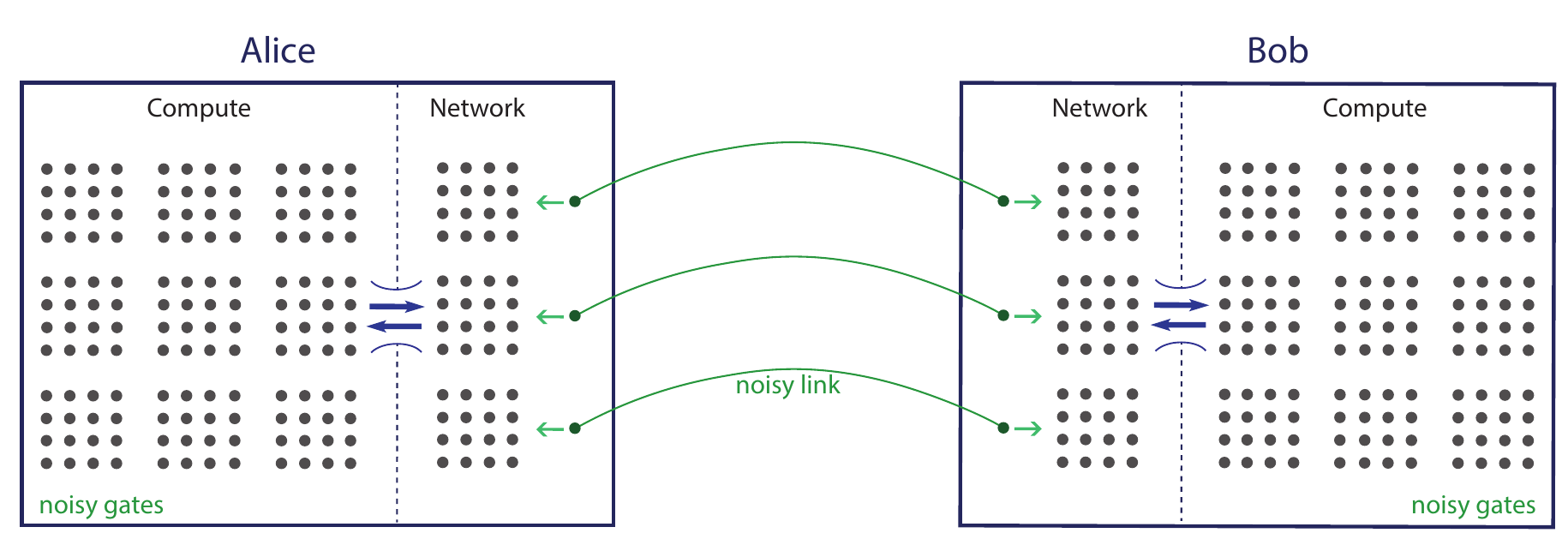}
  \caption{Overview of the distributed quantum computing model with two nodes, Alice and Bob. Each node consists of a ``Compute'' area for local quantum operations and a ``Network'' area, a buffer memory space for managing entangled qubits. Logical qubits in the network are entangled over a noisy channel, then distilled and transferred to the compute areas to facilitate distributed quantum computations.}
  \label{fig:setup}
\end{figure*}

Despite these promising applications, the capabilities of current generation quantum interconnects are still very far from the desired fidelities and communication rates.
This is particularly challenging for applications involving computation, where one has stringent demands on both the achievable entanglement fidelity as well as the resulting rate of quantum communication.
For example, state-of-the-art demonstrations of interconnecting quantum nodes only achieve entangling fidelities in the high 90s and entangling rates of a few hundred pairs per second~\cite{stephenson2020high,jing2019entanglement,ritter2012elementary,hucul2014modular,mirhosseini2020superconducting,knaut2023entanglement,pompili2021realization,meesala2023quantum}.
The entanglement fidelity can be improved through the use of various one-way or two-way entanglement distillation schemes~\cite{bennett1996purification,bennett1996mixed}, but this often leads to a further degradation in the effective logical entangling rate.
With typical target gate operation infidelities for large-scale algorithms in the $10^{-12}$ range, the standard 4-to-1 recursive distillation scheme would require almost a hundred physical Bell pairs (per logical Bell pair) of percent-level infidelity.
Alternative schemes relying on lattice surgery have also been developed~\cite{fowler2010surface,ramette2023fault}, but would again require hundreds to thousands of physical Bell pairs per logical Bell pair in similar parameter regimes.
The combination of these factors suggests that current computationally efficient schemes and physical hardware will result in logical entangling rates of a few Hz, much slower than the QEC cycle times of 1~$\mu$s to 1~ms in various state-of-the-art hardware systems~\cite{acharya2022suppressing,bluvstein2024logical,ryan-anderson2021realization,egan2021fault,abobeih2022fault,postler2023demonstration,krinner2022realizing,huang2023comparing,sivak2022real,ni2022beating}.
It is thus highly desirable to develop methods that use the communication link much more sparingly, thereby achieving higher logical entangling rates while maintaining high target logical fidelities.

In this paper, we develop entanglement distillation methods that achieve constant quantum communication rate in expectation, and show that they yield practical protocols with very low communication overhead.
To achieve the constant communication overhead, we adapt a technique originating in fault-tolerant computation~\cite{yamasaki2022time}, where careful selection of the code concatenation sequence permits a concatenated code family to achieve constant encoding rate.
We rigorously analyze the error rates and success probabilities at each step of the protocol, thereby providing theoretical guarantees of the distillation overhead and performance.

Fig.~\ref{fig:setup} illustrates the setup we consider in this article. We work in the setting where two-way classical communication is permitted~\cite{bennett1996mixed}, thereby allowing the use of error detection and post-selection instead of error correction.
This allows one to use input physical Bell pairs with much lower starting fidelity and can achieve higher encoding rates.
The use of error detection allows high encoding rate without needing to solve the decoding problem for quantum stabilizer codes.
In the computationally unbounded setting, random quantum stabilizer codes (hashing) can also be used to achieve high encoding rate in error-correction mode.
However, decoding of random quantum stabilizer codes is closely related to decoding random \emph{classical} codes, which is used as a hardness assumption for certain post-quantum cryptosystems~\cite{mceliece1978public,bernstein2017classic,melchor2018hamming,aragon2022bike}, and therefore may be computationally challenging.
Constant-rate coding schemes such as quantum CSS polar codes~\cite{renes2012efficient} may be used, but they do not achieve the coding rate of random stabilizer codes on Pauli channels.

To conserve communication bandwidth and ensure that errors primarily arise from noisy physical Bell pair generation between different nodes, we inject each physical qubit state into a logical qubit (e.g. a surface code) using standard high-fidelity state injection techniques~\cite{li2015magic,lao2022magic,gidney2023cleaner}, which add only a negligible amount of noise in the process when the network is much noisier.
This allows us to bound and ignore local operation errors in the process, thereby using the communication channel more efficiently.

Following the proof of asymptotic constant rate, we then examine our scheme numerically, directly optimizing the code sequence over the best possible parameters for small classical and quantum codes~\cite{grassl2023codetables}.
We find sequences of codes that require as few as 6 physical Bell pairs per logical Bell pair at 1\% physical Bell pair error rates, almost an order of magnitude lower than conventional schemes, with only a modest increase in the required memory footprint.
Our scheme also allows a continuous tradeoff between communication overhead and network buffer memory footprint, enabling the optimization of these parameters in a wide range of physical settings.

Our techniques are particularly useful in the regime where each network node is of a reasonably large size, such that there is sufficient buffer memory space to store the logical qubits used for distillation.
Recent hardware advances have demonstrated the possibility of local nodes with hundreds to thousands of physical qubits in a variety of physical platforms~\cite{ebadi2021quantum,ibm2023thousand,malinowski2023how,tao2023high,manetsch2024tweezer,norcia2024iterative}, suggesting that this could be a very relevant regime for distributed quantum computation.
This is in contrast to the setting of small, noisy local nodes typically analyzed in the setting of quantum networks~\cite{muralidharan2016optimal,gisin2002quantum,kimble2008quantum}.
We therefore believe that our techniques will be useful for large scale, fault-tolerant distributed quantum computing.

This paper is organized as follows:
We first provide a high-level summary of our results in the remainder of this section.
In Sec.~\ref{sec:background}, we introduce some basic notation and relevant background definitions.
We then describe our scheme for concatenated entanglement distillation, generalized from Ref.~\cite{bennett1996mixed}, in Sec.~\ref{sec:distillation_setup}.
We show that this scheme achieves asymptotically constant rate in Sec.~\ref{sec:constant_overhead}, and numerically analyze its performance in Sec.~\ref{sec:numerics}, together with its implications on distributed quantum computing in Sec.~\ref{sec:numerics} and Sec.~\ref{sec:distributed_computation}.
Finally, we conclude with a discussion in Sec.~\ref{sec:conclusion}.

\subsection{Summary of results}
Our analysis focuses on the setting of distributing a large quantum computation across multiple networked quantum nodes (Fig.~\ref{fig:setup}), and provides methods to significantly reduce the communication overhead over the network, which is typically the bottleneck in experimental implementations~\cite{stephenson2020high,jing2019entanglement,ritter2012elementary,hucul2014modular,mirhosseini2020superconducting,knaut2023entanglement,pompili2021realization,meesala2023quantum}.

Using the correspondence between entanglement distillation and quantum error correction~\cite{bennett1996mixed}, we adapt recent schemes to achieve constant space overhead computation via code concatenation~\cite{yamasaki2022time} to the quantum interconnect setting (Fig.~\ref{fig:concat-distillation-process}).
Unlike those results, however, we exploit the ability to herald successful attempts in the state distillation setting, thereby using quantum error detecting codes to simplify the distillation procedure and improve performance.

\begin{figure*}
  \centering
  \includegraphics[width=\textwidth]{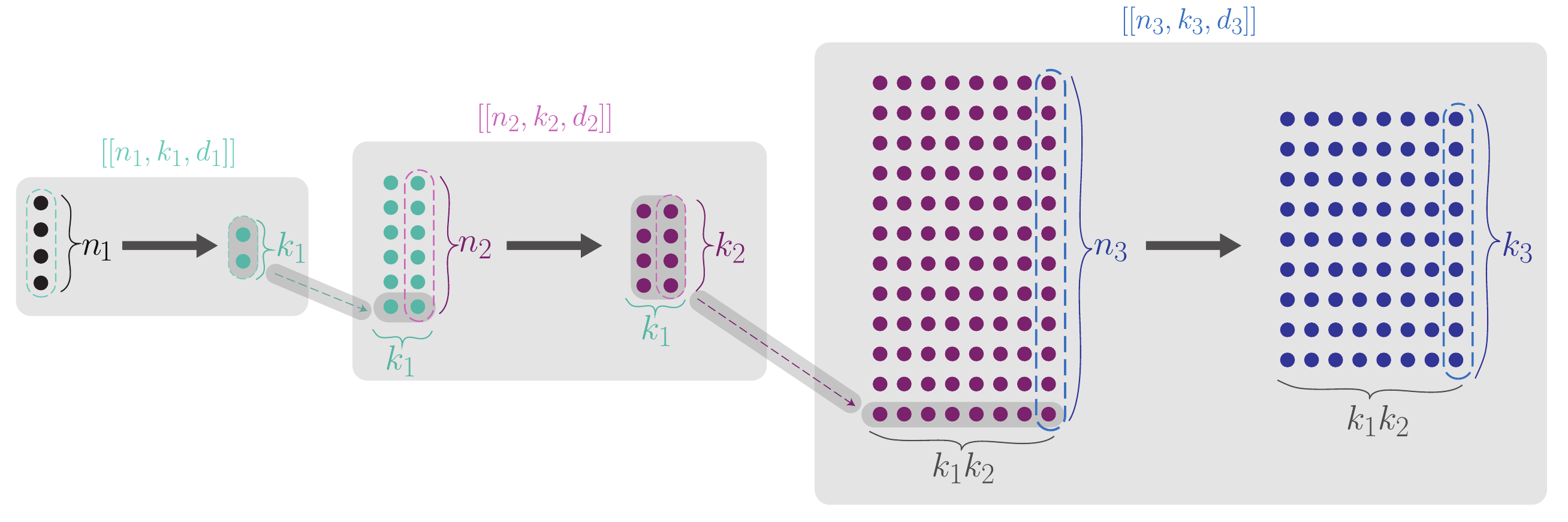}
  \caption{
    Repeated concatenated distillation procedure.
    At each step the checks of a quantum code are imposed along each column.
    If the syndrome is trivial, then each column is unencoded and the entire register becomes a row in the next step.
    Each round of distillation uses qubits originating from independent lower-level distillation attempts to avoid error correlations in the input.
    \label{fig:concat-distillation-process}
  }
  \label{fig:distillation-concatenation}
\end{figure*}

\subsubsection{Constant Rate Entanglement Distillation}
Our main formal result is the construction of an entanglement distillation scheme that achieves a quantum communication rate (distillation overhead) that is asymptotically constant in expectation, meaning that the ratio between output logical Bell pairs and input physical Bell pairs is constant, thereby using the quantum interconnect more efficiently.

\begin{theorem}[Informal statement of main theorem]
\label{thm:informal_thm}
For Bell pairs subjected to a noise channel \(\mathcal{E}\) that acts as the identity with probability \(1-p\) and applies an arbitrary Pauli error with probability \(p \in [0,1/2)\), there exists an entanglement distillation scheme based on concatenated error detection, such that for any desired output error rate \(\epsilon \in (0,1)\), the following holds:
\begin{enumerate}
    \item Let \(K\) be the number of distilled Bell pairs of fidelity \(\epsilon\) output and let \(N\) be the number of Bell pairs consumed, then \footnote{The notation \(f(x) = \Omega(g(x))\) means that \(g(x) = O(f(x))\).}
        \begin{align*}
            \expect{\frac{K}{N}} = \Omega(1)
        \end{align*}
    \item For some \(\alpha >0\), the maximum amount of memory required is
        \begin{align*}
            O\left(\left(\log \log 1/\epsilon\right)^{\alpha \log \log 1/\epsilon}\right),
        \end{align*}
        which is quasi-polylogarithmic in \(1/\epsilon\).
    \end{enumerate}
\end{theorem}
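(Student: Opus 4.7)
The plan is to leverage the standard correspondence between two-way entanglement distillation and stabilizer error detection due to Bennett et al.: given $n$ noisy Bell pairs with Pauli error rate $q$, applying the check operators of an $[[n,k,d]]$ stabilizer code on both Alice's and Bob's sides, measuring syndromes, and postselecting on their agreement with the trivial value yields $k$ output Bell pairs whose residual Pauli-error rate satisfies $q' \leq A\,(nq)^{d}$, because every undetectable error has weight at least $d$; the per-attempt heralding probability obeys $P_{\mathrm{succ}} \geq 1 - O(nq)$. I would then select a sequence $\{[[n_\ell, k_\ell, d_\ell]]\}_{\ell \geq 1}$ of small detecting codes with rates $r_\ell = k_\ell/n_\ell \to 1$ fast enough that $\sum_\ell (1-r_\ell) < \infty$, so $\prod_\ell r_\ell > 0$; a convenient choice is $n_\ell = \Theta(\ell^a)$ with $k_\ell = n_\ell - O(1)$ and $d_\ell = 2$, mirroring the Yamasaki--Koashi constant-rate concatenation construction but replacing correction with heralded detection because the two-way classical channel is available.

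Second, I would control residual error level by level. Set $q_0 = p$ and $q_{\ell+1} \leq A\,(n_\ell q_\ell)^{d_\ell}$, invoking the input-independence property already noted in the paper (each level-$\ell$ attempt is fed by distinct lower-level successes, so errors remain i.i.d.\ Pauli and no correlation correction appears in the recursion). Since $d_\ell \geq 2$, the recursion contracts double-exponentially, so $L = O(\log \log 1/\epsilon)$ concatenation levels suffice to drive $q_L \leq \epsilon$.

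Third, I would establish the expected rate. The crucial operational point is that a heralded failure at level $\ell$ wastes only a single level-$\ell$ block rather than the whole tower, so the process decouples into an independent geometric retry at each level with per-trial success probability $P_\ell \geq 1 - O(n_\ell q_\ell)$. Counting expected raw pairs per successful top-level block and dividing by the $\prod_\ell k_\ell$ output Bell pairs it contains yields
\begin{align*}
\mathbb{E}[K/N] &= \prod_{\ell=1}^{L} r_\ell\, P_\ell \\
&\geq \exp\!\Bigl(-\sum_{\ell}\bigl[(1-r_\ell) + O(n_\ell q_\ell)\bigr]\Bigr).
\end{align*}
Both series are summable: the first by construction of the code sequence, the second because $q_\ell$ decays double-exponentially while $n_\ell$ grows only polynomially in $\ell$. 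Hence the expected rate is bounded below by a positive constant independent of $\epsilon$, giving the claimed $\Omega(1)$.

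Fourth, the memory bound follows by direct bookkeeping: one top-level encoded block occupies $M_L = \prod_{\ell=1}^L n_\ell$ physical qubits, which for $n_\ell = \mathrm{poly}(\ell)$ evaluates to $L^{\Theta(L)}$; substituting $L = \Theta(\log\log 1/\epsilon)$ yields $O\bigl((\log\log 1/\epsilon)^{\alpha \log\log 1/\epsilon}\bigr)$ for a suitable constant $\alpha$. The main obstacle is the joint tuning of $(n_\ell, k_\ell, d_\ell)$: simultaneously keeping $\sum(1-r_\ell)$ summable (so the asymptotic rate survives), $\sum n_\ell q_\ell$ summable (so retry overhead is bounded), and the depth $L = O(\log\log 1/\epsilon)$ (so the memory is quasi-polylogarithmic). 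Showing that these three constraints admit a common solution --- which they do whenever the super-exponential shrinkage of $q_\ell$ dominates the polynomial growth of $n_\ell$, possibly after a short low-rate burn-in phase that first pushes $p$ below some fault-tolerant threshold --- is the crux; once verified, the remainder is a routine telescoping recursion.
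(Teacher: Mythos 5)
Your proposal is correct and follows essentially the same route as the paper: a concatenated sequence of distance-2 error-detecting (parity-type) codes with $n_\ell$ growing polynomially in the level, repeat-until-success at each level with independent lower-level blocks feeding each attempt, a doubly-exponential error recursion giving $L=O(\log\log 1/\epsilon)$ levels, summable rate-loss and retry-overhead series, a product-of-block-sizes memory bound, and an initial BDSW-2EPP burn-in to reach the threshold from any $p<1/2$. The only step worth tightening is your rate equation: what the block-counting argument actually bounds is $\expect{N}/K$, and you should pass to $\expect{K/N}\ge K/\expect{N}$ via Jensen's inequality (convexity of $x\mapsto 1/x$), exactly as the paper does.
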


In Thm.~\ref{thm:informal_thm}, we assume perfect local operations.
Using standard state injection techniques for the surface code~\cite{li2015magic,lao2022magic,gidney2023cleaner} (see Sec.~\ref{sec:injection}), we can remove this restriction to achieve the same result with a marginally lower Bell pair fidelity threshold, as long as the local operations have noise rate $p_\mathrm{gate}\lessapprox 0.3\%$ (below the surface code threshold).

To prove this theorem, we will first bound the logical error rate and success rate of the $[[n,n-2,2]]$ parity code (\cref{fact:basic-ed-code}), when operated as a quantum error detecting (QED) code for entanglement distillation.
This code quadratically suppresses the logical error rate (\cref{prop:ed-distillation-error-rate}), with a failure rate that is linear in the input error rate (\cref{prop:ed-overhead}).
By concatenating the protocol using a sequence of such codes with $n_i=(2i)^2$ (\cref{defn:quadratic-code-sequence}), the overhead and success rate at each layer
\begin{align}
\frac{n}{k}=\frac{n}{n-2}\rightarrow 1, \quad
p_{success}=1-O(\sqrt{p_{out}})\rightarrow 1,
\end{align}
leading to the combined constant rate via careful analysis of the failure and encoding rates.
Moreover, because each level of concatenation suppresses the error rate quadratically, the output error rate scales as
\begin{align*}
p_{out}\propto (\mathrm{const.})^{2^l}.
\end{align*}
Achieving a logical error rate of $\epsilon$ therefore only requires $O(\log\log 1/\epsilon)$ levels of concatenation, leading to the bound on the total amount of memory.

\subsubsection{Optimization in Practical Regimes}
After proving that constant quantum communication overhead using error detection is possible, we then turn our attention to an analysis at finite size, where we show that our methods can offer an order-of-magnitude reduction in communication overhead in practically-relevant regimes.
Optimization of the exact code sequence is of practical interest in order to both reduce the amount of memory consumed by the distillation procedure and achieve the best communication overhead.
Our methods allow a continuous trade-off between communication overhead and memory consumption, allowing us to identify the most efficient sequence given target constraints.

Figure~\ref{overheads_different_p_in} illustrates the trade-off between communication overhead and buffer size across various error rates. The data points represent optimized distillation sequences, showing that our scheme maintains high communication efficiency even with modest buffer sizes of tens of logical qubits. The overhead decreases consistently across different error rates as the buffer size increases, allowing a continuous trade-off between different parameters.

In Tab.~\ref{table:comparison}, we further 
present a comparison of the communication overhead for different schemes under various network error rates. Our constant-rate distillation scheme significantly reduces resource overhead compared to existing computationally efficient schemes~\cite{bennett1996mixed,fowler2010surface,ramette2023fault}. For instance, with a $1\%$ Bell pair error rate, our scheme achieves an overhead of around 6 physical Bell pairs per logical Bell pair, while BDSW-2EPP~\cite{bennett1996mixed} requires dozens and lattice surgery requires thousands. This highlights the practical advantages of our scheme in reducing resource demands.

Finally, we consider the detailed scheduling of the full distillation process, including a pipelined approach to minimize space-time overhead, and compare the results to other methods. This comprehensive analysis confirms the practical applicability and efficiency of our scheme for distributed quantum computation.

\subsubsection{Distributed Quantum Computation}
With the promising performance of our entanglement distillation sequences, we analyze the implications of these results in the setting of distributed quantum computation.
We find that a useful characterization of the communication requirements is the algorithm- and compilation-dependent parameter $\beta$, defined to be the average number of intercore logical entanglement operations per core required for each intracore logical circuit layer.
This parameter may typically range from $O(1)$ for sequential circuits, to $O(s_c)$ for random circuits, where $s_c$ is the number of computational logical qubits per core.
The network communication becomes the bottleneck when $\beta t_{inter}\geq t_{intra}$, where $t_{inter}$ is the inter-module logical operation time and $t_{intra}$ is the intra-module logical operation time.
Since most current quantum computing platforms have much faster intra-module operation speeds compared to inter-module communication~\cite{stephenson2020high,jing2019entanglement,ritter2012elementary,hucul2014modular,mirhosseini2020superconducting,knaut2023entanglement,pompili2021realization,meesala2023quantum}, one is typically in the limit of network communication being the bottleneck.
We therefore expect that the order-of-magnitude reduction in communication overhead we achieve directly leads to the same improvement in executing large-scale, distributed quantum computations.
Our methods are also directly compatible with recent methods to speed up fault-tolerant computation on local nodes, based on transversal gates and algorithmic fault tolerance~\cite{cain2024correlated,zhou2024algorithmic}.

\begin{figure}
  \centering
  \includegraphics[width=\columnwidth]{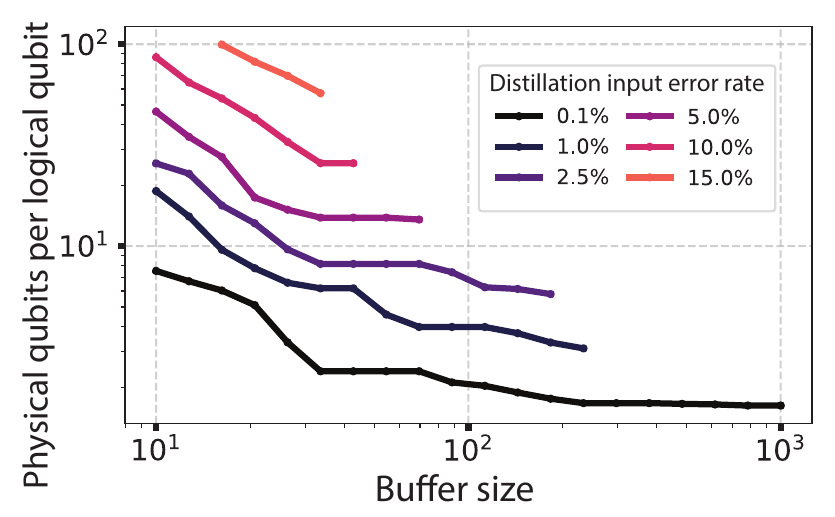}
  \caption{Optimized physical Bell pair overhead per logical Bell pair, as a function of the buffer size constraint, for different logical Bell pair input error rates. Each data point represents a distillation sequence, with a target output error rate $p_{\text{target}} = 10^{-12}$.
  \label{overheads_different_p_in}
  }
\end{figure}

\section{Background}
\label{sec:background}
\subsection{Task of Quantum Interconnects}

We consider two quantum network nodes with noisy local gates, denoted Alice and Bob, connected by a noisy depolarizing channel \(\mathcal{E}\).
The task of a quantum interconnect is to prepare a \(2N\)-qubit entangled state between Alice and Bob with fidelity \(\epsilon\) to the \(N\) Bell pair state \(\ket{\Phi_+}^{\otimes N}\), with as few uses of the noisy quantum channel as possible.
Using local operations and classical communication (LOCC), many tasks can then be carried out between the two parties using Bell pairs as a resource state, including high-fidelity logical qubit teleportation and distributed quantum computation across multiple nodes (Sec.~\ref{sec:distributed_computation}).

We reduce the problem to the setting of a noisy quantum channel and perfect LOCC, by using state injection to encode each qubit into a quantum code block of sufficient size, such that logical operation errors are negligible compared to the target Bell pair fidelity $\epsilon$.
Due to this reduction, the theory results of \cref{sec:distillation_setup} and \cref{sec:constant_overhead} are proved in the setting where both parties are capable of perfect LOCC.
We will quantitatively analyze the overhead incurred by the additional encoding in Sec.~\ref{sec:injection}, where we compare the practical overheads required for different quantum interconnect schemes.

\subsection{Definitions and Notation}
We now introduce the notation that we will use for the rest of the paper, and review some related concepts.
Let \([n]\) denote the set of natural numbers from \(1\) to \(n\).
We denote the \(n\)-qubit Pauli group \(\{I,X,Y,Z\}^{\otimes n}\) by \(\mathcal{P}_{n}\).
We denote the simultaneous \(+1\) eigenstate of \(Z\otimes Z\) and \(X\otimes X\) by \(\ket{\Phi_+}\).
For an indexed set of operators \(\mathcal{O} = \{O_i\}_{i\in [n]}\) and a vector \(a \in \F^{n}\), we use the abuse of notation \((-1)^{a}\mathcal{O}\) to denote the indexed set \(\{(-1)^{a[i]} O_i\}_{i\in [n]}\).\footnote{After associating \(\F\) with \(\{0,1\} \subseteq \mathbb{Z}\) in the canonical way.}
For an operator \(U\), we also use the abuse of notation \(U\mathcal{O} = \{UO_i\}_{i\in [n]}\).
In the context of two subsystems \(A\), \(B\), for an operator \(P\), where necessary we will use \(P_{i}^{(A)}\) to denote \(P\) supported on the \(i\)-th qubit of the subsystem \(A\) and likewise for subsystem \(B\).

We will focus on Pauli noise channels for our noisy communication link, where using the properties of the Bell pair, we assume that the noise only acts on one qubit of the Bell pair.
On occasion, we will need to refer to the corresponding distribution over Pauli operators instead of the channel directly.
\begin{definition}[Pauli noise]
  For \((p_x,p_y,p_z) \in \{(x,y,z) \in \mathbb{R}_+^3\mid x+y+z \le 1\}\), we define the \((p_x,p_y,p_z)\)-Pauli noise channel as the channel \(\rho \mapsto (1-p_x-p_y-p_z) \rho + p_x X \rho X + p_y Y \rho Y + p_z Z \rho Z\).
  The \(p\)-depolarizing channel, \(p \in [0,3/4]\), is the \((p/3, p/3, p/3)\)-Pauli noise channel.
  Finally, we define a \(p\)-Pauli noise channel, \(p < 1\), as any \((p_x,p_y,p_z)\)-Pauli noise channel such that \(p_x+p_y+p_z \le p\).
\end{definition}

We use the stabilizer formalism~\cite{gottesman1997stabilizer} to describe the quantum codes that we work with, and use the standard notation \(\dsl n, k, d\dsr\) to denote a quantum code with $n$ data qubits, $k$ encoded logical qubits, and code distance $d$.
A stabilizer code is described by its stabilizer group $S$, which is an Abelian subgroup of the $n$-qubit Pauli group with $-1\notin S$.
The codespace is then given by the set of states $|\psi\rangle$ that satisfy $s|\psi\rangle=|\psi\rangle$ for all $s\in S$.
The encoding map, \(\mathcal{D}^{-1}\), for an \(\dsl n, k, d \dsr\) stabilizer code takes as input \(k\) bare qubits and outputs the \(n\)-qubit code state with the input \(k\) qubits as logical qubits.
The unencoding map, \(\mathcal{D}\), takes as input the \(n\)-qubit code state and outputs the stored \(k\) logical qubits as bare qubits.

The stabilizer formalism also allows us to easily track errors in the circuit.
Let \(S=\{s_i\}_{i\in [r]}\) be \(r\) generators of the stabilizer code, and let \(E\) be a Pauli error.
The corresponding syndrome is then given by the string \(a \in \F^r\) such that \(E s_i E = (-1)^{a_i}S\).
By using the syndrome information, we can either apply a recovery operation to correct the observed error, up to stabilizer equivalences (quantum error correction, QEC), or discard the results if a nontrivial syndrome is observed (quantum error detection, QED).
The former only requires one-way communication, since we only need to send the observed syndromes from one party to the other; the latter requires two-way communication so that both parties know which qubits to discard, but it can handle higher input error rates~\cite{bennett1996mixed}.
Note that one can also apply the quantum code in a hybrid mode, in which we perform error detection on a certain set of syndromes but perform error correction on other syndromes.

More generally, we can use the stabilizer formalism to describe and track the state of a quantum system at any given point.
Consider a stabilizer state on $n$ qubits $|\psi_S\rangle$, which is the unique state that satisfies $s|\psi\rangle=|\psi\rangle$ for all $s\in S$, with $S$ being an Abelian group with $n$ independent generators.
When a unitary $U$ is applied to the state $|\psi\rangle$, we can update the stabilizer generators as $s\rightarrow UsU^\dagger$.
When we measure a product of Pauli operators $\mathcal{M}$ and obtain the outcome $a_{\mathcal{M}}$, there are two possibilities.
If $\mathcal{M}$ commutes with all elements of $S$, then the measurement result will be deterministic, and the stabilizer group remains unchanged.
Otherwise, we update the stabilizer group as follows: let the set of generators that anti-commute with $\mathcal{M}$ be $s_i$, $i\in \{1,...,r\}$, then the new stabilizer group is generated by the elements
$(-1)^{a_{\mathcal{M}}}\mathcal{M}$, $s_1s_i$ for $i\in \{2,...,r\}$ and $s_i$ for $i\in \{r+1,...,n\}$.

\section{Entanglement Distillation}
\label{sec:distillation_setup}
We can use a quantum error-detecting (QED) or a quantum error-correcting (QEC) code in order to perform distillation.
We begin by defining a basic protocol using two-way communication and QED codes that refines Bell pairs between Alice and Bob. 
For an alternate presentation on the use of stabilizer codes for entanglement distillation, see \cite{matsumoto2003conversion,wilde2010convolutional,shi2024stabilizer}.

\begin{definition}[Error-detection distillation \cite{bennett1996mixed}]\label{defn:ed-distillation}
  Let \(S_{\mathrm{ED}} = \{s_i\}_{i\in [r]}\) be a set of \(r\) independent stabilizer generators for an \(\dsl n, k, d \dsr\) stabilizer code with encoding map \(\mathcal{D}^{-1}\).
  Let \(\mathcal{E}\) be a Pauli error channel and \(\mathcal{I}\) be the identity channel.
  Define the state between Alice and Bob \(\rho_{AB} = (\mathcal{I}\otimes \mathcal{E})\left(\ket{\Phi_+}_{AB}\right)\) given by subjecting one half of a Bell pair to the error channel.
  We can define the following distillation protocol:
  At the start of the protocol, Alice and Bob jointly possess the state \(\rho^{\otimes n}_{AB}\).
  \begin{algorithmic}[1]
    \Require{\(n\) noisy Bell pairs \(\rho_{AB}^{\otimes n}\)}
    \Ensure{\(k\) less noisy Bell pairs \(\rho'_{AB}\) or FAIL}
    \State Alice measures \(S_{\mathrm{ED}}\) on their qubits, producing measurement outcomes \(a \in \F^r\).
    \State Alice computes a correction \(R\) that corrects the syndrome \(a\).
    \State Bob measures \(S_{\mathrm{ED}}\) on their qubits, producing measurement outcomes \(b \in \F^r\).\label{alg:distill:measurements}
    \State Alice and Bob exchange \(a\), \(R\), and \(b\).\label{alg:distill:communication}
    \If{\(\sigma=a+b\) is non-zero}
      \State \Output FAIL
    \Else
      \State Apply \(R \otimes R\) to the joint state.\label{alg:distill:recovery}
      \State \Output \(\mathcal{D}\otimes \mathcal{D}\) applied to the joint state.\label{alg:distill:unencode}
    \EndIf
  \end{algorithmic}
\end{definition}

Since this scheme may output FAIL, we refer to it as probabilistic.
Likewise, we refer to a scheme that always produces an output state and never produces FAIL as deterministic.

Line~\ref{alg:distill:recovery} is not necessary, but we include it so that the state before the application of \(\mathcal{D}\) is in the codespace.
We note that there is another, deterministic, variant of this protocol based on \emph{error correction} where Bob always applies a correction according to \(\sigma\) and the \(k\) Bell pairs are always output.

An important feature of the distillation scheme of \cref{defn:ed-distillation} is the ability to detect or correct errors during transmission, despite the transmitted states initially being unencoded.
The key intuition is that due to the physical Bell pair entanglement, projecting one side into an eigenspace of the stabilizer generators will automatically do the same on the other side, thereby guaranteeing the two sides to have the same measurement result in the absence of errors.
Thus, it is the difference between Alice and Bob's measurements that detects errors and serves as the syndrome.
We illustrate this in \cref{fig:qed-distillation-stabilizers} and formalize this in the following statements, with proofs deferred to \cref{app:distillation-proofs}.

\begin{figure*}
\begin{center}
  \begin{quantikz}[every matrix/.append style={name=mycd}, execute at end picture={
      \draw ($(mycd-1-1)!0.35!(mycd-1-2)$) node(L1){}
            ($(mycd-1-2)!0.43!(mycd-1-3)$) node(L2){}
            ($(mycd-1-3)!0.53!(mycd-1-4)$) node(L3){}
            ($(mycd-1-4)!0.53!(mycd-1-5)$) node(L4){}
            ($(mycd-1-5)!0.65!(mycd-1-6)$) node(L5){};
       \draw[thick] (L2) -- ++(-0.8, 0.5) node[anchor=south east] {\(E\{X_i\otimes X_i, Z_i\otimes Z_i\}_{i \in [n]}E\)};
       \draw[thick] (L3) +(0,0.2) -- ++(-0.1, 1.2) node[anchor=south] {\(E \mathcal{L}_{\mathrm{Bell}} E \sqcup (-1)^{a} S_{\mathrm{ED},A} \sqcup (-1)^{a+\sigma}S_{\mathrm{ED},B}\)};
       \draw[thick] (L4) -- ++(0.4, 0.4) node[anchor=south west] {\(E \mathcal{L}_{\mathrm{Bell}} E \sqcup S_{\mathrm{ED},A} \sqcup (-1)^{\sigma}S_{\mathrm{ED},B}\)};
  }]
    \setwiretype{n} & \slice{} & \slice{} & \slice{} \setwiretype{c} \rstick{\(a\)}       &  \setwiretype{n}& \\
    \makeebit{}     & \qwbundle{n}       & \gate{\mathcal{M}_{\mathcal{S}}}\wire[u]{c}   & \gate{\mathcal{R}_{a}} & \gate{\mathcal{D}} & \qwbundle{k} \\
                    & \gate{\mathcal{E}} & \gate{\mathcal{M}_{\mathcal{S}}}\wire[d]{c}   & \gate{\mathcal{R}_{a}} & \gate{\mathcal{D}} & \qwbundle{k} \\
    \setwiretype{n} &                    &                                               & \setwiretype{c}  \rstick{\(a+\sigma\)}       & \setwiretype{n}&
  \end{quantikz}
  \caption{Stabilizer generators of the QED distillation protocol with an error \(E\in \mathcal{P}^{\otimes n}\) applied by the channel \(\mathcal{I}_A \otimes \mathcal{E}_B\), at different locations of the circuit: 1) after initialization 2) after measuring stabilizer generators 3) after recovery to the codespace.
    Alice and Bob initially jointly possess \(n\) Bell pairs.
    They then measure the checks of a quantum code, projecting the state into one of \(2^{n-k}\) subspaces.
    By exchanging measurements, they can recover to the codespace and detect an error \(E\) applied by the channel \(\mathcal{E}\) if \(E\) is detectable by the quantum code.
    The figure is labeled by the generators of the stabilizer group of the joint state (defined in the proof of \cref{prop:correctness} in \cref{app:distillation-proofs}).
    \(S_\mathrm{ED}\) are the stabilizer generators of the entanglement distillation code with \(A\) or \(B\) labeling the subsystem, and \(\mathcal{L}_\mathrm{Bell}\) are the logical operators of the entanglement distillation code in a Bell basis.
    Two-way communication is used when determining whether to abort the current distillation attempt.
    \label{fig:qed-distillation-stabilizers}}
\end{center}
\end{figure*}
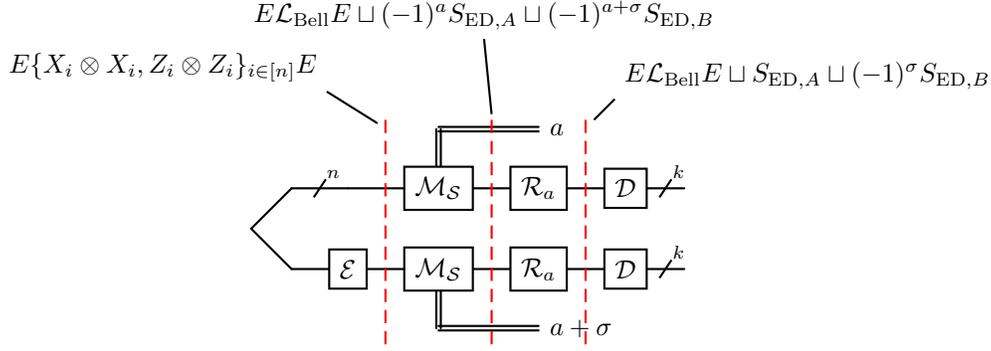

\begin{restatable}[Correctness]{proposition}{propDistillCorrectness}\label{prop:correctness}
  When \(\mathcal{E} = \mathcal{I}\), the distillation scheme of \cref{defn:ed-distillation} outputs \(\ket{\Phi_{+}}^{\otimes k}\).
\end{restatable}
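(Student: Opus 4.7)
The approach is to track the stabilizer group of the joint Alice--Bob system through each step of the protocol, using the fact that a pure stabilizer state is uniquely determined by its stabilizer group. The initial state $\ket{\Phi_+}^{\otimes n}_{AB}$ has stabilizer group $\mathcal{S}_0$ generated by $\{X_i^{(A)} X_i^{(B)}, Z_i^{(A)} Z_i^{(B)}\}_{i\in [n]}$. The key structural fact is that for any $n$-qubit Pauli $P$, the operator $P^{(A)} P^{(B)}$ (up to a fixed global sign determined by the parity of the $Y$-count in $P$) lies in $\mathcal{S}_0$, since it decomposes as a product of the above generators.

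The first step is to show that the protocol does not output FAIL in the noiseless case. Applying the structural fact above to each code stabilizer $s_i \in S_{\mathrm{ED}}$, the Pauli measurements $s_i^{(A)}$ and $s_i^{(B)}$ on the Bell product state yield identical outcomes, up to a convention-dependent sign that is independent of the state. Choosing the sign convention so that $s_i^{(A)} s_i^{(B)} \in \mathcal{S}_0$ with no extra phase, we obtain $a = b$ so that $\sigma = a + b = 0$ deterministically, and the protocol proceeds to lines~\ref{alg:distill:recovery}--\ref{alg:distill:unencode}.

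Next I would identify the stabilizer group after the measurements on line~\ref{alg:distill:measurements} and the recovery on line~\ref{alg:distill:recovery}. Immediately after the measurements, the joint stabilizer group is generated by $\{(-1)^{a_i} s_i^{(A)}\}_{i\in [r]}$ and $\{(-1)^{a_i} s_i^{(B)}\}_{i\in [r]}$ (using $b = a$), together with the $2k$ logical-operator pairs $\{\bar{X}_j^{(A)} \bar{X}_j^{(B)}, \bar{Z}_j^{(A)} \bar{Z}_j^{(B)}\}_{j\in [k]}$. These latter elements are inherited from $\mathcal{S}_0$ via the same structural fact and survive the measurement because $\bar{X}_j$ and $\bar{Z}_j$ lie in the normalizer of $S_{\mathrm{ED}}$, hence commute with every $s_i^{(A)}$ and $s_i^{(B)}$. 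Applying $R \otimes R$, with $R$ a standard ``pure-error'' correction chosen to commute with all logical operators and to clear the measured syndrome, conjugates the first two sets of generators to $\{s_i^{(A)}\}$ and $\{s_i^{(B)}\}$ while leaving the logical-operator pairs unchanged; both halves therefore sit in the codespace of $S_{\mathrm{ED}}$, entangled pairwise via $k$ logical Bell pairs.

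Finally, applying $\mathcal{D}^{(A)} \otimes \mathcal{D}^{(B)}$ trivializes the code stabilizers on the ancillary qubits and maps the logical-operator pairs to $\{X_j^{(A)} X_j^{(B)}, Z_j^{(A)} Z_j^{(B)}\}_{j\in [k]}$ on the output bare qubits, which is precisely the stabilizer group of $\ket{\Phi_+}^{\otimes k}$. The main obstacle is careful sign bookkeeping: specifically, one must verify that $R$ can be chosen from the centralizer of the logical operators so it resets the syndrome without inadvertently applying a logical Pauli, and that the sign conventions for the Bell pair measurement outcomes are consistent with the all-zero syndrome condition used to detect failure.
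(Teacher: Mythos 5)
Your proposal is correct and follows essentially the same stabilizer-tracking route as the paper's proof: start from the generators \(\{X_i\otimes X_i, Z_i\otimes Z_i\}_{i\in[n]}\), deduce \(a=b\) from the fact that \(s\otimes s\) stabilizes the initial state for every \(s\in S_{\mathrm{ED}}\), and follow the logical Bell-pair stabilizers through the recovery and unencoding. The one caveat you flag is actually unnecessary: since the correction is applied as \(R\otimes R\), any anticommutation signs or residual logical action of \(R\) cancel between the two halves, so no centralizer condition on \(R\) is needed; and the \(Y\)-parity sign you mention is handled by a sign convention (and is absent for CSS codes such as the parity codes actually used).
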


\begin{restatable}[Error correction]{proposition}{propDistillErrorCorr}\label{prop:ed-distillation-ec}
  Consider the distillation scheme of \cref{defn:ed-distillation}.
  Let \(E^{(B)}\) be a Pauli noise operator applied by the noise channel to the \(n\) qubits of \(B\). Then
  \(\sigma = a+b\) is the syndrome of \(E^{(B)}\), and the post recovery state has the error \(E^{(B)}\) on the \(B\) subsystem.
\end{restatable}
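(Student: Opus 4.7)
The plan is to propagate the stabilizer group of the joint $AB$ system through each step of \cref{defn:ed-distillation}, mirroring the annotations of \cref{fig:qed-distillation-stabilizers}, and read off both the syndrome and the post-recovery error from this bookkeeping.

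First, I initialize with the $\ket{\Phi_+}^{\otimes n}_{AB}$ generators $\mathcal{B}=\{X_i^{(A)}X_i^{(B)},Z_i^{(A)}Z_i^{(B)}\}_{i\in[n]}$, and conjugate by $E^{(B)}$, which flips the sign of each generator in $\mathcal{B}$ according to whether $E^{(B)}$ anticommutes with the corresponding single-qubit $X_i^{(B)}$ or $Z_i^{(B)}$. The structural observation driving the proof is that for any stabilizer $s_j\in S_{\mathrm{ED}}$, the two-sided operator $s_j^{(A)}\,s_j^{(B)}$ equals a product of $\mathcal{B}$-generators up to a fixed phase $\phi_j$ depending only on the $Y$-content of $s_j$. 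Hence $s_j^{(A)}s_j^{(B)}$ already lies in the stabilizer group before any measurements, and its eigenvalue on the noisy state equals $(-1)^{\phi_j+\sigma_j}$, where $\sigma_j$ is the $j$-th bit of the syndrome of $E^{(B)}$ relative to $S_{\mathrm{ED}}$.

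Next, I analyze the measurement step. Each of $s_j^{(A)}$ and $s_j^{(B)}$ individually anticommutes with some generators of $\mathcal{B}$, so Alice's outcome $a_j$ and Bob's outcome $b_j$ are each a priori random. Because $s_j^{(A)}s_j^{(B)}$ is already in the stabilizer, however, their sum is deterministic: $(-1)^{a_j+b_j}$ must equal the eigenvalue computed above, giving $a_j+b_j=\phi_j+\sigma_j\pmod 2$. Absorbing the state-independent constant $\phi_j$ into the syndrome convention (or noting that $\phi_j=0$ for the CSS codes used in the sequel) yields $\sigma=a+b$ as the syndrome of $E^{(B)}$. The post-measurement stabilizer group can then be presented, following \cref{fig:qed-distillation-stabilizers}, as $E^{(B)}\mathcal{L}_{\mathrm{Bell}}E^{(B)}\sqcup(-1)^a S_{\mathrm{ED},A}\sqcup(-1)^{a+\sigma}S_{\mathrm{ED},B}$, where $\mathcal{L}_{\mathrm{Bell}}$ denotes the logical Bell operators of the code.

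Finally, applying $R\otimes R$ with $R$ a Pauli correcting the $S_{\mathrm{ED}}$-syndrome $a$ cancels the $(-1)^a$ sign on both Alice's and Bob's $S_{\mathrm{ED}}$ generators, yielding a joint stabilizer group that, up to the logical Bell generators still carrying the signs inherited from $E^{(B)}\mathcal{L}_{\mathrm{Bell}}E^{(B)}$, is the encoded state of $k$ Bell pairs with $E^{(B)}$ acting on the $B$ side. Unencoding by $\mathcal{D}\otimes\mathcal{D}$ therefore outputs $k$ bare Bell pairs whose residual error on the $B$ subsystem is exactly $E^{(B)}$ modulo the code stabilizers. The main obstacle is the careful tracking of Pauli phases $\phi_j$ arising from $Y$-content in the stabilizer generators; I would handle this by fixing canonical Hermitian representatives for each $s_j$ once at the outset and absorbing the state-independent constants into either the syndrome convention or the correction map $R$, which is in any case defined only up to code stabilizers.
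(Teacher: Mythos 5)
Your proposal is correct and follows essentially the same route as the paper: propagate the joint stabilizer group conjugated by \(E^{(B)}\) through the protocol, use the fact that \(s\otimes s\) (for \(s\in S_{\mathrm{ED}}\)) already stabilizes the noisy Bell state up to the sign given by the syndrome of \(E^{(B)}\) to force \(a+b=\sigma\), and then track the generators through the recovery \(R\otimes R\), exactly as in the paper's proof built on \cref{prop:correctness} and \cref{fig:qed-distillation-stabilizers}. Your explicit handling of the \(Y\)-content phases \(\phi_j\) is a careful refinement of a point the paper states only in the \(+1\)-eigenvalue (real/CSS) case, and does not change the argument.
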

Note that here, the post recovery state refers to the state after applying $R$, but not any further error correction based on the distillation code employed.

Let us now introduce the concrete instance of QED code we will use for our proof of constant overhead.
\begin{fact}[Quantum parity code~\cite{steane1996simple,gottesman1998theory,self2022protecting}] \label{fact:basic-ed-code}
For even integer \(n\), the operators \(\{\prod_{i \in [n]}X_i, \prod_{i \in [n]}Z_i\}\) define a CSS code with parameters \(\dsl n, n-2, 2\dsr\).
We refer to this code family as the quantum parity code, in analogy to the classical parity code with one parity check.\footnote{It also is known as the ``iceberg'' code due to the appearance of its Tanner graph when drawn with a blue color scheme.}
\end{fact}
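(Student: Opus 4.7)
The statement is a routine verification of code parameters, so the plan is to check each of the three pieces $n$, $k$, $d$ in turn.

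First I would verify that the two proposed generators commute and are independent, so that they generate a genuine stabilizer group of rank $2$. Commutation of $\prod_i X_i$ with $\prod_i Z_i$ follows because swapping each on-site $X$ past the corresponding $Z$ contributes a factor of $-1$, and there are $n$ such swaps, giving total sign $(-1)^n = +1$ precisely because $n$ is even. Independence is immediate since $\prod_i X_i$ is all-$X$ and $\prod_i Z_i$ is all-$Z$, so neither lies in the group generated by the other together with the identity, and $-I$ is not in the group. This gives an abelian stabilizer group with $2$ independent generators on $n$ qubits, hence $k = n-2$ logical qubits.

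Next, for the distance, I would exploit the CSS structure and bound $d_X$ and $d_Z$ separately. A $Z$-type operator $Z^a$, with $a\in \F^n$, commutes with $\prod_i X_i$ iff $|a|$ is even, and it is a stabilizer iff $a \in \{0, \mathbf{1}\}$. Hence the minimum weight of a nontrivial logical $Z$ representative is the smallest positive even integer different from $n$ (when $n > 2$), namely $2$, realized e.g.\ by $Z_1 Z_2$; and one must simply check that $Z_1 Z_2$ is not a stabilizer, which is clear because the only nontrivial $Z$-type stabilizer has weight $n \ge 4$. The identical argument applied to $X$-type operators yields $d_X = 2$, so $d = \min(d_X, d_Z) = 2$.

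There is no genuine obstacle: the only place a reader might pause is the parity calculation $(-1)^n = 1$ requiring $n$ even, and the need to rule out that the distance-$2$ representative is itself a stabilizer. Both are immediate. Assembling the three checks yields the claimed $\dsl n, n-2, 2 \dsr$ parameters and confirms the CSS structure.
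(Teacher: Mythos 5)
Your verification is correct, and there is nothing in the paper to compare it against: the paper states this as a citation-backed Fact (Steane, Gottesman, Self et al.) and gives no proof, so your direct check of commutation, rank, and distance is exactly the standard argument one would supply. Two small remarks: the parity observation $(-1)^n=+1$ is indeed the only place evenness of $n$ enters, and your distance argument (that $Z_1Z_2$ is a nontrivial logical because the only nontrivial $Z$-type stabilizer is $Z^{\otimes n}$ of weight $n$) tacitly needs $n\ge 4$; for $n=2$ the code degenerates to $\dsl 2,0,2\dsr$, which is consistent with the statement but vacuous, and the paper only ever uses the code with $n\ge 4$, so this is harmless. It would also be worth stating explicitly the standard CSS fact you invoke, namely that the minimum weight of an element of $N(S)\setminus S$ is attained on a pure $X$-type or pure $Z$-type operator, so that $d=\min(d_X,d_Z)$; with that line added the proof is complete.
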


We next establish bounds on the output error rate and the rejection probability.
While we will need to prove generalized versions of these results in the next section, they serve to illustrate the heart of the construction in a simple setting.

\begin{restatable}[Output error rate]{proposition}{propOutputErrorRate}\label{prop:ed-distillation-error-rate}
  Consider the execution of the protocol of \cref{defn:ed-distillation}, where \(\mathcal{E}\) is \(p\)-Pauli noise and the code is the \(\dsl n, n-2, 2\dsr\) code from \cref{fact:basic-ed-code} for \(p < 1/(2n)\) and \(n \ge 4\).
  Conditioned on the event that the protocol did not fail (\(\lnot FAIL\)), the probability that there is no error on the output qubits, \(1-p_{out}\) is at least
  \begin{align}
    1-p_{out} \ge 1-\left(n \frac{p}{1-p}\right)^2
  \end{align}
\end{restatable}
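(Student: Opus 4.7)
The plan is to reduce the statement to a union bound, once we characterize which Pauli realizations $E$ on Bob's $n$ qubits produce a logical error on the output. Let $S = \langle X^{\otimes n}, Z^{\otimes n}\rangle$ be the stabilizer of the parity code and let $C$ denote its centralizer in the Pauli group. By \cref{prop:correctness} and \cref{prop:ed-distillation-ec}, the protocol does not FAIL precisely when $E \in C$, and the unencoded output is error-free precisely when $E \in S$. Hence
\begin{align*}
p_{out} \;=\; \Pr[E \in C \setminus S]\,/\,\Pr[E \in C].
\end{align*}

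The key structural observation is that every $E \in C \setminus S$ has Pauli weight at least $2$: the only weight-zero Pauli is $I \in S$, and any weight-one $P_i$ with $P \in \{X,Y,Z\}$ anti-commutes with at least one of $X^{\otimes n}$, $Z^{\otimes n}$ (a three-case check), and so lies outside $C$. Since the Pauli noise is independent across the $n$ qubits with per-qubit error probability $p_x + p_y + p_z \le p$, a union bound over pairs of qubits then gives
\begin{align*}
\Pr[E \in C \setminus S] \;\le\; \Pr[w(E) \ge 2] \;\le\; \binom{n}{2} p^2.
\end{align*}

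For the denominator I would use the trivial lower bound $\Pr[E \in C] \ge \Pr[E = I] \ge (1-p)^n$, together with Bernoulli's inequality and the hypothesis $p < 1/(2n)$, to conclude $(1-p)^n \ge 1 - np > 1/2$. Plugging in yields $p_{out} \le n(n-1)p^2 \le n^2 p^2 \le \left(np/(1-p)\right)^2$, which is the claimed bound. I do not anticipate a genuine obstacle here: the only substantive step is recognising that $C \setminus S$ contains no element of weight at most one, which is forced by $S$ being generated by just the two all-$X$ and all-$Z$ operators, after which the proof reduces to a single union bound and Bernoulli's inequality.
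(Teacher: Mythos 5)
Your proof is correct, and its skeleton matches the paper's: conditioned on \(\lnot FAIL\), an output error requires an input error of weight at least \(2\) (this is the distance-\(2\) property of the code in \cref{fact:basic-ed-code}, which you re-derive directly from the centralizer structure \(C\setminus S\)), and the conditional error probability is bounded by a ratio of the form \(\Pr(|E|\ge 2)/\Pr(E=I)\). Where you genuinely diverge is in how that ratio is estimated. The paper evaluates the binomial tail exactly, \(\Pr(|E|\ge 2)\le 1-(1-p)^n-np(1-p)^{n-1}\), and then invokes two auxiliary inequalities, \((1-p)^{-n}\le \exp\left(n\frac{p}{1-p}\right)\) (\cref{lemma:1mx-bound}) and \(e^x-1-x\le (e-2)x^2\) for \(x\in[0,1]\) (\cref{lemma:expm1mx-quadratic}), the latter needing \(n\frac{p}{1-p}\le 1\). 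You instead use a union bound \(\Pr(|E|\ge 2)\le \binom{n}{2}p^2\) for the numerator and Bernoulli's inequality \((1-p)^n\ge 1-np>1/2\) for the denominator, giving \(p_{out}\le n(n-1)p^2\le \left(n\frac{p}{1-p}\right)^2\). Your route is more elementary: it dispenses with both auxiliary lemmas and does not even use the hypothesis \(n\ge 4\), at the cost of a cruder (but sufficient) tail estimate; both arguments relax to the identical stated bound. Your exact starting identity \(p_{out}=\Pr[E\in C\setminus S]/\Pr[E\in C]\) is also slightly sharper in principle than the paper's chain of inequalities, though neither proof ends up exploiting that extra precision.
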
 
The main feature of the output error rate is the quadratic scaling in the input noise strength, as the distance-2 QED code can detect any single Pauli error. The full expression is proven in \cref{app:distillation-proofs} by bounding the probability of having at least 2 errors when the protocol does not output fail.

We would like to use the entanglement distillation protocols in settings where a state must always be output.
Therefore, we will ``wrap'' the protocol such that we retry any time FAIL is output.
\begin{definition}[Repeat-until-success]\label{defn:repeat-until-success}
  Given a probabilistic protocol \(P\) and a protocol \(Q\) for producing the inputs to \(P\), we can turn \(P\) into a deterministic protocol requiring no inputs in the following way:
  \begin{algorithmic}[1]
    \Repeat
    \State \(a \gets \) output of \(Q\)
    \State \(b \gets \) output of \(P(a)\)
    \Until{\(b\) is not FAIL}
    \State \Output \(b\)
  \end{algorithmic}
  We refer to the number of times \(Q\) is executed as the attempt count.
\end{definition}

\begin{restatable}[Attempt count]{lemma}{lemmaAttemptCount}\label{prop:retry-count}
  Let \(p_{\mathrm{fail}}\) be the failure rate of the probabilistic protocol in \cref{defn:repeat-until-success}.
  Then, the average attempt count is \(\frac{1}{1-p_{\mathrm{fail}}}\) and the attempt count exceeds \(\frac{\log 1/\epsilon}{\log 1/p_{\mathrm{fail}}}\) with probability at most \(\epsilon\).
\end{restatable}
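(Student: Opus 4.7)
The plan is to model the attempt count as a geometric random variable and then read off both claims from standard identities.

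First I would observe that each invocation of $P$ in \cref{defn:repeat-until-success} is supplied with a \emph{fresh} input drawn from an independent execution of $Q$, so the outcomes of successive attempts are independent Bernoulli trials, each succeeding (i.e., not outputting FAIL) with probability $1-p_{\mathrm{fail}}$. Let $N$ denote the attempt count, i.e., the index of the first successful trial. Then $N$ is a geometric random variable with success probability $1-p_{\mathrm{fail}}$, so $\Pr[N = k] = p_{\mathrm{fail}}^{k-1}(1-p_{\mathrm{fail}})$ for $k \ge 1$.

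For the expectation claim, I would just invoke the standard identity $\mathbb{E}[N] = 1/(1-p_{\mathrm{fail}})$ for a geometric random variable (which follows from differentiating the geometric series $\sum_{k\ge 0} x^k = 1/(1-x)$ and evaluating at $x = p_{\mathrm{fail}}$). For the tail bound, the key observation is
\begin{align*}
\Pr[N > k] \;=\; \Pr[\text{first } k \text{ attempts all FAIL}] \;=\; p_{\mathrm{fail}}^{k}.
\end{align*}
Setting $k = \log(1/\epsilon)/\log(1/p_{\mathrm{fail}})$ gives $p_{\mathrm{fail}}^{k} = \exp\bigl(-\log(1/\epsilon)\bigr) = \epsilon$, which is precisely the stated bound.

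There is no real obstacle here; the only conceptual point worth flagging is the independence of successive attempts, which relies on $Q$ producing fresh (independent) inputs each round, as the definition stipulates. The rest is the standard analysis of a geometric random variable.
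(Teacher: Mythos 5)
Your proof is correct and takes essentially the same route as the paper: the attempt count is a geometric random variable with success probability \(1-p_{\mathrm{fail}}\), the expectation is the standard geometric-series identity, and the tail bound comes from \(\Pr[N>k]=p_{\mathrm{fail}}^{k}\) evaluated at \(k=\log(1/\epsilon)/\log(1/p_{\mathrm{fail}})\). The only looseness—that this threshold need not be an integer, so the exact tail is \(p_{\mathrm{fail}}^{\lfloor k\rfloor}\) rather than \(p_{\mathrm{fail}}^{k}\)—is a rounding detail the paper's own proof glosses over in the same way.
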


\begin{proof}
  Let \(T\) be the attempt count.
  Before attempt \(t\in \{1,2,\dots\}\), the probability that no prior attempt has been successful is \(p_{\mathrm{fail}}^{t-1}\), so \(\Pr(T \ge t) = p_{\mathrm{fail}}^{t-1}\).
  The probability mass function (PMF) of \(T\) is then given by the discrete difference \(\Pr(T = t) = p_{\mathrm{fail}}^{t-1} - p_{\mathrm{fail}}^{t}\), which we can use to evaluate \(\expect{T}\).
  \begin{align}
    \expect{T} = \sum_{t=1}^{\infty} t \left(p_{\mathrm{fail}}^{t-1} - p_{\mathrm{fail}}^{t}\right) = \frac{1}{1-p_{\mathrm{fail}}}
  \end{align}

  To achieve the tail bound, pick \(t = \ceil{\frac{\log 1/\epsilon}{\log 1/p_{\mathrm{fail}}}}\), so that
  \begin{align}
    \Pr(T \ge \frac{\log 1/\epsilon}{\log 1/p_{\mathrm{fail}}}) &\le p_{\mathrm{fail}}^{\ceil{\frac{\log 1/\epsilon}{\log 1/p_{\mathrm{fail}}}}}\\
&\le \exp\left(\frac{\log \epsilon}{\log p_{fail}} \log p_{fail}\right) \\
                 &= \epsilon
  \end{align}
  where we used \(p_{\mathrm{fail}}^{\ceil{\frac{\log 1/\epsilon}{\log 1/p_{\mathrm{fail}}}}} \leq p_{\mathrm{fail}}^{\frac{\log 1/\epsilon}{\log 1/p_{\mathrm{fail}}}}\).
\end{proof}

We can combine these results to prove an upper bound on the overhead of a single round of error distillation.
\begin{restatable}[Error detection overhead]{proposition}{propErrorDetectionOverhead}\label{prop:ed-overhead}
  Consider the execution of repeat-until-success error detection distillation, where \(\mathcal{E}\) is \(p\)-Pauli noise and the code is the \(\dsl n, n-2, 2\dsr\) code from \cref{fact:basic-ed-code}.
  For \(n \ge 4\) and \(p \le 1/n\), the average number of consumed Bell pairs \(N\) to output \(k\) Bell pairs is
  \begin{align}
    \expect{\frac{N}{k}} \le \left[3 n p + 1\right] \frac{n}{n-2}
  \end{align}
\end{restatable}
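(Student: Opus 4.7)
The plan is to decompose the expected overhead into the deterministic cost per attempt times the expected number of attempts in the repeat-until-success wrapper. Each execution of the protocol in Definition~\ref{defn:ed-distillation} consumes exactly $n$ noisy Bell pairs regardless of outcome, while on success it outputs $k = n-2$ pairs. By Lemma~\ref{prop:retry-count}, the expected number of attempts until the first success is $1/(1 - p_{\mathrm{fail}})$, where $p_{\mathrm{fail}}$ is the single-attempt failure probability. Therefore
\begin{align}
\expect{\frac{N}{k}} = \frac{n}{k \,(1 - p_{\mathrm{fail}})} = \frac{n}{(n-2)(1 - p_{\mathrm{fail}})}.
\end{align}

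Next I would upper bound $p_{\mathrm{fail}}$, equivalently lower bound the success probability. By Proposition~\ref{prop:ed-distillation-ec}, the protocol aborts precisely when the noise operator $E^{(B)}$ has a nontrivial syndrome under the quantum parity code. A sufficient condition for a trivial syndrome is that no Pauli error occurred on any of the $n$ transmitted qubits, and for $p$-Pauli noise this event has probability at least $(1-p)^n$. Hence $1 - p_{\mathrm{fail}} \geq (1-p)^n$, so
\begin{align}
\expect{\frac{N}{k}} \leq \frac{1}{(1-p)^n} \cdot \frac{n}{n-2}.
\end{align}

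All that remains is to show $(1-p)^{-n} \leq 1 + 3np$ on the range $p \in [0, 1/n]$, $n \geq 4$, which I would prove by analyzing $g(p) := (1-p)^n (1 + 3np)$ and demonstrating $g(p) \geq 1$ throughout. A direct differentiation yields $g'(p) = n(1-p)^{n-1}\bigl[2 - 3(n+1)p\bigr]$, so $g$ is unimodal with a unique interior maximum at $p^\star = 2/(3(n+1))$; its minimum over $[0,1/n]$ therefore occurs at one of the endpoints. One checks $g(0) = 1$ and $g(1/n) = 4(1 - 1/n)^n$, and using that $(1 - 1/n)^n$ is increasing in $n$ for $n \geq 2$, we get $g(1/n) \geq 4 \cdot (3/4)^4 = 81/64 > 1$ for all $n \geq 4$.

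The hard part will be the closing calculus step, since the hypothesis $p \leq 1/n$ is tight enough that the crude Bernoulli bound $(1-p)^n \geq 1 - np$ blows up at the boundary; I expect the unimodality argument above to be the cleanest way around this, and it also explains why the proposition needs $n \geq 4$ rather than merely $n \geq 2$. The rest of the argument is an immediate assembly of Lemma~\ref{prop:retry-count} with the trivial lower bound on the syndrome-zero event.
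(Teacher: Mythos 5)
Your proposal is correct and follows essentially the same route as the paper: each attempt consumes $n$ pairs and outputs $k=n-2$ on success, the retry Lemma~\ref{prop:retry-count} together with $1-p_{\mathrm{fail}} \ge (1-p)^n$ gives $\expect{N/k} \le (1-p)^{-n}\,\frac{n}{n-2}$, and the remaining task is the elementary inequality $(1-p)^{-n} \le 1+3np$ on $[0,1/n]$. The only difference is how that last inequality is established — the paper uses convexity of $(1-x)^{-n}$ and a secant-line (chord) bound with $(1-1/n)^{-n} \le 256/81$ for $n\ge 4$ (its Lemma~\ref{lemma:1mx-linear-bound}), whereas you analyze $g(p)=(1-p)^n(1+3np)$ via its derivative and check endpoints — and your unimodality argument is sound (the derivative computation and the bound $g(1/n)=4(1-1/n)^n \ge 81/64$ check out), so this is merely a cosmetic variation.
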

Here, the former factor comes from a bound on the expected retry overhead in \cref{prop:retry-count}, while the latter corresponds to the encoding rate of the code. The detailed proof is again deferred to \cref{app:distillation-proofs}.

\section{Constant Overhead}
\label{sec:constant-overhead}
In this section, we generalize the preceding analysis, and show how a concatenated version of the above method allows us to achieve constant overhead entanglement distillation.

To start, we review the two procedures consider in Ref.~\cite{bennett1996mixed}, similar to \cref{defn:ed-distillation}.
The first (BDSW-2EPP) uses two-way communication and the \([2,1,2]\) classical repetition code alternately in the \(X\) and \(Z\) basis, resulting in a rate going to zero.
To achieve finite rate, this can be combined with a second scheme (BDSW-1EPP), which uses one-way communication and a random quantum code as an error correcting code.
However, this has the downside that a random code needs to be decoded, and efficient algorithms for this task are not known.

One approach is to continue using 1EPP as the first stage, and then consider efficiently-decodable code families such as quantum Reed-Muller codes~\cite{steane1996error}.
Note that fault-tolerance is handled by the local logical encoding, so the choice of quantum codes is extremely broad.
One might expect that the rate is suboptimal, because this approach does not utilize the fact that two-way communication is available.

To make progress on this question and to avoid the need to decode, we consider the two-way error-detecting protocol exclusively, but expanding to other codes besides the \([2,1,2]\) classical code.
In particular, there exists constant rate code families based on concatenation~\cite{yamasaki2022time}, which suggests the existence of a concatenation sequence for an error-detecting distillation scheme that results in constant overhead.

We begin by defining a concatenated distillation protocol, which allows us to use a sequence of error detecting codes for distillation.
This becomes somewhat non-trivial when considering codes encoding more than one logical qubit, due to correlations of errors on different logical qubits of the output.
This correlation can be avoided by applying further rounds of distillation to qubits originating from independent lower-level distillation attempts, so that the qubits in a given distillation instance have independent errors.

\label{sec:constant_overhead}
\begin{definition}[Distillation concatenation]\label{defn:distillation-concat}
  Given a distillation protocol \(P_{\mathrm{inner}}\) producing \(k_{\mathrm{inner}}\) output qubits, and a second distillation protocol \(P_{\mathrm{outer}}\) requiring \(n_{\mathrm{outer}}\) input qubits and producing \(k_{\mathrm{outer}}\) output qubits, we can concatenate them to arrive at a protocol producing \(K = k_{\mathrm{inner}} k_{\mathrm{outer}}\) qubits.
  \begin{algorithmic}[1]
    \For{\(i \in [n_{outer}]\)}
      \Repeat
        \State \(t \gets \) output of \(P_{\mathrm{inner}}\) \label{defn:distillation-concat:inner-distillation} \Comment{Inner distillation step}
      \Until{\(t\) is not FAIL}
      \State \(a[i,\cdot] \gets t\) \label{defn:distillation-concat:input-reg}
    \EndFor
    \For{\(j \in [k_{inner}]\)}
      \State \(s \gets\) output of \(P_{\mathrm{outer}}(a[\cdot, j])\) \label{defn:distillation-concat:outer-distillation} \Comment{Outer distillation step}
      \If{\(s\) is FAIL}
        \State \Output FAIL
      \Else
        \State \(b[\cdot,j] \gets s\)
      \EndIf
    \EndFor
    \State \Output b    
  \end{algorithmic}
\end{definition}

\begin{definition}[Concatenated error-detecting distillation]\label{defn:concat-ed-distillation}
  Given a sequence \(\{\mathcal{C}_i\}_{i\in\mathbb{N}}\) of QED codes, we define the \(\ell\)-concatenated QED distillation protocol recursively as follows:
  \begin{itemize}
    \item The \(0\)-concatenated QED distillation simply outputs Bell pairs obtained deterministically\footnote{The underlying process may be probabilistic, but for simplicity we ignore this fact.} from some source without any distillation.
    \item For \(\ell > 1\), the \(\ell\)-concatenated QED distillation protocol is the distillation concatenation protocol using \((\ell-1)\)-concatenated QED distillation as the inner protocol and QED distillation with QED code \(\mathcal{C}_{\ell}\) as the outer protocol.
  \end{itemize}
\end{definition}
This process is illustrated in Fig. \ref{fig:distillation-concatenation} with Alice and Bob jointly executing steps of the protocol.

The choice of QEC code sequence is somewhat arbitrary, but there exists sequences such that the overhead approaches a constant.
One such sequence we call the quadratic parity code sequence.
\begin{definition}[Quadratic parity code sequence]\label{defn:quadratic-code-sequence}
  The quadratic parity code sequence is the code sequence \(\{\mathcal{C}_i\}_{i\in\mathbb{N}}\), where \(\mathcal{C}_i\) is the length \((2i)^2\) quantum parity code from \cref{fact:basic-ed-code}. 
\end{definition}

For later comparison, we also define the sequence of codes used for the BDSW-2EPP scheme of \cite{bennett1996mixed}.
For \(p\)-depolarizing noise with \(p < 1/2\), this sequence results in a decreasing error rate which can be straightforwardly shown by an exact calculation.
\begin{definition}[BDSW-2EPP code sequence \cite{bennett1996mixed}]\label{defn:2epp-code-sequence}
  The BDSW-2EPP code sequence is the sequence \(\{\mathcal{C}_i\}_{i\in\mathbb{N}}\) of \(\dsl 2, 1, 1\dsr\) quantum codes (equiv. \([2,1,2]\) classical repetition codes) where \(\mathcal{C}_i\) is defined by the check \(X_{1}X_{2}\) if \(i\) is odd and \(Z_{1}Z_{2}\) if \(i\) is even.
\end{definition}

We are now ready to state and prove our main result: A constant communication-overhead distillation scheme based on \emph{error detection}.
We prove the asymptotic bound for a given constant input error, but the code sequence can be modified to achieve an identical result for any \(p \in [0,1/2)\) by adding a constant number of rounds of BDSW-2EPP distillation.
\begin{theorem}[Constant communication-overhead distillation]
\label{thm:constant-overhead}
  For physical Bell pairs subjected to a noise channel \(\mathcal{E}\) that is \(p\)-Pauli noise with \(p\in [0,1/2000]\), the \(\ell\)-concatenated error detecting distillation protocol with the quadratic parity code sequence (\cref{defn:quadratic-code-sequence}) satisfies the following properties
  \begin{enumerate}
  \item Conditioned on an output state being produced, the output contains no error with probability \(1-p_\ell\), where
    \begin{align}
      p_\ell \le \frac{1}{34} \left(544 p\right)^{2^\ell}.
    \end{align}
    In other words, to achieve an output block error probability \(\epsilon\), it suffices to pick \(\ell = \ceil{\log_{544 p}\log_2 \left(34 \epsilon\right)} = O(\log \log 1/\epsilon)\).
  \item The protocol consumes \(N\) input Bell pairs to produce \(K\) output Bell pairs (repeat-until-success), where
  \begin{align}
    \expect{\frac{K}{N}} \ge \frac{1}{3}~.
  \end{align}
\item For a desired block output error rate \(\epsilon\), the maximum memory required is \(O\left(\left(\log \log 1/\epsilon\right)^{\alpha \log \log 1/\epsilon}\right)\) for some \(\alpha >0\), which is quasi-polylogarithmic in \(1/\epsilon\).
  \end{enumerate}
\end{theorem}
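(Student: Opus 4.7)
I would prove the three claims in order, treating each as a consequence of composing the single-layer bounds from \cref{prop:ed-distillation-error-rate,prop:ed-overhead} through the concatenation recipe of \cref{defn:distillation-concat,defn:concat-ed-distillation}. The core idea is that with $n_\ell = (2\ell)^2$ growing only polynomially while the error rates decay doubly exponentially in $\ell$, both the compounded rate and the compounded error suppression converge cleanly.

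For the output error rate, my plan is induction on $\ell$, with the base case $p_0 = p \le \tfrac{544p}{34}$ immediate. For the inductive step, two preconditions must be checked before applying \cref{prop:ed-distillation-error-rate} at level $\ell$: (a) the $n_\ell$ inputs to each column-wise outer distillation are independent, and (b) the per-qubit input error rate satisfies $p_{\ell-1} < 1/(2n_\ell)$. Claim (a) follows directly from the construction of \cref{defn:distillation-concat}, which feeds each column of the outer distillation with qubits drawn from $n_\ell$ distinct inner trials. Claim (b) follows from the inductive bound, since $544 p \le 0.272$ makes $p_{\ell-1}$ vastly smaller than $1/(2(2\ell)^2)$ for every $\ell \ge 1$. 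Applying the proposition yields a per-column recurrence $p_\ell^{\mathrm{col}} \le (n_\ell p_{\ell-1})^2/(1-p_{\ell-1})^2$, which unrolls to
\begin{align}
  p_\ell^{\mathrm{col}} \le p^{2^\ell}\, \prod_{i=1}^\ell n_i^{2^{\ell-i+1}}\, \prod_{i=1}^\ell (1-p_{i-1})^{-2^{\ell-i+1}}.
\end{align}
The second product is a convergent correction close to $1$ because the $p_{i-1}$'s decay doubly exponentially. For the first, taking the $2^\ell$-th root converts $\prod_i n_i^{2^{\ell-i+1}}$ into $\prod_i n_i^{2^{1-i}}$, and the series $\sum_{i\ge 1} 2^{1-i}\log n_i = \sum 2^{1-i}\log(2i)^2$ converges to a constant $C_0$ since $n_i$ grows polynomially while $2^{-i}$ decays geometrically. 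Thus $p_\ell^{\mathrm{col}} \le (e^{C_0} p)^{2^\ell}$ up to corrections, and the slack in $p \le 1/2000$ lets the final constants be packaged as $\tfrac{1}{34}(544 p)^{2^\ell}$. A union bound across the $K_{\ell-1}$ independent output columns at the end is harmless because $K_{\ell-1}$ grows only singly exponentially in $\ell$ while the column bound is doubly exponentially small.

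For the rate claim, I would compose \cref{prop:ed-overhead} across layers to obtain
\begin{align}
  \expect{\frac{K}{N}} \ge \prod_{\ell=1}^{\infty} \frac{1 - 1/(2\ell^2)}{1 + 3 n_\ell p_{\ell-1}}.
\end{align}
The numerator product is bounded below by a positive absolute constant because $\sum_\ell 1/(2\ell^2) = \pi^2/12$ converges. The denominator product is close to $1$ because $\sum_\ell n_\ell p_{\ell-1}$ converges rapidly under the doubly-exponential decay of $p_{\ell-1}$ established in Part~1. Plugging in $p \le 1/2000$ and evaluating both contributions explicitly produces $\expect{K/N} \ge 1/3$.

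For the memory claim, the peak footprint of the concatenated procedure occurs just before the outermost ($L$-th) outer distillation, when $n_L \cdot K_{L-1}$ logical qubits are simultaneously resident, where $K_{L-1} = \prod_{i < L}(n_i - 2)$. Substituting $n_i = (2i)^2$ yields $(2L)^2 \cdot 4^{L-1}((L-1)!)^2$, which by Stirling is $L^{O(L)}$. Setting $L = O(\log\log 1/\epsilon)$, as dictated by Part~1, gives the stated quasi-polylogarithmic bound. The principal obstacle is the first part: one must carefully separate marginal per-qubit rates from per-column block rates when invoking \cref{prop:ed-distillation-error-rate}, verify the smallness precondition at every level, and control the compounding prefactor $\prod_i n_i^{2^{\ell-i+1}}$ so that it collapses to a single-exponential base constant rather than blowing up; the packaging into the specific numerical constants $544$ and $1/34$ is bookkeeping that the $p \le 1/2000$ hypothesis is calibrated to support.
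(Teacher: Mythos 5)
Your overall skeleton --- induction maintaining the precondition $p_{\ell-1}\le 1/(2n_\ell)$, the recurrence $p_\ell \lesssim (n_\ell p_{\ell-1})^2$ unrolled via the convergent exponent-weighted product of the $n_i$, multiplicative composition of per-level overheads, and memory dominated by $n_L K_{L-1}$ with $L=O(\log\log 1/\epsilon)$ --- is the same as the paper's, and Property 3 in your plan essentially coincides with it (the paper keeps the full sum $\sum_i n_i K_{i-1}$, which is dominated by its last term, so the conclusion $e^{O(\ell\log\ell)}$ is unchanged). However, two steps of your plan, as written, do not deliver the theorem with its stated constants. For Property 1, the paper never introduces per-qubit marginals or a union bound over columns: it bounds the \emph{block} error directly, observing that, conditioned on an output being produced, the level-$\ell$ output can be erroneous only if at least two of the $n_\ell$ input rows (independent level-$(\ell-1)$ blocks) are erroneous, and then reuses the Bayes-style bound of \cref{prop:ed-distillation-error-rate} with ``weight'' replaced by ``number of bad rows'', giving $p_\ell \le \bigl(n_\ell p_{\ell-1}/(1-p_{\ell-1})\bigr)^2$ for the block error rate itself. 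Your per-column recurrence plus a terminal union bound has two defects: (i) the conditioning event is that \emph{all} columns at all levels pass, and columns are correlated through shared rows, so \cref{prop:ed-distillation-error-rate} applied to a single column conditioned only on that column passing is not the quantity you need --- the clean repair is precisely the paper's block-level numerator/denominator bound, at which point the column bookkeeping buys nothing; (ii) the extra factor $K_{\ell-1}=e^{\Theta(\ell\log\ell)}$ is asymptotically absorbable but will not reproduce $\tfrac{1}{34}(544p)^{2^\ell}$ at threshold $p\le 1/2000$ without recalibration, and if intermediate levels carry only column rates, the denominator $\Pr(\text{no bad row})=(1-p_{\ell-1})^{n_\ell}$ requires row/block rates, so the union bound cannot in fact be postponed entirely ``to the end''.

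For Property 2, your bound falls numerically short of the claimed constant. With the linearized retry factors $1+3n_\ell p_{\ell-1}$ from \cref{prop:ed-overhead}, one gets $\prod_\ell (1+3n_\ell p_{\ell-1})\approx 1.13$ at $p=1/2000$ (the $\ell=2$ term alone is $\approx 1.10$ since $p_1\le\tfrac{1}{34}(0.272)^2\approx 2.2\times 10^{-3}$), while $\prod_\ell n_\ell/(n_\ell-2)\approx 2.79$; hence your product gives $\expect{K/N}\gtrsim 1/3.16\approx 0.317$, which does not establish $\expect{K/N}\ge 1/3$. The paper keeps the sharper factors $(1-p_{\ell-1})^{-n_\ell}$, obtaining $\expect{N/K}\approx 2.91\le 3$, and only then passes to the stated direction via Jensen's inequality (since $K$ is deterministic, $\expect{K/N}\ge K/\expect{N}$) --- a step your plan leaves implicit, along with the Wald-type accounting (the paper's compound negative-binomial argument) that justifies multiplying expected overheads across levels when the number of inner retries at each level is random. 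Both gaps are repairable, but as written the plan proves weaker constants than the theorem asserts.
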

\begin{proof}[Proof of property 1]
  We proceed by induction in \(\ell \ge 1\) with the additional hypothesis that for \(i \in [\ell]\), \(p_{i-1} \le 1/(2n_i) = 1/8i^2\) defining \(p_0 \equiv p\).
  Let \(a\) be the output of level-\((\ell-1)\) as in line \ref{defn:distillation-concat:input-reg} of \cref{defn:distillation-concat}.
  At level-\(\ell\), the error detection is applied to each column of \(a\), so the output may contain an error only if at least two rows of the input \(a\) contain an error.
  Since each row contains no error with probability \(1-p_{\ell-1}\), an identical argument (requiring the additional inductive hypothesis) and result as \cref{prop:ed-distillation-error-rate} applies with the weight of \(E\) replaced by the number of rows in which \(E\) is non-trivially supported.
  \begin{align*}
    p_\ell &\le \left(n_\ell \frac{p_{\ell-1}}{1-p_{\ell-1}}\right)^2 \\
           &\le n_\ell^2 p_{\ell-1}^2 \left(\frac{1}{1-\frac{1}{2n_{\ell}}}\right)^{-2} \\
           &\le \frac{64}{49} n_\ell^2p_{\ell-1}^2 = \frac{64}{49} 16 \ell^4 p_{\ell-1}^2 \\
           &\le 34 \ell^{2^2} p_{\ell-1}^2
  \end{align*}
   Using the inductive hypothesis, we can evaluate this recurrence relation in closed form (\cref{lemma:error-recurrence,lemma:error-recurrence-product})
  \begin{align}
   p_\ell  &\le \left(\prod_{i=1}^\ell i^{2^{\ell+2-i}}\right)34^{2^\ell-1} p^{2^\ell} \nonumber\\
        &\le 2^{2^{\ell+2}} 34^{2^\ell-1} p^{2^\ell} = \frac{1}{34}\cdot 16^{2^{\ell}} 34^{2^\ell} p^{2^\ell}\nonumber\\
        &\le \frac{1}{34} \left(544 p\right)^{2^\ell} \label{prop:concat-error-rate:inductive-bound}
  \end{align}
  For \(i \ge 1\) and \(p\le 1/2000\), we have that
  \begin{align*}
    \frac{\frac{1}{34}\left(\frac{544}{2000}\right)^{2^i}}{\frac{1}{8(i+1)^2}} \le 1
  \end{align*}
  which establishes \(p_{\ell} \le \frac{1}{8(\ell+1)^2}\).
  For the base case, \(p_0 \le 1/2000 \le 1/8\) and a column contains an error with probability \(p_0\equiv p\), completing the inductive argument.
\end{proof}

\begin{proof}[Proof of property 2]
  We proceed by a similar argument to \cref{prop:ed-overhead}.
  Fix an input error rate \(p \le 1/2000\).
  Let \(p_{\mathrm{fail}}^{(i)}\) be the probability that \(i\)-concatenated QED distillation outputs FAIL, and let \(\mathcal{T}^{(i)}_\mathrm{FAIL}\) be the distribution of the number of times the inner distillation step in \cref{defn:distillation-concat:inner-distillation} is executed by \(i\)-concatenated QED distillation and outputs FAIL.
  \(\mathcal{T}^{(i)}_\mathrm{FAIL}\) is a negative binomial distribution with parameters \(\mathrm{NB}(n_i, 1-p_{\mathrm{fail}}^{(i-1)})\).
  In other words, the distribution is the number of failures in a sequence of trials with success probability \(1-p_{\mathrm{fail}}^{(i-1)}\) until \(n_{i}\) successes are observed.
  
  Let \(\mathcal{X}^{(i)}\) be the distribution of the number of Bell pairs consumed by \(i\)-concatenated QED distillation, defining \(\mathcal{X}^{(0)}\) to be the constant distribution supported on \(1\), \(\mathrm{Bernoulli}(1)\).
  For independent random variables \(X^{(i)} \sim \mathcal{X}^{(i)}\), \(T^{(i)} \sim \mathcal{T}^{(i)}_\mathrm{FAIL}\), and \(\{Y_{j}^{(i-1)}\}_{j\in[n_i]}\) with \(Y^{(i-1)}_{j} \sim \mathcal{X}^{(i-1)}\) i.i.d., we have that
  \begin{align}
    X^{(i)} = \sum_{j=1}^{n_i+T^{(i)}} Y^{(i-1)}_j
  \end{align}
  That is, the number of Bell pairs required for an attempt at \(i\)-concatenated QED distillation is the number of Bell pairs required for \(T^{(i)}+n_i\) attempts at \((i-1)\)-concatenated QED distillation.
  We have independence of the random variables, because the number of Bell pairs required to produce an output of \((i-1)\)-concatenated QED distillation does not affect the probability to produce an output of \(i\)-concatenated QED distillation.

  To prove the bound on expectation, first note that using properties of the expectation
  \begin{align*}
    \expect{X^{(i)}} &= \expect[T^{(i)}]{\expect[\{Y^{(i-1)}_j\}_j]{\sum_{j=1}^{n_i+T^{(i)}} Y^{(i-1)}_j}} \\
&= \expect{n_i+T^{(i)}}\expect{Y^{(i-1)}_1} \\
&= \frac{n_i}{1-p_{\mathrm{fail}}^{(i-1)}}\expect{X^{(i-1)}}
  \end{align*}
  Defining \(p_{\mathrm{fail}}^{(0)} = 0\) and applying this property recursively, we have that an attempt at \(\ell\)-concatenated QED distillation requires \(\prod_{i=1}^{\ell} \frac{n_i}{1-p_{\mathrm{fail}}^{(i-1)}}\) Bell pairs on average, and we must make \(\frac{1}{1-p_{\mathrm{fail}}^{(\ell)}}\) attempts on average to obtain an output.
  As in \cref{prop:ed-overhead}, we will use the bound \(1-p_{\mathrm{fail}}^{(i)} \ge (1-p_i)^{n_i}\) where \(p_i\) is the probability that the output block contains an error.
  The expected number of input Bell pairs \(N\) before \(K\) Bell pairs are output is thus
  \begin{align*}
    \expect{\frac{N}{K}} &= \frac{1}{K}\frac{1}{1-p_{\mathrm{fail}}^{(\ell)}} \prod_{i=1}^{\ell} \frac{n_i}{1-p_{\mathrm{fail}}^{(i-1)}} \\
=& \prod_{i=1}^{\ell} \frac{1}{1-p_{\mathrm{fail}}^{(i)}}\frac{n_i}{n_i-2}\\
\le& \frac{1}{(1-1/2000)^4}\prod_{i=2}^{\infty} \left(1-\frac{1}{34} \left(544/2000\right)^{2^{i-1}}\right)^{-(2i)^2}\\
&\times\prod_{i=1}^{\infty}\frac{(2i)^2}{(2i)^2-2}\\
&\le 3
  \end{align*}
  where the second-to-last set of lines can be numerically evaluated to be approximately \(2.91\).
  Since \(f(x) = 1/x\) is convex on \((0,\infty)\) and \(N\) is in \((0,\infty)\) almost-surely, it follows from Jensen's inequality that
  \begin{align*}
      \expect{\frac{K}{N}} \ge \frac{K}{\expect{N}} \ge \frac{1}{3}
  \end{align*}

\end{proof}

\begin{proof}[Proof of property 3]
  Let \(K_{\ell} = \prod_{i=1}^{\ell} k_i\)  with \(K_0 = 1\), and we define \(M_{\ell}\) to be the memory consumed by \(\ell\)-concatenated QED distillation.
  To perform the \(\ell\)-concatenated QED distillation protocol, we must assemble a register of size \(n_{\ell} K_{\ell-1}\) on which to perform the outer distillation\footnote{Further memory optimizations should be possible, but we do not discuss them here.}.
  Furthermore, we must be capable of running the inner distillation protocol, which consumes \(M_{\ell-1}\) memory.
  Note that the inner distillation protocol produces the outputs in-place, so that we can re-use part of the outer distillation register to perform the inner distillation protocol.
 In particular, while the inner distillation protocol is being executed, at least \(K_{\ell-1}\) inputs to the outer distillation are missing, so
  \begin{align}
  \label{eq:memory}
    M_{\ell} &\le n_{\ell} K_{\ell-1} + M_{\ell-1} \\ \label{prop:concat-props:mem-upper1}
            &\le \sum_{i=1}^{\ell} n_i K_{i-1} \\
            &\le \sum_{i=1}^{\ell} 4^i \left(i!\right)^2\label{prop:concat-props:mem-upper2} \\
            &\le \ell 4^{\ell} \left(\ell! \right)^2 \le 4^{\ell} e^{2 \ell\log \ell} = e^{O(\ell \log \ell)}
  \end{align}
  For now, we take the loose upper bound \cref{prop:concat-props:mem-upper1}, which is equivalent to not optimizing the memory to be in-place.
  In \cref{prop:concat-props:mem-upper2}, we upper bound \(K_{\ell} \le \prod_{i=1}^{\ell} n_i\).

  As in property 1, to achieve a block error rate \(\epsilon\), we use the \(m\)-concatenated QED distillation with \(m = O(\log \log 1/\epsilon)\).
  We have that \(M_m = e^{O(\log \log 1/\epsilon\cdot \log \log \log 1/\epsilon)}\), so for some \(\alpha > 0\), we have the asymptotic upper bound
  \begin{align}
      M_m &\le \exp\left((\log \log \log 1/\epsilon)\cdot (\alpha \log \log 1/\epsilon)\right) \\
      &= O\left(\left(\log \log 1/\epsilon\right)^{\alpha \log \log 1/\epsilon}\right)
  \end{align}
\end{proof}

Note also that in practice, there may be further space-time trade-offs that modify the memory usage in a given setting.

We can combine the scheme of \cref{thm:constant-overhead} with a constant number of rounds of the BDSW-2EPP distillation sequence to achieve constant expected overhead using any input error rate less than \(1/2\):
\begin{corollary}[Main result]\label{coro:main-result}
  For physical Bell pairs subjected to a noise channel \(\mathcal{E}\) that is \(p\)-Pauli noise with \(p\in [0,1/2)\), there exists a family of concatenated error detecting distillation protocols parameterized by \(\ell\) satisfying the following properties
  \begin{enumerate}
  \item Conditioned on an output state being produced, the output contains no error with probability \(1-p_\ell\), where
    \begin{align}
      p_\ell \le \frac{1}{34} \left(\frac{544}{2000}\right)^{2^\ell}
    \end{align}
    In other words, to achieve an output block error probability \(\epsilon\), it suffices to pick \(\ell = O(\log \log 1/\epsilon)\).
  \item The protocol consumes \(N\) input Bell pairs to produce \(K\) output Bell pairs (repeat-until-success), where
  \begin{align}
    \expect{\frac{K}{N}} = \Omega(1)~.
  \end{align}
  \item For a desired block output error rate \(\epsilon\), the maximum memory required is \(O\left(\left(\log \log 1/\epsilon\right)^{\alpha \log \log 1/\epsilon}\right)\) for some \(\alpha >0\) which is quasi-polylogarithmic in \(1/\epsilon\).
  \end{enumerate}
\end{corollary}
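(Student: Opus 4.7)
The plan is to reduce the general $p\in[0,1/2)$ case to the regime $p\le 1/2000$ covered by \cref{thm:constant-overhead}, by prepending a constant number of rounds of BDSW-2EPP. Concretely, I would define the family of protocols in the corollary as: run $c = c(p)$ rounds of BDSW-2EPP distillation using the code sequence of \cref{defn:2epp-code-sequence} as a ``preconditioning'' stage, then feed the resulting Bell pairs (which now have error rate at most $1/2000$) into the $\ell$-concatenated QED distillation protocol from \cref{thm:constant-overhead}. Because the subsequent stage's guarantees are exactly those of \cref{thm:constant-overhead} applied at input error rate $1/2000$, the bound on $p_\ell$ in property~1 carries over verbatim, and the parameterization by $\ell$ in the corollary matches the concatenation depth of the quadratic-parity stage.

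First I would verify that a constant number of BDSW-2EPP rounds suffice. One round alternately enforces $X_1X_2$ or $Z_1Z_2$ parity on two copies of a noisy Bell pair, and the input–output relation between the error probabilities is the standard BDSW recursion, which for depolarizing noise is already noted in \cref{defn:2epp-code-sequence} to contract toward $0$ whenever $p<1/2$. For general Pauli noise, I would first have Alice and Bob apply random correlated Pauli operators to twirl $\mathcal{E}$ into depolarizing noise of the same strength (a LOCC operation, hence free), then iterate. Since the contraction map has the trivial fixed point as its unique attractor on $[0,1/2)$, the number of rounds $c(p)$ needed to drop below $1/2000$ is finite for any fixed $p<1/2$; crucially, $c(p)$ does not depend on the target error $\epsilon$, so it contributes only a constant additive number of concatenation layers.

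For property~2, each BDSW-2EPP round is a deterministic-after-retry protocol whose expected Bell-pair consumption per output pair is bounded by a constant depending only on the current error rate (by the same repeat-until-success analysis as \cref{prop:retry-count,prop:ed-overhead}, specialized to the $[2,1,2]$ code). Chaining $c(p)$ such bounded-overhead rounds multiplies the expected cost by a constant $C(p)$; composing with the $\ge 1/3$ rate from \cref{thm:constant-overhead} yields $\mathbb{E}[K/N] \ge 1/(3C(p)) = \Omega(1)$ by the same Jensen-inequality step. For property~3, the BDSW preconditioning only requires memory proportional to $2^{c(p)}$, a constant in $\epsilon$, so the asymptotic memory bound $(\log\log 1/\epsilon)^{\alpha\log\log 1/\epsilon}$ inherited from \cref{thm:constant-overhead} is unaffected.

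The main obstacle I anticipate is the clean bookkeeping for the preconditioning step, in particular two small technicalities: (i) justifying that twirling to depolarizing noise is compatible with the subsequent analysis (which assumes Pauli noise and uses \cref{prop:ed-distillation-error-rate}) — this is straightforward because depolarizing noise is a special case of Pauli noise, and any Pauli structure surviving inside the intermediate outputs can simply be re-twirled between rounds; and (ii) arguing that errors on the Bell pairs fed into the quadratic-parity stage are independent, which is ensured because each distinct pair entering that stage comes from an independent invocation of the BDSW preconditioning pipeline, exactly as in the inductive step used to prove property~1 of \cref{thm:constant-overhead}. Once these are handled, the three properties of the corollary follow immediately by applying \cref{thm:constant-overhead} to the preconditioned Bell pairs and absorbing the constant-depth BDSW overhead into the $\Omega(1)$ and $O(\cdot)$ constants.
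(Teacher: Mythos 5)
Your proposal is correct and matches the paper's own proof of the corollary: a constant (i.e.\ $\epsilon$-independent) number of BDSW-2EPP preconditioning rounds to bring the error rate below $1/2000$, followed by the $\ell$-concatenated quadratic-parity scheme of \cref{thm:constant-overhead}, with the constant-round stage absorbed into the $\Omega(1)$ and $O(\cdot)$ bounds. The extra details you supply (twirling, the contraction of the BDSW recursion, per-round overhead and memory bookkeeping) are just explicit versions of what the paper leaves implicit via \cref{app:2EPP}.
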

\begin{proof}
    We use a constant number of rounds of the BDSW-2EPP distillation scheme (\cref{app:2EPP}) until the error rate falls below \(\frac{1}{2000}\), then use the scheme of \cref{thm:constant-overhead}.
    There are only a constant number of steps of the BDSW-2EPP distillation, so none of the asymptotics are changed.
\end{proof}

\begin{figure}[h!]
  \centering
  \includegraphics[width=\columnwidth]{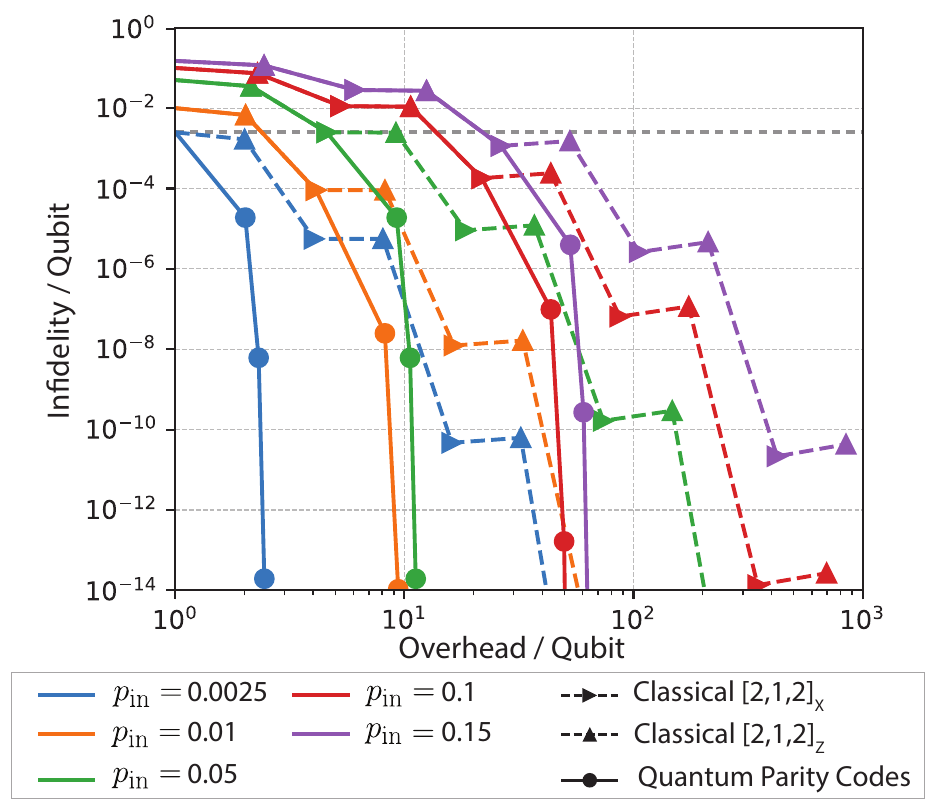}
  \caption{Concatenated QED distillation using the code sequence of \cref{thm:constant-overhead} along with initial rounds of a classical \([2,1,2]\) code. Each color corresponds to a different initial error rate: $0.0025$, $0.01$, $0.05$, $0.1$, and $0.15$. The dashed curves are the BDSW-2EPP distillation scheme, utilizing classical repetition codes $[2,1,2]$ alternately in the X and Z bases. The solid lines are the constant-overhead QED distillation scheme from \cref{coro:main-result}. The markers at each point indicate the basis used, as per the legend. Above the dashed horizontal line, we apply classical $[2,1,2]$ repetition codes (BDSW-2EPP scheme) to reduce the error before switching to quantum parity codes of growing size.
  }
  \label{fig:parity_2EPP}
\end{figure}

Figure \ref{fig:parity_2EPP} shows the infidelity per logical qubit as a function of physical Bell pair overhead per logical Bell pair for both methods: BDSW-2EPP (Definition \ref{defn:2epp-code-sequence}, dashed lines) and our quadratic parity code sequence (Definition \ref{defn:quadratic-code-sequence}, solid lines), combined with BDSW-2EPP to reach the initial threshold error rate. Different input physical error rates are shown in different colors. The asymptotic constant-rate behavior of our scheme is evident from the solid lines converging into a constant overhead (approaching a vertical line) for our scheme, while the conventional scheme has an overhead that continues to grow.

\section{Quantitative Analysis}
\label{sec:numerics}

In this section, we quantitatively analyze the performance of our asymptotically constant rate error-detection entanglement distillation scheme. We broaden our investigation beyond the codes that were used to prove constant rate, exploring more general code families. These families may not provide an asymptotically constant rate but offer favorable performance for finite sizes.

We assess the scheme's efficiency by examining distillation overhead and memory requirements, as well as the full operation time.
To do so, we first numerically identify the most efficient distillation sequences, optimizing over all possible code parameters for small classical codes and quantum codes based on Ref.~\cite{grassl2023codetables}.
We then compare the results against established methods such as BDSW-2EPP and lattice surgery, finding favorable performance.

Our techniques allow a continuous tradeoff between distillation overhead and buffer size---a term we define as the total logical qubit footprint used for networking on each node (see Figure \ref{fig:setup}). This tradeoff results in a significant reduction of almost an order of magnitude in both distillation overhead and logical gate operation time compared to other methods discussed here, such as BDSW-2EPP and lattice surgery, even with modest buffer sizes on the order of tens of logical qubits, thus demonstrating the practical applicability of our scheme.

\subsection{Distillation Optimization Methods}
\label{sec:optimization_methods}
\subsubsection{Model for Distillation Overhead}

We start by establishing the model used to optimize the distillation sequence. We implement a concatenated error-detection distillation scheme, as defined in Def.~\ref{defn:concat-ed-distillation}. This approach makes use of a series of QED codes \(\{\mathcal{C}_i\}_{i\in\mathbb{N}}\).
The initial level (0) directly produces noisy logical Bell pairs without distillation with a certain rejection probability $p_0$, typically following state injection of physical Bell pairs.
Subsequent levels ($i=1,..,\ell$) then employ QED codes $[[n_i, k_i, d_i]]$ and a repeat-until-success strategy, as described above.

Similar to Proposition \ref{prop:ed-distillation-error-rate}, we can upper bound the output error rate of a QED code $[[n_i,k_i,d_i]$ with an initial error rate $p_{i-1}$. Given a successful attempt, the output error probability $p_i$ is upper bounded by
\begin{align*}
    p_{i} &\le \Pr(\text{output error} \mid \lnot FAIL)\\
    &\le \frac{\Pr(|E| \ge d)}{\Pr(\lnot FAIL)} \\
    &\le \frac{\Pr(|E| \ge d)}{\Pr(|E| = 0)}  \\
    &\le \frac{1 - \sum_{j=0}^{d_i-1} \binom{n_i}{j} (p_{i-1})^j(1-p_{i-1})^{n_i-j}}{(1-p_{i-1})^{n_i}},
\end{align*}
where $|E|$ is the weight of the error, and we use Bayes' rule in step 1.

Similar to property 2 in Theorem~\ref{thm:constant-overhead}, the error detection overhead is estimated as follows. The QED code consumes $N_i$ input Bell pairs to produce $K_i$ output Bell pairs, with an effective overhead upper bounded by
\begin{align*}
\expect{\frac{N_i}{K_i}} \le \frac{1}{1 - p_{0}} \prod_{j=1}^{i} \frac{n_j}{k_j} \frac{1}{(1-p_{j-1})^{n_j}},
\end{align*}
where the first term accounts for the Bell state injection rejection rate $p_0$, and the last term accounts for the retry count of each distillation level.
At each level $i$, the output comprises $K_i = \prod_{j=1}^{i} k_j$ Bell pairs. Furthermore, according to Eq.~(\ref{eq:memory}), the cumulative memory usage up to and including step $i$ is:

\begin{align}
M_{i} &= \max\left(n_{i} K_{i-1}, (n_{i}-1) K_{i-1} + M_{i-1}\right).
\end{align}

Using this equation recursively, we find that
\begin{align}
(n_i-1)K_{i-1}+M_{i-1}
&\geq (n_i-1)K_{i-1} + n_{i-1}K_{i-2}\nonumber\\
&\geq (n_i-1)K_{i-1} + k_{i-1}K_{i-2}\nonumber\\
&= n_i K_{i-1}.
\end{align}
Therefore, the second term is always larger, and we can simplify the expression to
\begin{align}
M_i = (n_i-1)K_{i-1}+M_{i-1}
= \sum_{j=0}^{i-1}(n_{j+1}-1)K_j.
\end{align}

The final values for a full sequence $\{[[n,k,d]]\}_{i=1}^{\ell}$ of QED codes are then computed based on these parameters for $i=\ell$.
Note that this expression assumes that different stages of the distillation are performed sequentially, in order to minimize space usage. In practice, it may be desirable to pipeline the operations and perform some of them in parallel, as we discuss in Sec.~\ref{sec:pipeline_equations}.

\subsubsection{Distillation Sequence Optimization}
\label{sec:tree_search}
Using the models for memory usage and error rate from earlier subsections, we optimize the distillation overhead over distillation code sequences subject to the maximum buffer size ($M_{\text{max}}$), target output error rate ($p_{\text{target}} = 10^{-12}$), and the maximum number of distillation levels ($\ell_{\text{max}} = 7$).
We reduce the search space by performing a depth-first search and pruning branches when all distillation sequences in a branch are less optimal than the current most-optimal candidate or violate the buffer size constraint (see Alg.~\ref{alg:tree_search} in Appendix \ref{app:DFS_algorithm}).

We consider distillation sequences obtained from the following classical and quantum codes (\(\approx\) 500 codes)
\begin{itemize}
\item Quantum parity codes $[[n,n-2,2]]$, with \(n\le 40\).
\item Quantum Hamming codes $[[2^r,2^r-r-2,3]]$ with \( r\le 6\).
\item Best known QECCs from the code tables \cite{grassl2023codetables} as of January 2024 with \(n \le 30\).
\item Classical repetition codes $[n,1,n]$ for all \( n \le 12 \) in the \( X, Y\) and \( Z \) bases.
\end{itemize}

For classical codes, we use repetition codes $[n,1,n]$, so that we can directly calculate the rejection probability $p_{\text{fail}}$ and the error probability $p_{i}$ (See \cite{CodeLink}).
A classical code does not detect Pauli errors in one basis, so the probability of an error in each basis becomes unbalanced.
Thus, we include repetition codes in all three Pauli bases in the search.
Further optimization using unbalanced quantum codes is possible, but we exclude this optimization for simplicity.
Furthermore, we constrain our search such that classical codes may not be used after quantum codes in the distillation sequence.

\begin{figure*}
  \centering
  \includegraphics[width=1.9\columnwidth]{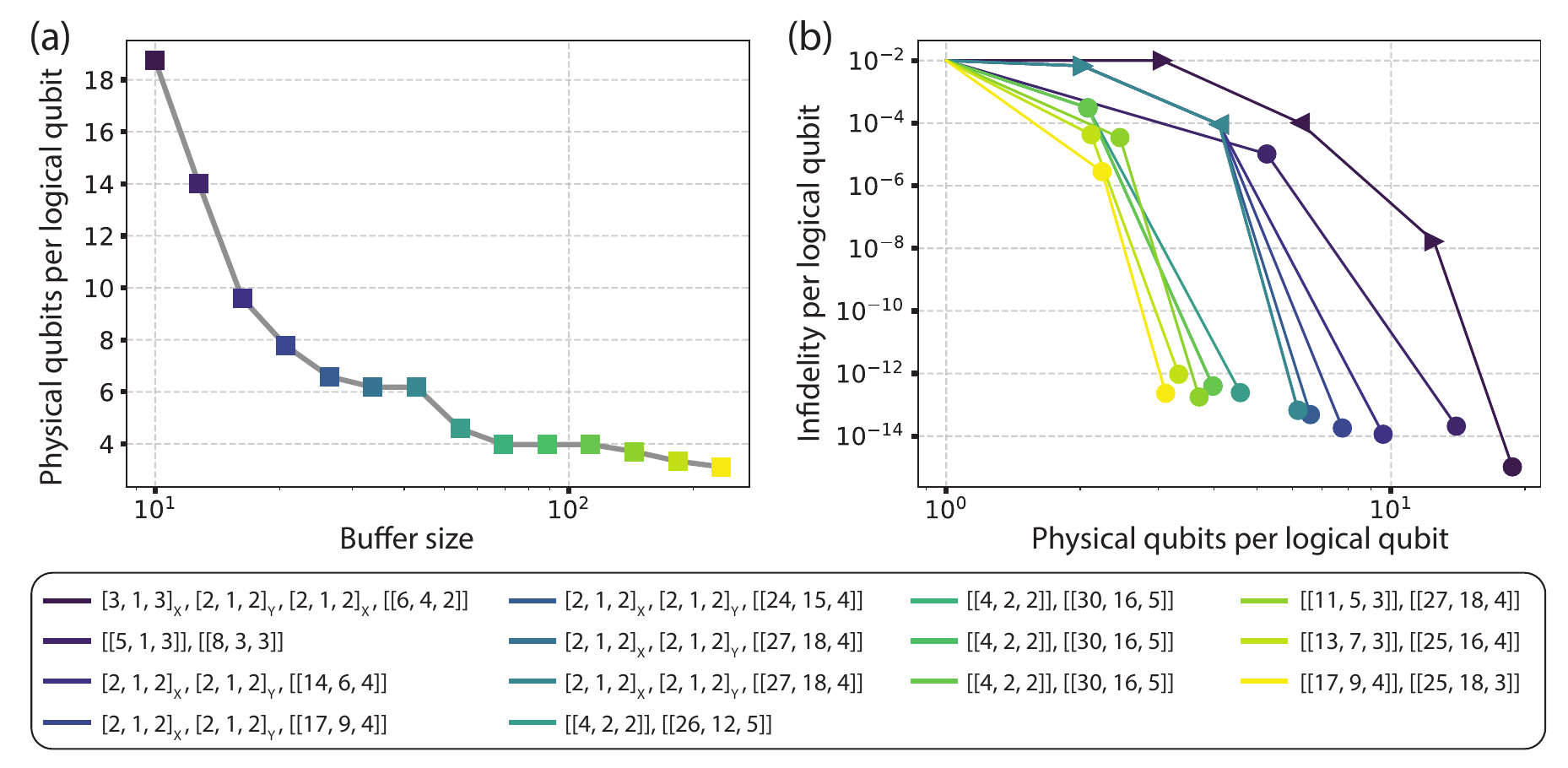}
  \caption{Optimized distillation sequences at a logical input Bell pair error rate of 1\%. (a) Distillation overhead per logical qubit as a function of the buffer size for a target logical Bell pair error rate of $10^{-12}$. Each point in (a) corresponds to a different optimized sequence presented in (b), color-coded for clarity. Quantum codes are represented by circles, and classical codes are represented by triangles, pointing upwards, to the right, or left, corresponding to the Z, X, and Y basis of the classical codes respectively. We also annotate the distillation sequence for each curve, with subscript denoting the basis for classical codes. Note that there is a permutation symmetry between X, Y, and Z for classical codes. Additionally, the state injection step is not considered here, as the input error rate represents the logical error rate after the injection step. 
  \label{sequence_p_in_1}
  }
\end{figure*}

\subsection{Distillation Optimization Results}
\label{sec:numerical_results}

We now show the results of our optimization, demonstrating low overheads of a few physical qubits per logical qubit, even with limited buffer sizes of tens of logical qubits.

In Fig.~\ref{sequence_p_in_1}, we present the optimized sequences designed for a Bell pair error rate of $1\%$. 
Fig.~\ref{sequence_p_in_1}(a) illustrates the distillation overhead for various buffer sizes, for a target logical Bell pair error rate of $10^{-12}$.
Corresponding to each point in this plot, the right plot details the sequence of codes used, with matching colors for clarity. Additional figures for various Bell pair error rates are provided in Appendix \ref{app:overheads_additional_Bell_errors}. 
As seen in Fig.~\ref{sequence_p_in_1}, we can achieve relatively low overheads, even with small buffer sizes. With a buffer size of around 50 logical qubits, our optimized distillation schemes can achieve as few as 4 physical Bell pairs per logical Bell pair, at $1\%$ input infidelity and output infidelity as low as $10^{-12}$. For slightly smaller buffers, e.g. 30 qubits, the overhead increases modestly to about 6.2 Bell pairs to maintain the same infidelity.
We observe that successful distillation sequences usually start with a small classical repetition code, before switching to a high-rate quantum code. For larger buffer sizes, starting with a high-rate quantum code is possible, and better overheads are achieved.

We next examine the trade-off between the Bell pair overhead and buffer size for a variety of Bell pair infidelities, as shown in Fig.~\ref{overheads_different_p_in}, with relevance to a wide variety of experimental settings with differing physical Bell pair error rates and buffer sizes.
Each data point in this figure results from the optimization process described in Sec.~\ref{sec:tree_search} and Fig.~\ref{sequence_p_in_1}. The optimized code sequence is shown in Fig. \ref{sequence_p_in_1} for Bell pair error of $1\%$, or Appendix \ref{app:overheads_additional_Bell_errors} for all other error rates.
As the input error rate increases, the distillation overhead increases as well, but not significantly. It is also interesting to note that for all error rates, the distillation overhead decays in roughly the same way as a function of buffer size.

\subsection{Estimating Practical Distillation Overhead and Comparison with Other Methods}

Thus far, we have focused on the setting of ideal local operations.
This can be approximately realized, up to the target error rate, by encoding the Bell pairs locally into error-correcting codes, thereby suppressing the local operation errors to a level that is negligible.
However, a careful comparison with other possible schemes---including those that directly operate an error-correcting code at the physical level---is necessary to understand the practical overhead implications.

In this section, we compare our distillation scheme with various other methods that have been proposed for quantum interconnects in large-scale distributed quantum computation, for realistic error models with both local gate errors and network noise.
This includes well-known distillation techniques like BDSW-2EPP~\cite{bennett1996purification}, as well as alternative approaches such as lattice surgery~\cite{fowler2010surface,ramette2023fault, sinclair2024fault}.

We examine the physical Bell pair overhead for a high fidelity distributed gate in each method. We conclude that our constant-rate distillation scheme provides a significant reduction in overhead compared to other methods with the same error model. We show results for a large parameters space - from small to large Bell pair error rates and memory sizes. For each parameter regime, we see a reduction in overhead, sometimes by orders of magnitudes. For example, for a Bell pair error rate of $1\%$, our overhead is $\sim 6$ Bell pairs, while the BDSW-2EPP scheme requires dozens and lattice surgery requires thousands.

\subsubsection{State Injection for Efficient Usage of Communication Interface}
\label{sec:injection}

\begin{figure}
  \centering
  \includegraphics[width=0.9\columnwidth]
  {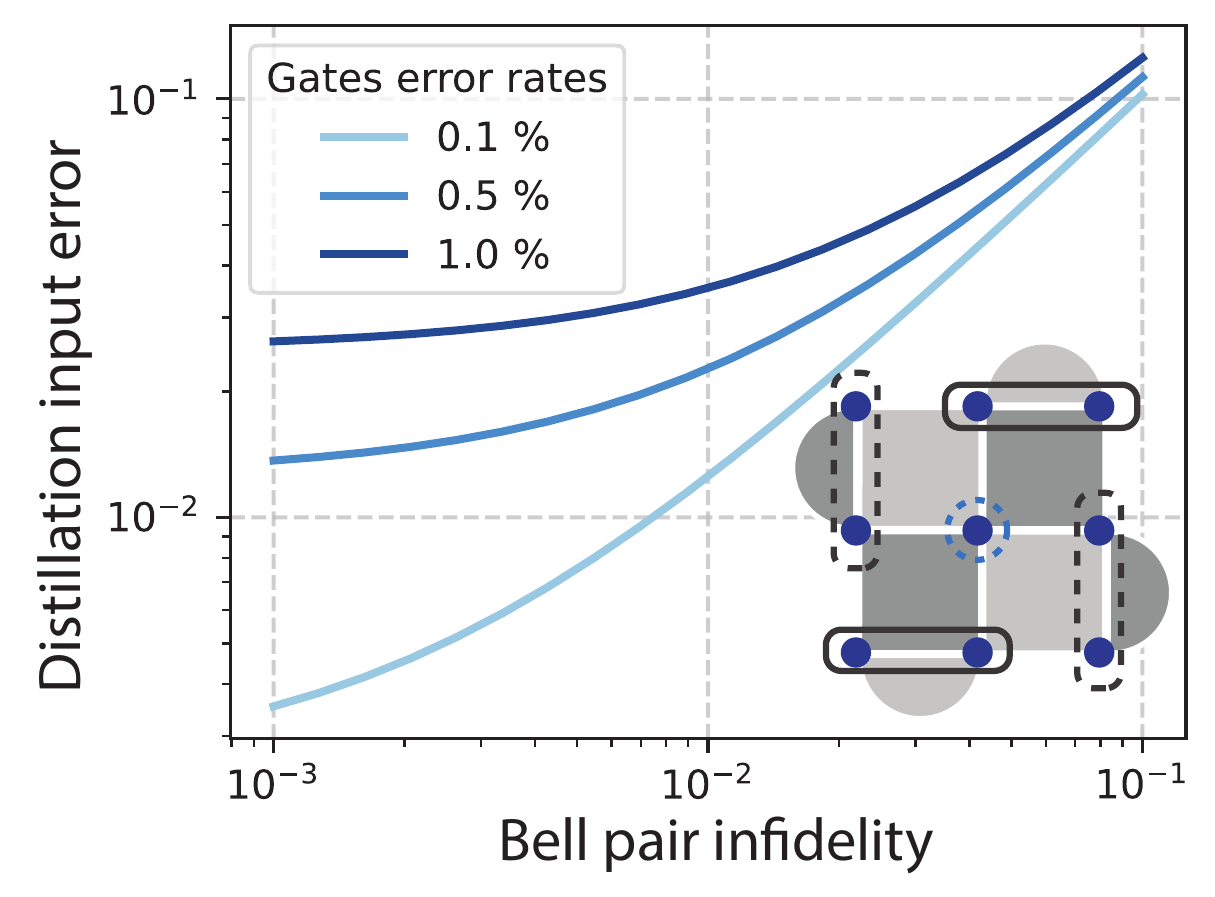}
  \caption{State injection with the Middle of the Rotated surface code approach (MR), as described in Ref.~\cite{lao2022magic}. We show the estimated distillation input error rate as a function of physical Bell pair error rate, for different local gate error rates, based on Eq.~(\ref{Eq:distill_injection_Bell}). Inset: a distance-3 logical qubit injection based on the rotated surface code. The blue-circled qubit is the magic state injected into the lattice. The other data qubits, circled with solid (dashed) black outlines, are initialized in $\ket{+}$ ($\ket{0}$) state. Ancilla qubits are not shown. Bright (dark) plaquettes represent X (Z) stabilizers for detecting Z (X) errors.}
  \label{fig:injection}
\end{figure}

In realistic experimental implementations, there will inevitably be both local gate errors and network errors.
Typically, many local gates can be executed in parallel, thereby delivering a much higher throughput.
It is thus commonly the case that the rate of gate execution over the network is much slower than local gate operations, and the former frequently becomes the bottleneck.
To make the best use of the rate-limiting quantum interconnect, we inject the distributed physical Bell pairs into rotated surface code logical qubits, thereby providing protection against local operation errors.
With modern state injection approaches~\cite{lao2022magic,li2015magic,gidney2023cleaner}, the injected logical error rate can be made comparable or even lower than the local physical gate error rate, thus not incurring much additional overhead.

We follow the approach of Ref.~\cite{lao2022magic}, where the state is injected to the qubit in the Middle of the Rotated surface code lattice (MR), while the other data qubits are initialized in either $\ket{+}$ or $\ket{0}$ states, as illustrated in Fig.~\ref{fig:injection}. The logical error rate of injecting a distance-\(d\) magic state using this approach is:
\begin{align}
\label{injection}
    p_{\text{injection}} = \frac{3}{5}p_2 + \frac{2}{3}p_1 + p_{\text{IN}} + O(p^2),
\end{align}
where $p_1$ and $p_2$ are the 1-qubit and 2-qubit gate error rates, respectively, and $p_{\text{IN}}$ is the initialization error, which is related to the physical Bell pair error rate here. The protocol is probabilistic, with a finite success rate which scales linearly with the error rate~\cite{lao2022magic}.

Upon successful injection of each qubit into a surface code on each node, we initiate the distillation process. The input error for distillation can thus be estimated as:
\begin{align}
\label{Eq:distill_injection_Bell}
p_{\text{distill,in}} = \frac{6}{5}p_2+\frac{4}{3}p_1+p_{\text{Bell}}.
\end{align}

A plot of $p_{\text{distill,in}}$ as a function of $p_{\text{Bell}}$ for different gate error rates $p_1=p_2$ is shown in Fig.~\ref{fig:injection}. One can see that when the network error is much larger than local gate errors, the state injection adds only a negligible amount of noise.

The injection process should be successful for both qubits in the Bell pair. Therefore, for a given injection rate $p_{\text{rejection rate}}$ which depends on the local gates errors, the Bell pair injection rate is:
\begin{align}
\label{Eq:injection_Bell_rejection}
p_{\text{Bell injection rejection rate}} = 1 - (1 - p_{\text{rejection rate}}) ^ 2.
\end{align}

We use the rejection rates calculated in~\cite{lao2022magic} for the numerical simulations presented in this paper. For a local gate error rate of $0.1 \%$ and $p_1 = p_2$ the injection rejection rate is $ \sim 8 \%$, setting the Bell injection rejection rate as $ \sim 15.36 \%$.

\subsubsection{Comparative Analysis of Network Overheads}

We now examine the physical Bell pair overhead necessary to distill a high quality logical Bell pair between two quantum nodes, using different quantum error correction schemes. We fix a local gate error rate of $0.1\%$ and evaluate how each scheme performs when we vary the network error rate. 

In our analysis, we consider the following schemes: \\
\textbf{Constant-rate QED distillation sequences:} These are the sequences we have developed in this work. \\
\textbf{BDSW-2EPP code sequence \cite{bennett1996mixed}:} Sequences of $[2,1,2]$ classical codes, as is typically used for entanglement distillation. \\
\textbf{Lattice surgery \cite{fowler2010surface, ramette2023fault}:} By distributing $O(d^2)$ physical Bell pairs along the edge connecting two surface code patches, one can perform a lattice surgery gate to distribute a logical Bell pair. \\

We note that there are other schemes that could be considered, e.g. using BDSW-1EPP with random quantum stabilizer codes~\cite{bennett1996mixed}.
However, to the best of our knowledge, this requires solving the potentially-challenging decoding problem for random quantum stabilizer codes, which is related to hardness assumptions for certain post-quantum cryptosystems~\cite{mceliece1978public,bernstein2017classic,melchor2018hamming,aragon2022bike}.
We therefore leave a detailed comparison such approaches to future work.

We use the results of Sec.~\ref{sec:numerical_results}, together with the Bell state injection errors in Sec.~\ref{sec:injection}, to estimate the quantitative overheads for the first two distillation methods.
The analysis of lattice surgery is adapted from Ref.~\cite{ramette2023fault, sinclair2024fault} and described in more detail in Appendix~\ref{app:lattice_surgery}.

We summarize our results in Table. ~\ref{table:comparison}.
Given Bell pair error rates ranging from $0.1\%$ to $15\%$, we tabulate the number of physical Bell pairs per logical Bell pair for each scheme.
Our constant-rate distillation scheme shows a remarkable reduction in resource overhead compared to other schemes.
Even with a buffer size of 30 logical qubits, we find almost an order of magnitude reduction in resource overhead cost compared to other competing schemes.
For example, given a Bell pair error rate of $1\%$, our scheme achieved overheads on the order of 6, while BDSW96 uses dozens of Bell pairs and lattice surgery employs thousands.

Additionally, our techniques allow a continuous trade-off between memory size and Bell pair rate, enabling tailored optimization for a given operational regime. Large buffer sizes allow for fewer than 10 physical Bell pairs per logical Bell pair, and smaller buffer sizes only lead to a modest increase of the network overhead. We show an example of buffer size of just 8 logical qubits in Appendix~\ref{app:small_buffer}. 

Furthermore, the network overheads presented here can be further optimized using known techniques, e.g. tailoring the distillation sequence to fit specific error models with bias~\cite{bonilla2021xzzx} or erasure errors~\cite{wu2022erasure,kubica2023erasure}.

We calculate a concrete upper bound for the physical memory overhead in Appendix~\ref{app:memory_overhead}, specializing to the case of injecting states into surface codes and operating with error rates around $10^{-12}$. To minimize memory overhead, various strategies can be employed in the future. For example, we can adapt code distances during the sequence, using smaller distances in initial levels where error rate demands are less strict. Another approach is to use high-rate quantum low-density-parity-check (QLDPC) codes for logical encoding rather than surface codes. Due to the concatenated nature of our scheme, where many operations occur in parallel at later stages, utilizing QLDPC codes can significantly enhance performance.

\begin{table*}[h]
    \centering
    \begin{tabular}{| c || c | c | c | c | c|}
    \hline
        \diagbox{Scheme}{Network error rate} & 0.1\% & 1\% & 5\% & 10\% & 15\% \\
         \hline\hline
         Distillation input error rate & 0.35\% & 1.25\% & 5.2\% & 10.2\% & 15.2\% \\\hline
         \hline
         BDSW-2EPP scheme - overhead & 75.82 & 78.89 & 348.75 & 411.58 & 2000.61 \\
        \hline
        BDSW-2EPP scheme with Y basis - overhead & 38.15 & 39.09 & 175.02 & 205.02 & 491.53 \\
        \hline
        Constant-overhead distillation (buffer = 10) - overhead & 8.98 & 22.44 & 55.51 & 102.75 & 188.96 \\
        \hline
        Constant-overhead distillation (buffer = 30) - overhead & 5.55 & 7.32 & 16.53 & 38.97 & 74.40 \\
        \hline
        Constant-overhead distillation (buffer = 50) - overhead & 3.90 & 7.32 & 16.53 & $\leq$ 30.71 & $\leq$ 67.32 \\
        \hline
        Constant-overhead distillation (buffer = 100) - overhead & 2.95 & 5.20 & $\leq 12.99$ & $\leq$ 27.16 & $\leq$ 67.32 \\
        \hline \hline
         Lattice surgery - overhead & 1,089 & 1,369 & 5,329 & 22,201 & 142,129 \\
         \hline\hline
    \end{tabular}
    \caption{Physical Bell pair overhead per logical gate required for different schemes for distributed quantum computation. The values in the table are calculated for gate error rate $p_{\text{gate}}=0.1\%$ and target output error rate $p_{\text{target}} = 10^{-12}$. With the exception of lattice surgery, we first use state injection with a rejection probability of $15.36 \%$, increasing the input Bell pair fidelity from the network error rate to the distillation input error rate. Values with an inequality represent an upper bound on overhead based on the existence of a specific optimized sequence, but a better sequence with lower overhead might exist.}
    \label{table:comparison}
\end{table*}

\subsubsection{Practical Considerations and Distillation Pipeline}
\label{sec:pipeline_equations}

In this section, we further elaborate on various trade-offs involved in implementing the distillation pipeline in practice.
In addition to the communication overhead and distillation throughput, we also discuss how the different stages of distillation can be constructed to make best use of the Bell pair and buffer memory resources.

To ensure that all distillation resources are put to effective use, it is desirable to balance the resource usage at different levels of distillation, such that no individual level bottlenecks the distillation process.
At the same time, in order to make full use of all incoming physical Bell pairs, the throughput of the distillation procedure should be at least as high as the rate of physical Bell pair generation.
Otherwise, some of the physical Bell pairs must be discarded, leading to a higher effective communication overhead.

We analyze these considerations in more detail in Appendix.~\ref{app:Distillation_pipeline_in_depth_overview}, and summarize the results here.
We build the process as a pipeline (Fig.~\ref{fig:example_pipeline_uniform}), where each level begins as soon as there are enough qubits (a logical block) output from the previous level.
For the state injection step, we require sufficient space to ensure that each newly arrive Bell pair has space for performing the injection, thereby requiring space usage
\begin{align}
B_0=\left\lceil \frac{T_{\text{inject}}}{T_{\text{Bell}}}\right\rceil,
\end{align}
in units of logical qubits.
For the $i$th distillation level, we again need to provide sufficient space to match the distillation throughput.
Considering the distillation time $T_{\text{distill},i}$ and the time it takes $T_{\text{input},i}$ to accumulate the input physical Bell pairs for one execution of the $i$th distillation level, we find that the space usage is
\begin{align}
B_i=\qty(\frac{T_{\text{distill},i}}{T_{\text{input},i}}+1)n_iK_{i-1},
\end{align}
where the latter factor $n_iK_{i-1}$ accounts for the space needed to store the outputs of previous levels of distillation, see Fig.~\ref{fig:concat-distillation-process}.
In Appendix.~\ref{app:Distillation_pipeline_in_depth_overview}, we further analyze $T_{\text{distill},i}$, finding that with an in-place parallel un-encoding circuit that minimizes space usage, we can upper bound $T_{\text{distill},i}\leq (3n_i-2-k_i)T_{\text{gate}}$, with $T_{\text{gate}}$ the typical logical entangling operation time.
Meanwhile, the average time required to accumulate the physical Bell pair inputs can be estimated as
\begin{align}
T_{\text{input},i}=n_i\prod_{j=0}^{i-1} \frac{n_j}{1-p_{\text{fail},j}},
\end{align}
with $n_0=1$.
The total space usage is then the sum of the space usage at all stages
\begin{align}
B_{\text{all}}=\sum_{i=0}^l B_i,
\end{align}
and we produce $\prod_{j=1}^l k_l$ Bell states every $T_{\text{Bell}}\prod_{j=0}^l\frac{n_i}{1-p_{\text{fail},i}}$.
Notice that the average time it takes to generate a high fidelity logical Bell pair (and therefore that of a distributed logical gate) is longer by the physical Bell pair distribution time by exactly a factor of the communication overhead.

We have focused on constructing the distillation pipeline in a way that minimizes additional space usage, since we expect local gate operations to be relatively fast compared to entanglement distribution.
For other operating regimes, many space-time tradeoffs are possible in order to obtain optimal usage of space and time at each level of the distillation.
For example, for a given distillation step, further optimizations can be made to reduce $T_i(n_i)$ from $O(n_i T_{\text{gate}})$ to $O(T_{\text{gate}})$ by employing ancilla qubits for circuit parallelization~\cite{fowler2012timeoptimal}.
It is also worth noting that the output of the distillation factory comes in batches: $K_l=\prod_{i=1}^l k_l$, so further storage may be required for consumption, with possible impacts on algorithmic compilation of distributed computation.
We leave the detailed optimization of these considerations for specific distributed quantum computing systems to future work.

\section{Application to Distributed Quantum Computation}
\label{sec:distributed_computation}
We now turn to a heuristic analysis of the implications of our results on distributed quantum computing.
While a complete analysis is beyond the scope of this manuscript, we nonetheless identify regimes of operation where our scheme can alleviate or remove a significant bottleneck in distributed quantum computing.

The distributed quantum computation can be influenced by a number of different parameters, including the size of each network module, the fraction of qubit resources available for use in communication, as well as remote entanglement fidelity and local logical operation fidelity.
For the purpose of our analysis, however, we only need to focus on the quantities derived from such an architecture.
For remote entanglement, this includes the time $t_e$ to generate a physical Bell pair between a pair of nodes and the number of physical Bell pairs $\alpha$ needed to generate a high quality logical Bell pair.
As noted above, the overhead factor $\alpha$ will depend on the available workspace used for entanglement distillation.
For local operations, we focus on the local operation time $t_{intra}$, as characterized by the typical timescale of a single logical operation implemented by e.g. lattice surgery, braiding, or transversal gates.

\begin{figure*}
\includegraphics[width=2\columnwidth]{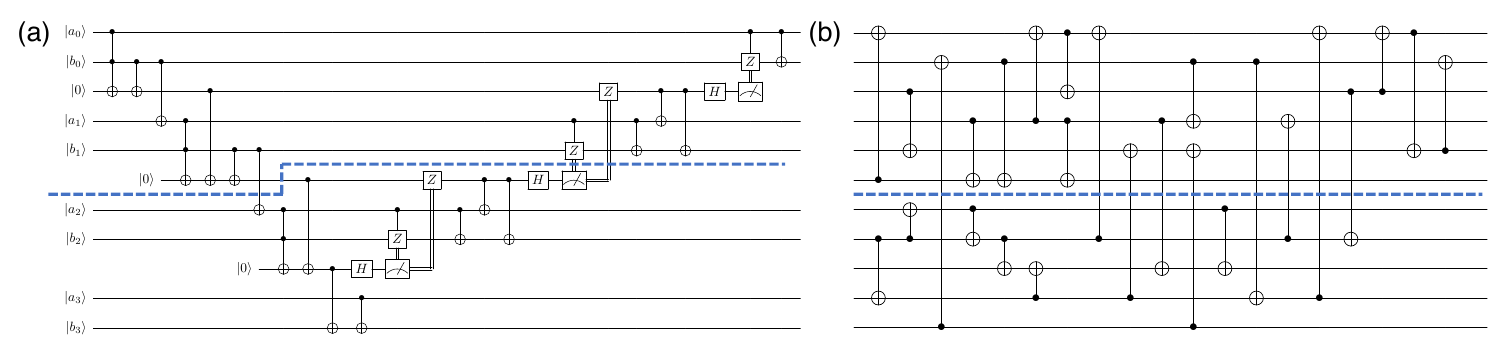}
\caption{Illustration of different quantum circuits and the communication cost of distributing them.
The number of CNOT or qubit lines the blue cut crosses is a good proxy for the number of Bell pairs that need to be distributed.
(a) Gidney adder circuit, where only two Bell pairs are required. Note that the uncomputation is done via measurement-based techniques, and therefore only requires classical communication.
(b) A random quantum circuit consisting of random CNOT pairs, which requires a larger number of distributed Bell pairs for a circuit with similar complexity, 9 for this particular cut.}
\label{fig:distributed_computation}
\end{figure*}

In addition to these system characteristics, another key determining factor of system resource tradeoffs is the algorithm itself.
Certain algorithms, such as the ripple carry adder, only require linear connectivity, thereby reducing the amount of intercore communication required.
Other quantum circuits, such as random quantum circuits, require much higher connectivity between qubits and therefore an extensive amount of intercore communication.
Since we are focusing on fault-tolerant quantum computers, the circuit compilation will also likely make heavy use of non-Clifford resource states such as T states and CCZ states, which require dedicated magic state factories on a surface-code-based architecture.
We characterize the algorithmic requirements of intercore communication with the parameter $\beta$, which we define to be the average number of intercore logical entanglement operations per core required for each intracore logical circuit layer.
Note that we choose to compare against circuit layers rather than circuit operations within each core, allowing for higher parallelism of operations within each core.

With this definition in mind, we can estimate $\beta$ for different circuits.
For the ripple carry adder executed on multiple cores, each with $s_c$ computational logical qubits, we may expect $\beta_{RCA}\approx \frac{1}{s_c}$, as illustrated in Fig.~\ref{fig:distributed_computation}, since only a single qubit and a single teleported gate need to be transferred from one register to the next for the full circuit of depth $2s_c$.
We note that this estimation may be modified if the circuit is executed in parallel using time-optimal quantum computation techniques~\cite{fowler2012timeoptimal}, or if oblivious carry runways are used to separate different sections of the adder~\cite{gidney2019approximate}.
When further accounting for the fact that non-Clifford gates are required and likely need to be prepared in dedicated modules with sizable spatio-temporal footprint (for example, the full factory in Ref.~\cite{gidney2019efficient} requires a spatial footprint of 12x6 logical qubits), we arrive at $\beta_{RCA}\approx 1$, since a single three-qubit CCZ resource state needs to be teleported every 3 circuit layers.
In contrast, for a random quantum circuit, we may expect a much higher $\beta_{RQC}= O(s_c)$, since each circuit layer involves gate operations on each logical qubit, and the pair of interacting logical qubits may be housed in different modules.

With these considerations, we can identify when the network interface becomes a bottleneck for the computation.
One layer of intramodule logical operation takes time $t_{intra}$, while an intermodule logical operation takes time $t_e \alpha$.
Thus, when
\begin{align}
\beta t_e\alpha \geq t_{intra},
\end{align}
the network communication becomes the limiting factor.
Depending on the entanglement distillation scheme employed, the distillation overhead factor $\alpha$ will be different.

For many current quantum architectures, $t_{intra}\ll\alpha t_e$, i.e. the logical clock speed within a node is much faster than the logical entanglement generation rate, given the parallel gate application within a module.
In the common case discussed above where non-Clifford gates are teleported in from dedicated factory modules, $\beta\gtrsim 1$, meaning that the communication interface will become a bottleneck.

For example, superconducting qubits are expected to achieve QEC cycle times of around 1 $\mu$s, translating to lattice surgery operation times of 10-30 $\mu$s depending on the code distance employed~\cite{acharya2022suppressing,krinner2022realizing}.
In contrast, the physical Bell pair generation rate via microwave-to-optical conversion with current estimates is expected to be around 1 MHz with a large, $15-20\%$ infidelity~\cite{ang2022architectures}.
Accounting for the distillation overheads in Tab.~\ref{table:comparison} for existing schemes shows that the logical entanglement overhead is around 500, resulting in one Bell pair every 500 $\mu$s, much slower than the intracore operations, posing a severe bottleneck.
The use of our constant-overhead distillation schemes, even with a buffer size as low as 30 logical qubits, is sufficient to lower this overhead by an order of magnitude, resulting in a 60 $\mu$s logical entanglement rate that is much closer to the local operation clock cycle.

As another example, neutral atom array systems have been able to achieve QEC cycle times of around 1 ms, translating to a similar timescale if transversal gates~\cite{shor1996fault,dennis2002topological} and algorithmic fault tolerance~\cite{cain2024correlated,zhou2024algorithmic} are employed.
The highest remote entanglement generation rate that has been shown in atomic systems, on the other hand, generates a single Bell pair every 5 ms with fidelity 94$\%$, such that the intranode \textit{logical} gate speed is faster than the internode \textit{physical} entanglement generation speed.
Our techniques bring down the distillation overhead from around 150 to 15, reducing the communication bottleneck depending on the algorithm of interest.

Therefore, in many cases of high relevance, our scheme can alleviate or remove the communication bottleneck in distributed quantum computing. These considerations also highlight an important tradeoff between different distillation schemes: when the communication rate is the main bottleneck, it is more important to use the high-rate communication schemes described here; when instead the operations within a given module are the bottleneck, it may instead be preferable to use communication schemes with lower memory overhead and employ more of the logical qubit resources towards active computation. These tradeoffs are also directly captured in our analysis by considering different memory buffer sizes.

\section{Conclusion}
\label{sec:conclusion}
In this paper, we proposed efficient entanglement distillation methods based on constant-rate distillation protocols and demonstrated their superior performance in regimes of interest, thereby addressing the communication bottleneck in fault-tolerant distributed quantum computing.
Using two-way communication and a sequence of quantum error-correcting codes with increasing rate, we achieve constant-rate entanglement distillation with much lower communication overheads than prior approaches, without needing to perform the hard decoding problem associated with conventional one-way QEC-based communication schemes.
The use of a first level of logical qubit encoding further ensures that we can ignore local operation errors, and make best use of remote Bell pairs that are typically generated at a lower rate and worse fidelity.
We find that our scheme can speed up the logical entanglement generation rate by an order of magnitude in relevant regimes, with a modest buffer memory size, and can be continuously adjusted depending on various space-time tradeoffs.
Our methods are also directly compatible with recent methods to speed up fault-tolerant computation on local nodes, based on transversal gates and algorithmic fault tolerance~\cite{cain2024correlated,zhou2024algorithmic}.
We thus believe that the techniques we introduced will be of key importance to future large scale quantum computers, where the computation may be distributed across multiple modules.

While we have proved that our scheme achieves constant rate and demonstrated its practical advantage in relevant parameter regimes, experimental implementations will likely benefit from a number of further optimizations.
For example, to reduce local resource consumption, lower code distances can be used at earlier stages of the distillation pipeline, since the error rate requirements are less stringent.
Alternatively, more efficient error-correcting codes such as high-rate quantum low-density-parity-check codes could be used.
This is particular useful here, considering that at later stages of the distillation pipeline, parallel operations are applied on entire blocks to prevent undesired error correlations.
The noise in such systems, whether across the network or within local nodes, may also have additional structure such as noise bias or erasures, which could be utilized to improve performance.
It will also be interesting to generalize these results to multipartite entanglement distribution.
Finally, we expect that the optimal choice of entanglement distillation scheme will also depend on the target quantum algorithm to be implemented and the way it is distributed across multiple nodes.
Although we initiated our study of this through simple estimates for a few representative algorithms, co-developing algorithms, compilation, and networking distribution schemes will likely be a fruitful area of future research.

\section{Acknowledgements}
We acknowledge helpful discussions with M.~Beverland, D.~Bluvstein, B.~Gu, L.~Jiang, B.~Li, J.~Ramette, J.~Sinclair, J.~Slote, S.~Wang, V.~Vuletic, Q.~Xu. 
We acknowledge financial support from IARPA and the Army Research Office, under the Entangled Logical Qubits program (Cooperative Agreement Number W911NF-23-2-0219), the DARPA IMPAQT program (grant number HR0011-23-3-0012), DARPA MeasQuIT program (grant number HR0011-24-9-0359), U.S. Department of Energy Office of Science (grant number DE-SC0020290), the National Science Foundation (grant number PHY-2012023), the QuSEC (grant number OMA-2326787), the CQN (grant number EEC-1941583), and the DOE/LBNL QSA (grant number DE-AC02-05CH11231).
G.B. acknowledges support from the MIT Patrons of Physics Fellows Society.
J.P.B.A. acknowledges support from the Generation Q G2 fellowship and the Ramsay Centre for Western Civilisation.
C.A.P. thanks QuEra Computing and Harvard University for hospitality, during which the majority of this work was completed.
The Institute for Quantum Information and Matter (IQIM) and the Center for Ultracold Atoms (CUA) are NSF Physics Frontiers Centers.

Note added: After the completion of this project, we became aware of related recent work~\cite{shi2024stabilizer}.
\printbibliography

\appendix
\onecolumn
\section{Statistics and Distributions}
\label{app:distributions}
Here, we provide some definitions and results about random variables that we utilize in the main text.
For a random variable (rv) \(X\), the corresponding probability density function (PDF) is denoted by \(f_X(x)\), and the corresponding cumulative distribution function (CDF) is denoted by \(F_X(x)\).
Define \(\overline{F}_X(x) = 1-F_X(x)\).

\subsection{Distributions}
We now fix the parameterization of some common distributions: For \(p \in (0,1)\), the geometric distribution, denoted \(\mathrm{Geom}(p)\), is the number of repeated independent attempts with success probability \(p\) before success is achieved.
For \(X \sim \mathrm{Geom}(p)\), the corresponding CDF supported on the positive real numbers \(\mathbb{R}_+\) is \(F_X(x) = 1-(1-p)^{\lfloor x \rfloor}\).
Likewise, for \(\lambda > 0\), we denote the exponential distribution as \(\mathrm{Exponential}(\lambda)\).
For \(X \sim \mathrm{Exponential}(\lambda)\) the corresponding CDF supported on \(\mathbb{R}_+\) is \(F_X(x) = 1-e^{-\lambda x}\).
For a random variable \(X \sim \mathrm{Gamma}(\alpha, \beta)\), its CDF is \(F_X(x)=\int_0^{\beta x} t^{\alpha-1} e^{-t} dt / \Gamma(\alpha)\).
Note that \(\mathrm{Gamma}(1,\lambda) = \mathrm{Exponential}(\lambda)\).
For \(p \in [0,1]\), a random variable \(X \sim \mathrm{Bernoulli}(p)\) is distributed according to \(f_X(1) = p\), \(f_X(0) = 1-p\).

We would like to model the distribution of the number of times our distillation scheme must consume a physical level Bell pair.
The number of times the distillation process must run in order to produce some number of outputs is modeled by a negative binomial distribution.

\begin{definition}[Negative binomial distribution]
  Fix two parameters \(r \in \mathbb{N}\) and \(p \in (0,1)\), and consider a bitstring with each bit drawn from \(\mathrm{Bernoulli}(p)\) i.i.d. The negative binomial distribution, denoted \(\mathrm{NB}(r, p)\), counts the number of \(0\) entries from the start of the bitstring before \(r\) \(1\) entries are seen.
  For an integer \(n\ge 0\), the distribution of \(X \sim \mathrm{NB}(r,p)\) has the probability mass function
  \begin{align}
    \Pr(X=n) = \binom{n+r-1}{n} p^n(1-p)^{r},
  \end{align}
  which is the same as the probability that exactly \(r\) successes are seen in \(n+r\) i.i.d Bernoulli trials such that the last trial is a success.
\end{definition}

In the concatenation procedure, each attempt at distillation in level-\(\ell\) requires that we invoke the distillation protocol at level-\((\ell-1)\).
Thus, we will need to consider a sum where the number of terms is a random variable and the terms are i.i.d. random variables.
Such distributions are known as compound distributions.
In particular, we will use the compound negative binomial distribution:
\begin{definition}[Compound negative binomial (CNB) distribution]
  Let \(N\sim \mathrm{NB}(r, p)\) be a negative binomially distributed random variable, and let \(\{Y_{i}\}_{i=1}^{\infty}\) be an infinite set of random variables distributed i.i.d. according to some distribution \(\mathcal{Y}\).
  Then, the random variable
  \begin{align}
    X = \sum_{i=1}^N Y_i
  \end{align}
  is said to be distributed according to a compound negative binomial (CNB) distribution.
  We will refer to the distribution of \(X\) as a \((r, p, \mathcal{Y})\)-CNB distribution.
\end{definition}
Compound distributions are used in insurance risk modeling where there are a random number of claims each with a random payout.
We are interested in a tail bound, i.e. upper bounding the probability that some extreme value is exceeded.
Our main tool will be that of \emph{stochastic dominance} which allows us to replace a random variable with one that always takes on a more extreme result.

\subsection{Stochastic dominance}
In this subsection, we include some tools that we expect to be useful to prove a parallel-repetition type result for constant-rate entanglement distillation schemes with post-selection, e.g. by outputting a maximally mixed state when a large number of distillation failures occur. However, we leave the detailed analysis to further work: The contents of this subsection are unnecessary to prove constant expected rate and may be skipped.
\begin{definition}[Stochastic dominance]
  For two random variables \(X\), \(Y\) taking values on \(\mathbb{R}\), we can define a partial ordering known as stochastic dominance.
  We say that \(X \succeq Y\) or \(X\) stochastically dominates \(Y\), if for all \(t \in \mathbb{R}\), we have that
  \begin{align}
    \Pr(X \ge t) \ge \Pr(Y \ge t)
  \end{align}
  In other words, the probability that \(Y\) is more extreme than \(t\) is no more than the probability that \(X\) is more extreme than \(t\).
\end{definition}
In calculations, it is helpful to remember that \(Y \succeq X\) iff \(\forall t.~ F_{Y}(t) \le F_{X}(t)\).
The notion of stochastic dominance allows us to relate tail bounds for simple distributions to tail bounds of more complicated distributions.
The problem of upper bounding the tail of compound negative binomial random variables has been studied for the purposes of insurance pricing.
The objective is to employ the result of \cite{willmot1997upper} by alternately employing tail bounds and replacing random variables by new random variables that stochastically dominate them.
Intuitively, the sum of a fixed number of Gamma distributed (exponentially distributed being a special case) random variables is itself Gamma distributed, and the additional complication of a random number of terms of the sum does not change the answer much.
\begin{fact}[\cite{willmot1997upper} Corollary 3]\label{fact:willmot-lin}
  Fix a distribution \(\mathcal{Y}\), \(q \in (0,1)\), \(r \in \mathbb{N}\), and let \(X\) be distributed according to a \((r,q,\mathcal{Y})\)-CNB distribution.
  Let \(Y \sim \mathcal{Y}\) and define \(\kappa\) to be the solution to
  \begin{align*}
      \int^\infty_0 e^{\kappa t} f_Y(t) dt = \frac{1}{q}
  \end{align*}

  Then, for \(Z \sim \mathrm{Gamma}(r, \kappa)\), \(Z \succeq X\).
\end{fact}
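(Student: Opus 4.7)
I would prove $Z \succeq X$ in two steps: first reduce to the compound geometric case $r = 1$, then apply a Lundberg--Cram\'er tail bound from classical risk theory. For the reduction, I use that $\mathrm{NB}(r,q)$ equals in distribution the sum of $r$ i.i.d.\ $\mathrm{Geom}(q)$ random variables. This lets me write $X = X_{1} + \dots + X_{r}$ with each $X_{j}$ an independent $(1,q,\mathcal{Y})$-CNB variable built from fresh i.i.d.\ copies of $Y$, and $Z = Z_{1} + \dots + Z_{r}$ with $Z_{j} \sim \mathrm{Exponential}(\kappa)$ i.i.d. Stochastic dominance is preserved under sums of independent random variables: coupling each pair $(X_{j}, Z_{j})$ on a common probability space so that $Z_{j} \geq X_{j}$ almost surely, and taking the couplings independent across $j$, yields $\sum_{j} Z_{j} \geq \sum_{j} X_{j}$ almost surely and hence $Z \succeq X$. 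It therefore suffices to treat $r = 1$.

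For $r = 1$, I need $\Pr(X > t) \leq e^{-\kappa t} = \Pr(Z > t)$. Memorylessness of $\mathrm{Geom}(q)$ gives the distributional identity $X \stackrel{d}{=} (Y + X') \mathbf{1}_{N \geq 1}$, where $X'$ is an independent copy of $X$, which translates into the integral equation
\[ g(t) = q \bar{F}_{Y}(t) + q \int_{0}^{t} g(t - y)\, dF_{Y}(y), \qquad g(t) := \Pr(X > t). \]
Multiplying through by $e^{\kappa t}$ and setting $h(t) = e^{\kappa t} g(t)$ and $d\nu(y) = q e^{\kappa y} dF_{Y}(y)$, the defining relation $q M_{Y}(\kappa) = 1$ makes $\nu$ a probability measure on $\mathbb{R}_{+}$, and the identity becomes the inequality
\[ h(t) \leq \bar{\nu}(t) + \int_{0}^{t} h(t - y)\, d\nu(y), \]
via $e^{\kappa t}\bar{F}_{Y}(t) \leq \int_{t}^{\infty} e^{\kappa y}\, dF_{Y}(y)$. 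Since $\bar{\nu}(t) + \nu([0, t]) = 1$, the constant function $1$ satisfies this inequality with equality, so combined with the base case $h(0) = q \leq 1$ and monotonicity of the integral operator, an iterative/inductive argument concludes $h(t) \leq 1$ for all $t$, which is the claimed Lundberg bound.

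\textbf{Main obstacle.} The delicate point is making the induction $h \leq 1$ rigorous: because $E[e^{\kappa X}] = \infty$ exactly at the abscissa of convergence, there is no a priori Chernoff/Markov bound on $h$ to serve as a starting point. This can be handled in two ways: (i) truncate $Y$ to $Y \wedge M$ so that all exponential moments become finite, establish the bound for the truncated compound distribution where $h$ is bounded a priori, and pass to the limit $M \to \infty$ by monotone convergence; or (ii) exploit right-continuity of $g$ to argue that the first hypothetical time $t^{\star}$ at which $h(t^{\star}) > 1$ leads to a contradiction when the integral inequality is applied at $t^{\star}$, using $h \leq 1$ on $[0, t^{\star})$. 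The first-step decomposition is then a routine coupling argument and contributes no further difficulty.
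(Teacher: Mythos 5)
The paper never proves this statement: it is imported as a black-box fact (Willmot's Corollary~3) from the insurance-risk literature, and the surrounding subsection is explicitly marked as skippable, so there is no in-paper proof to compare against. Your proposal is the standard Lundberg--Cram\'er route for compound geometric/negative-binomial tails, and its skeleton is sound. The reduction to \(r=1\) is correct: the count decomposes into \(r\) i.i.d.\ geometric pieces, \(\mathrm{Gamma}(r,\kappa)\) is the \(r\)-fold convolution of \(\mathrm{Exponential}(\kappa)\), and closure of stochastic dominance under independent sums is exactly the paper's \cref{lemma:stoch-dom-sum} (your quantile-coupling argument is an equivalent route). One caution on conventions: your renewal equation \(g(t)=q\bar F_Y(t)+q\int_0^t g(t-y)\,dF_Y(y)\) presumes \(\Pr(N\ge 1)=q\), i.e.\ the pmf \(\Pr(N=n)=\binom{n+r-1}{n}q^n(1-q)^r\); this is the reading consistent with the adjustment equation \(\int_0^\infty e^{\kappa t}f_Y(t)\,dt=1/q\) (and with the paper's displayed pmf, though not with its verbal description of \(\mathrm{Geom}\)), and it is what makes the bound come out as exactly \(e^{-\kappa t}\). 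Granting that, the tilting step \(e^{\kappa t}\bar F_Y(t)\le\int_t^\infty e^{\kappa y}f_Y(y)\,dy\) and the identification of \(d\nu(y)=qe^{\kappa y}f_Y(y)\,dy\) as a probability measure are correct.

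The one genuine gap is the step you flag yourself, and neither proposed fix is airtight as stated. Truncating \(Y\) at \(M\) does not make \(h\) bounded a priori: the compound-geometric mgf still diverges at the truncated adjustment coefficient \(\kappa_M\), so \(e^{\kappa_M t}\Pr(X_M>t)\) need not be bounded; truncation only helps if you already have a complete proof for bounded \(Y\) and then pass to the limit using \(\kappa_M\ge\kappa\) and \(\Pr(X_M>t)\uparrow\Pr(X>t)\). The first-crossing argument, with \(\nu(\{0\})=0\), does give \(h(t^\star)\le 1\), but it does not exclude \(h>1\) immediately to the right of \(t^\star\): for \(t>t^\star\) the integral still involves values of \(h\) on \([t^\star,t]\), and taking \(t\downarrow t^\star\) only yields \(\limsup h\le 1\), not a contradiction. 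The standard way to close this is to induct on the number of summands rather than on \(t\): setting \(X_n=\sum_{i=1}^{N\wedge n}Y_i\), the same one-step decomposition gives, for \(t\ge 0\), \(\Pr(X_{n+1}>t)\le q\left[\bar F_Y(t)+\int_0^t \Pr(X_n>t-y)\,dF_Y(y)\right]\le q e^{-\kappa t}\int_0^\infty e^{\kappa y}f_Y(y)\,dy=e^{-\kappa t}\) by the inductive hypothesis and the tilting inequality, with trivial base case \(X_0=0\); monotone convergence then gives \(\Pr(X>t)\le e^{-\kappa t}\), which is the \(r=1\) claim. (Alternatively, iterate your operator \(h\mapsto\bar\nu+\int_0^{\cdot}h(\cdot-y)\,d\nu(y)\) \(n\) times, using that \(h(t)\le e^{\kappa t}\) is locally bounded and \(\nu^{*n}([0,t])\to 0\).) With that replacement your argument is complete and self-contained, which is more than the paper provides for this fact.
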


We begin by proving that stochastic dominance is preserved by addition.
This will allow us to prove that two CNB distributed random variables will stochastically dominate when the random variables in the sum stochastically dominate.
\begin{lemma}\label{lemma:stoch-dom-sum}
    For random variables \(X_1, X_2, Y_1, Y_2\) such that \(Y_1 \succeq X_1\) and \(Y_2 \succeq X_2\), their sums also stochastically dominate:
    \begin{align*}
        Y_1 + Y_2 \succeq X_1 + X_1
    \end{align*}
\end{lemma}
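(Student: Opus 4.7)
The plan is to prove the lemma under the natural (and implicit) assumption that $(X_1, X_2)$ are independent and $(Y_1, Y_2)$ are independent; without such an assumption the statement can fail, for instance if $X_2 = -X_1$ almost surely, which forces $X_1 + X_2 = 0$ deterministically regardless of what $Y_1 + Y_2$ does. Since the lemma is used in the paper to compare independent sums of random distillation costs, this is the setting of interest.

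My preferred approach is via the monotone coupling, i.e.\ Strassen's theorem on $\mathbb{R}$. For each $i \in \{1,2\}$, draw $U_i$ uniform on $[0,1]$ with $U_1, U_2$ independent, and set $\tilde X_i = F_{X_i}^{-1}(U_i)$ and $\tilde Y_i = F_{Y_i}^{-1}(U_i)$ using generalized inverses. Then $\tilde X_i$ and $\tilde Y_i$ have the correct one-dimensional marginals, and because $Y_i \succeq X_i$ is equivalent to $F_{Y_i}(t) \le F_{X_i}(t)$ for every $t$, the generalized inverses satisfy $\tilde Y_i \ge \tilde X_i$ almost surely. Adding the two pointwise inequalities gives $\tilde Y_1 + \tilde Y_2 \ge \tilde X_1 + \tilde X_2$ almost surely, and since $(\tilde X_1, \tilde X_2)$ and $(X_1, X_2)$ agree in distribution (both being independent with the correct marginals), and similarly for the $Y$'s, taking probabilities of the event $\{\,\cdot \ge t\}$ on both sides yields $\Pr(Y_1 + Y_2 \ge t) \ge \Pr(X_1 + X_2 \ge t)$ for every $t$.

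An alternative, more elementary route that avoids citing Strassen would be to condition iteratively. Writing $\Pr(Y_1 + Y_2 \ge t) = \mathbb{E}_{Y_2}\bigl[\Pr(Y_1 \ge t - Y_2 \mid Y_2)\bigr]$ and applying $Y_1 \succeq X_1$ pointwise inside the conditional expectation gives the lower bound $\mathbb{E}_{Y_2}[\Pr(X_1 \ge t - Y_2)]$. The function $g(a) := \Pr(X_1 \ge t - a)$ is non-decreasing in $a$, so the standard equivalent characterization of stochastic dominance (namely that $Y \succeq X$ iff $\mathbb{E}[g(Y)] \ge \mathbb{E}[g(X)]$ for every non-decreasing $g$ for which both sides are defined), combined with $Y_2 \succeq X_2$, produces $\mathbb{E}[g(Y_2)] \ge \mathbb{E}[g(X_2)] = \Pr(X_1 + X_2 \ge t)$, completing the chain.

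I expect no genuine obstacle: the only points requiring care are making the independence hypothesis explicit, and handling possible atoms in the distributions, which the generalized inverse construction and the right-continuity of CDFs dispose of routinely. If the paper wants to avoid coupling theorems entirely, the conditioning argument is self-contained once the monotone-function characterization of stochastic dominance is quoted (or quickly derived by integration by parts).
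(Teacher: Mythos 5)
Your proposal is correct, and it is worth separating its two halves. Your primary argument (the monotone/quantile coupling: realize each pair from a common uniform via generalized inverse CDFs, get $\tilde Y_i \ge \tilde X_i$ almost surely, and add) is a genuinely different route from the paper's. Your ``alternative'' conditioning argument, on the other hand, is essentially the paper's proof in different notation: the paper writes $F_{X_1+X_2}(a)$ as the convolution $\int F_{X_1}(a-u) f_{X_2}(u)\,\mathrm{d}u$, replaces $F_{X_1}$ by $F_{Y_1}$ using $Y_1 \succeq X_1$, rewrites the result as $\int f_{Y_1}(u) F_{X_2}(a-u)\,\mathrm{d}u$ (both expressions being $F_{Y_1+X_2}(a)$), and then replaces $F_{X_2}$ by $F_{Y_2}$ --- exactly your one-at-a-time swap, phrased with densities rather than conditional expectations and the monotone-function characterization. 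What your coupling route buys: it makes the independence hypothesis explicit (the paper never states it, but its convolution formula requires it, and your $X_2 = -X_1$ observation shows some such hypothesis is genuinely needed), and it dispenses with the implicit assumption that densities exist --- a relevant point, since the random variables this lemma is meant for (negative binomial failure counts and Bell-pair consumption counts) are discrete, so the paper's $f_{X}$ must be read as a PMF or the integrals as Stieltjes integrals, whereas the quantile coupling handles atoms with no extra care. The paper's version buys brevity and stays entirely within the CDF formalism already set up in its appendix. Incidentally, the right-hand side of the lemma statement should read $X_1 + X_2$ rather than $X_1 + X_1$; both your argument and the paper's proof establish the intended statement.
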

\begin{proof}
    The distribution of the sum of two random variables is given by convolution:
    \begin{align*}
        F_{X_1+X_2}(a) &= \int^a_{\infty} \int^{\infty}_{-\infty} f_{X_1}(t-u) f_{X_2}(u) \mathrm{d}t \mathrm{d}u\\
        &= \int_{-\infty}^\infty F_{X_1}(a-u) f_{X_2}(u) \mathrm{d}u\\
        &\ge \int_{-\infty}^\infty F_{Y_1}(a-u) f_{X_2}(u) \mathrm{d}u\\
        &= \int_{-\infty}^\infty f_{Y_1}(u) F_{X_2}(a-u) \mathrm{d}u \\
        &\ge \int_{-\infty}^\infty f_{Y_1}(u) F_{Y_2}(a-u) \mathrm{d}u = F_{Y_1+Y_2}(a)
    \end{align*}
\end{proof}

\begin{lemma}
    Fix \(r \in \mathbb{N}\), \(p \in (0,1)\), and two distributions \(\mathcal{X}\), \(\mathcal{X}'\) such that for independent random variables \(X \sim \mathcal{X}\), \(X' \sim \mathcal{X}'\), \(X \preceq X'\).
    Let \(Z\) (\(Z'\)) be an independent \((r,p,\mathcal{X})\)-CNB distributed ( \((r,p,\mathcal{X}')\)-CNB distributed) random variable.
    Then \(Z \preceq Z'\).
\end{lemma}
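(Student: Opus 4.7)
The plan is to reduce the claim about compound negative binomial distributions to the already-established fact that stochastic dominance is preserved by addition (\cref{lemma:stoch-dom-sum}), then average out over the mixing distribution of the negative binomial count.

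First, I would condition on the value of the negative binomial count. Let \(N \sim \mathrm{NB}(r,p)\) be coupled so that both \(Z\) and \(Z'\) use the same realization of \(N\); this is legitimate because the pair \((Z,Z')\) has the same marginal laws either way, and the statement \(Z \preceq Z'\) only concerns CDFs. Given \(N = n\), we have \(Z \mid N = n \stackrel{d}{=} \sum_{i=1}^n X_i\) with \(X_i \sim \mathcal{X}\) i.i.d., and similarly \(Z' \mid N = n \stackrel{d}{=} \sum_{i=1}^n X_i'\). A straightforward induction on \(n\) using \cref{lemma:stoch-dom-sum} shows \(\sum_{i=1}^n X_i \preceq \sum_{i=1}^n X_i'\), with the \(n=0\) base case being the deterministic random variable \(0\).

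Second, I would integrate out the value of \(N\) to obtain the unconditional dominance. For any \(t \in \mathbb{R}\), by the tower property,
\begin{align*}
    F_Z(t) = \sum_{n=0}^\infty \Pr(N=n)\, F_{\sum_{i=1}^n X_i}(t) \ge \sum_{n=0}^\infty \Pr(N=n)\, F_{\sum_{i=1}^n X_i'}(t) = F_{Z'}(t),
\end{align*}
where each pointwise inequality \(F_{\sum X_i}(t) \ge F_{\sum X_i'}(t)\) is just the CDF restatement of the conditional dominance proved in the previous paragraph. By the CDF characterization \(Y \succeq X \iff F_Y \le F_X\), this gives \(Z \preceq Z'\).

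No substantial obstacle is expected: the only subtle point is that nothing in the argument actually uses the specific form of the negative binomial distribution, so the same proof shows that any common mixture of sums preserves stochastic dominance. I would therefore state the lemma in this slightly more general form if it is convenient later, but otherwise just specialize directly to the negative binomial count in the statement. The only bookkeeping is the base case \(n=0\), where both sums are identically zero and the dominance is trivial.
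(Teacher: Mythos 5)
Your proposal is correct and matches the paper's proof: both condition on the negative binomial count, apply the sum-preservation lemma (\cref{lemma:stoch-dom-sum}) to the fixed-length conditional sums, and average the resulting CDF inequality over the distribution of \(N\). Your explicit induction on the number of summands and the trivial \(n=0\) base case simply spell out a step the paper leaves implicit.
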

\begin{proof}
    Let \(N \sim NB(r,p)\) (\(N'\)) be the negative binomial random variable serving as the upper bound of the sum in the definition of \(Z\) (\(Z'\)).
    The conditional CDF \(F_{Z' | N}\) is a fixed sum of random variables, so we can use \cref{lemma:stoch-dom-sum} to lower bound the CDF for each value of the summation variable.
    \begin{align*}
        F_{Z'}(x) = \sum_{n = 0}^{\infty} f_N(n) F_{Z' | n}(x) \le \sum_{n = 0}^{\infty} f_N(n) F_{Z | N}(x) = F_Z(x)
    \end{align*}
\end{proof}

\section{Proofs omitted from \cref{sec:distillation_setup}}\label{app:distillation-proofs}

Recall that \(S_{\mathrm{ED}} = \{s_i\}_{i\in [r]}\) is a set of \(r\) independent stabilizer generators for an \(\dsl n, k, d \dsr\) stabilizer code with encoding map \(\mathcal{D}^{-1}\).
\propDistillCorrectness*

\begin{proof}
  Let \(\{L_{i,X}, L_{i,Z}\}_{i\in [k]}\) denote a basis of independent logical operators such that under the unencoding circuit \(\mathcal{D}\), \(i \in [k]\), \(\mathcal{D}^{-1} L_{i,X}\mathcal{D} = X_i\), \(\mathcal{D}^{-1} L_{i,Z}\mathcal{D} = Z_i\), and \(j\in [r]\), \(\mathcal{D}^{-1} s_{j}\mathcal{D} = Z_{k+j}\).
  Denote the \(r\) generators of \(S_\mathrm{ED}\) on the \(A\) and \(B\) subsystems as \(S_{\mathrm{ED},A} = \{s_i \otimes I\}_{i \in [r]}\) and \(S_{\mathrm{ED},B} = \{I \otimes s_i\}_{i \in [r]}\), respectively.

  The initial joint state is a stabilizer state with stabilizer generators \(S_{\mathrm{initial}}= \{X_i\otimes X_i, Z_i\otimes Z_i\}_{i \in [n]}\).
  The products of logical operators \(\mathcal{L}_{\mathrm{Bell}} = \{L_{i,X}\otimes L_{i,X}, L_{i,Z}\otimes L_{i,Z}\}_{i \in [k]}\)  and the operators \(S_{\mathrm{ED},\mathrm{Bell}}=\{s \otimes s\}_{s \in S_{\mathrm{ED}}}\) are therefore also stabilizers of the initial joint state with eigenvalues +1.

  \(\mathcal{L}_{\mathrm{Bell}}\) commutes with \(S_{\mathrm{ED},A}\) and \(S_{\mathrm{ED},B}\), so the post-measurement state after \cref{alg:distill:measurements} has the stabilizer generators \(S_{\mathrm{meas.}} = \mathcal{L}_{\mathrm{Bell}} \sqcup (-1)^{a} S_{\mathrm{ED},A} \sqcup (-1)^{b}S_{\mathrm{ED},B}\).
  \(S_{\mathrm{ED},\mathrm{Bell}}\) also commutes with \(S_{\mathrm{ED},A}\) and \(S_{\mathrm{ED},B}\), so the set \(\mathcal{L}_{\mathrm{Bell}} \sqcup (-1)^{a} S_{\mathrm{ED},A} \sqcup S_{\mathrm{ED},\mathrm{Bell}}\) is also a valid set of stabilizer generators for the state after \cref{alg:distill:measurements}, i.e. they both generate \(\langle S_{\mathrm{meas.}} \rangle\).
  We conclude that \(a+b=0\) i.e. \(a = b\).

  The recovery operator \(R_{AB}=R\otimes R\) sends
  \begin{align}
R_{AB}(-1)^{a}S_{\mathrm{ED},A}R_{AB} &= S_{\mathrm{ED},A} \\ R_{AB}(-1)^{b}S_{\mathrm{ED},B}R_{AB} &= (-1)^{a+b}S_{\mathrm{ED},B} \\ R_{AB}\mathcal{L}_{\mathrm{Bell}}R_{AB} &= \mathcal{L}_{\mathrm{Bell}}
  \end{align}
  so the post-recovery state after \cref{alg:distill:recovery} has stabilizer generators \(\mathcal{L}_{\mathrm{Bell}} \sqcup S_{\mathrm{ED},A} \sqcup S_{\mathrm{ED},B}\).
  After \cref{alg:distill:unencode}, the final unencoded state has stabilizer generators \(\{Z_i^{(A)}\otimes Z_i^{(B)}, X_i^{(A)}\otimes X_i^{(B)}\}_{i \in [k]}\sqcup \{Z_{k+j}^{(A)}\}_{j \in [r]} \sqcup \{Z_{k+j}^{(B)}\}_{j \in [r]}\), so the first \(k\) qubits are the desired Bell pairs.
\end{proof}

\propDistillErrorCorr*
\begin{proof}
  We proceed by a similar calculation to \cref{prop:correctness} with all stabilizer generators conjugated by \(E^{(B)}\).
  Let \(\sigma'\in \F^r\) be the syndrome of \(E^{(B)}\) i.e. \((-1)^{\sigma'}S_{\mathrm{ED},B}= E^{(B)} S_{\mathrm{ED},B} E^{(B)}\).
  The initial state has stabilizer generators \(E^{(B)}S_{\mathrm{initial}} E^{(B)}\), which include \(E^{(B)} \mathcal{L}_{\mathrm{Bell}}E^{(B)}\) and \((-1)^{\sigma'} S_{\mathrm{ED},\mathrm{Bell}} \).
  The operators \( S_{\mathrm{ED},B}\) are measured, but the stabilizer generators are \((-1)^{\sigma'} S_{\mathrm{ED},\mathrm{Bell}} \), so the measurement result is \(b=a+\sigma'\) i.e. \(\sigma' = \sigma\).
  Furthermore, the post recovery state after \cref{alg:distill:recovery} has stabilizer generators \(E^{(B)}\mathcal{L}_{\mathrm{Bell}}E^{(B)} \sqcup S_{\mathrm{ED},A} \sqcup (-1)^{\sigma}S_{\mathrm{ED},B}\).
\end{proof}

\propOutputErrorRate*
\begin{proof}
  Let \(E\) be the input error.
  Since the code is distance-2, for there to be an output and for the output to contain an error, we must have \(|E| \ge 2\).
  We can use this fact and Bayes' rule to compute bounds on the appropriate conditional probability
  \begin{align}
    p_{out} & \le \Pr(|E| \ge 2 \mid \lnot FAIL) \\
            & \le \frac{\Pr(|E| \ge 2)}{\Pr(\lnot FAIL)}\\
            & \le \frac{\Pr(|E| \ge 2)}{\Pr(|E| = 0)}\\ 
            &= (1-p)^{-n}\left(1-(1-p)^{n}-n p (1-p)^{n-1}\right) \\ 
            &= \left((1-p)^{-n}-1-n \frac{p}{1-p}\right) \label{prop:ed-distillation-error-rate:eq2}\\
            &\le \left(\exp\left[n \frac{p}{1-p}\right]-1-n \frac{p}{1-p}\right) \\
            &\le \left( n \frac{p}{1-p} \right)^2,
  \end{align}
  where we have used \cref{lemma:1mx-bound} to bound \cref{prop:ed-distillation-error-rate:eq2} and \(p < 1/(2n)\), \(n\geq 4\) \(\implies n \frac{p}{1-p} \le 1\) to apply \cref{lemma:expm1mx-quadratic} in the final line.
\end{proof}

\propErrorDetectionOverhead*
\begin{proof}
  The probability that the probabilistic protocol fails satisfies \(1-p_{\mathrm{fail}} \ge (1-p)^n\), so, by \cref{prop:retry-count}, the average attempt count \(T\) is \(\expect{T} \le (1-p)^{-n}\).
  The output is \(k=n-2\) Bell pairs and each attempt consumes \(n\) Bell pairs.
  \begin{align}
    \expect{\frac{N}{k}} &= \expect{T}\frac{n}{n-2}\\
    &\le (1-p)^{-n} \frac{n}{n-2}\\
    &\le (3n p + 1) \frac{n}{n-2},
  \end{align}
   where we have applied \cref{lemma:1mx-linear-bound} in the last inequality to simplify the bound to be linear in \(p\).
 \end{proof}

\section{BDSW-2EPP}
\label{app:2EPP}
Here, we provide a proof that the BDSW-2EPP scheme of \cite{bennett1996mixed} reduces the error rate for \(p\)-depolarizing noise with \(p \in (0, 1/2)\).
We omit an explicit proof that repeated application of this procedure results in an error rate converging to zero; however, the bound in \cref{thm:2-epp:error-reduction} is sufficient to imply this.

\begin{proposition}[BDSW-2EPP \cite{bennett1996mixed}]\label{prop:2-epp}
  Consider an error operator \(E\in \mathcal{P}^2\) with the support on each qubit drawn from \((p_X,p_Y,p_Z)\)-Pauli noise i.i.d.
  If \(p_Z \le p_X,p_Y\) and \(p_X+p_Y+p_Z < 1/2\), then BDSW-2EPP applied to \(E_B\ket{\Phi}_{AB}^{\otimes 2}\) with the check \(ZZ\) results in an output \(E' \ket{\Phi}_{AB}\) where \(E'\) is distributed according to \((p'_X,p'_Y,p'_Z)\)-Pauli channel where \(p'_X+p'_Y+p'_Z < p_X+p_Y+p_Z\).
\end{proposition}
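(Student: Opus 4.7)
The plan is to apply the general QED distillation analysis from Propositions~\ref{prop:correctness} and \ref{prop:ed-distillation-ec} to the specific case of the $\dsl 2,1,1\dsr$ code with single stabilizer generator $Z_1 Z_2$, compute the output Pauli distribution explicitly, and then verify the net suppression inequality via an algebraic factorization.

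First I would record the code data: choose logical representatives $\bar{X} = X_1 X_2$ and $\bar{Z} = Z_1$. By Proposition~\ref{prop:ed-distillation-ec}, a successful round outputs a single Bell pair carrying the logical image of the Bob-side error $E^{(B)} = E_1 \otimes E_2$, and success occurs precisely when $E^{(B)}$ commutes with $Z_1 Z_2$, i.e., the two qubits carry an even number of $\{X, Y\}$ components.

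Second, I would enumerate the eight two-qubit Pauli errors with trivial syndrome and compute their logical images by symplectic pairing against $\bar{X}$ and $\bar{Z}$:
\begin{itemize}
\item $II, ZZ \mapsto \bar{I}$,
\item $IZ, ZI \mapsto \bar{Z}$,
\item $XX, YY \mapsto \bar{X}$,
\item $XY, YX \mapsto \bar{Y}$.
\end{itemize}
Writing $p_I = 1 - p_X - p_Y - p_Z$ and summing over the i.i.d.\ marginals, the unconditional probabilities of the four logical outcomes are $p_I^2 + p_Z^2$, $2 p_I p_Z$, $p_X^2 + p_Y^2$, and $2 p_X p_Y$. Dividing by the success probability $(p_I + p_Z)^2 + (p_X + p_Y)^2$ gives the conditional output distribution and in particular confirms that $E'$ is already Pauli (no twirling needed).

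The remaining task is the strict inequality
\begin{align*}
p'_X + p'_Y + p'_Z = \frac{(p_X+p_Y)^2 + 2 p_I p_Z}{(p_I+p_Z)^2 + (p_X+p_Y)^2} < p_X + p_Y + p_Z.
\end{align*}
Substituting $x = p_X + p_Y$ and $y = p_Z$ (so $p_I = 1 - x - y$ and the hypotheses read $2y \le x$, $x + y < 1/2$), clearing denominators, and expanding, the claim reduces to the polynomial inequality
\begin{align*}
x(1-x)(1-2x) - y\bigl(1 - 2x^2 - 2y\bigr) > 0.
\end{align*}
The key step---which I expect to be the main obstacle---is to spot the factorization
\begin{align*}
x(1-x)(1-2x) - y\bigl(1 - 2x^2 - 2y\bigr) = \bigl[1 - 2(x+y)\bigr]\bigl[x(1-x) - y\bigr],
\end{align*}
which may be verified by direct expansion. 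The first bracket is strictly positive by $p_X + p_Y + p_Z < 1/2$, and the second is positive because $y \le x/2 < x(1-x)$ whenever $x < 1/2$, which in turn follows from $x \le x + y < 1/2$. The hypothesis $p_Z \le p_X, p_Y$ enters precisely to guarantee $y \le x/2$; without it, the $\bar{Z}$-logical contribution $2 p_I p_Z$ could outweigh the suppression of $\bar{X}$ and $\bar{Y}$ and make the total error grow.
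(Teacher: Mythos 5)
Your proposal is correct: I checked the logical-image table (it coincides with the sixteen-event enumeration in the paper's proof), the conditional output distribution $p'_I = (p_I^2+p_Z^2)/A$, $p'_X=(p_X^2+p_Y^2)/A$, $p'_Y = 2p_Xp_Y/A$, $p'_Z = 2p_Ip_Z/A$ with $A=(p_I+p_Z)^2+(p_X+p_Y)^2$, and the claimed factorization $x(1-x)(1-2x)-y(1-2x^2-2y)=\left[1-2(x+y)\right]\left[x(1-x)-y\right]$, all of which hold. The overall strategy is the same as the paper's (enumerate the undetected two-qubit Paulis, pass to the post-selected distribution, then prove an algebraic inequality), but your finish differs in a useful way: the paper reparametrizes via $q=p_I+p_Z=\tfrac12+\epsilon$ and establishes $p'_I-p_I>0$ through the identity $p'_I-p_I=(p_Z-\epsilon)(4p_Z+4\epsilon^2-1)/(1+4\epsilon^2)$ followed by a chain of bounds, whereas you reduce directly to a two-variable polynomial inequality in $x=p_X+p_Y$, $y=p_Z$ and factor it, so that the hypotheses enter transparently ($p_X+p_Y+p_Z<1/2$ makes the first bracket positive, $p_Z\le p_X,p_Y$ gives $y\le x/2<x(1-x)$ for the second). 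Your route is arguably cleaner and easier to verify. The only caveat, shared implicitly by the paper's proof (which needs $\epsilon<1/2$ strictly), is that strictness of the second bracket requires $x=p_X+p_Y>0$; in the completely noiseless case $p_X=p_Y=p_Z=0$ the strict inequality in the statement is vacuously unattainable, so it is worth stating that the noise is assumed nontrivial.
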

\begin{proof}
  Define \(p_I= 1-p_X-p_Y-p_Z\).
  After checking and postselecting for bitflips with the operator \(ZZ\), the output is in the codespace, but potentially differs from the desired state by an operator \(L \in \mathcal{P}\).
  The proof proceeds by enumeration of the 16 events in the table.
  \begin{table}[h]
    \centering
    \begin{tabular}{|c|c|c|c|}\hline
      \(E\)  & \(\Pr(E)\) & FAIL  & \(L\)\\\hline
      \(II\) & \(p_I^2 \) & N & \(I\)\\\hline
      \(IX\) & \(p_Ip_X\) & Y & \\\hline
      \(IY\) & \(p_Ip_Y\) & Y & \\\hline
      \(IZ\) & \(p_Ip_Z\) & N & \(Z\)\\\hline
      \(XI\) & \(p_Ip_X\) & Y & \\\hline
      \(XX\) & \(p_X^2\)  & N & \(X\)\\\hline
      \(XY\) & \(p_Xp_Y\) & N & \(Y\)\\\hline
      \(XZ\) & \(p_Xp_Z\) & Y & \\\hline
      \(YI\) & \(p_Yp_I\) & Y & \\\hline
      \(YX\) & \(p_Xp_Y\) & N & \(Y\)\\\hline
      \(YY\) & \(p_Y^2\)  & N & \(X\)\\\hline
      \(YZ\) & \(p_Yp_Z\) & Y & \\\hline
      \(ZI\) & \(p_Zp_I\) & N & \(Z\)\\\hline
      \(ZX\) & \(p_Xp_Z\) & Y & \\\hline
      \(ZY\) & \(p_Yp_Z\) & Y & \\\hline
      \(ZZ\) & \(p_Z^2\)  & N & \(I\)\\\hline
    \end{tabular}
    \label{tab:2epp-events}
  \end{table}
  Summing rows, in which FAIL was not output, we define
  \begin{align}
    A &= \Pr(\lnot FAIL)\\
      &= p_I^2+p_X^2+p_Y^2+p_Z^2+2p_Ip_Z+2p_Xp_Y\\
      &= (p_I+p_Z)^2+(p_I+p_Z-1)^2
  \end{align}
  For \(\mathcal{O} \in \mathcal{P}\), let \(p'_{\mathcal{O}}=\Pr(L=\mathcal{O}|\lnot FAIL)\).
  The output has errors according to a \((p'_X,p'_Y,p'_Z)\)-Pauli noise channel with
  \begin{align*}
    \Pr(\lnot FAIL) &= A \\
    p'_I &= A^{-1}(p_I^2 + p_Z^2) \\
    p'_X &= A^{-1}(p_X^2+p_Y^2) \\
    p'_Y &= A^{-1}2p_Xp_Y \\
    p'_Z &= A^{-1}2p_Ip_Z \\
  \end{align*}

  We define \(q = p_I + p_Z\) and \(\bar{q}=1-q=p_X+p_Y\).
  If \(p_X + p_Y + p_Z < 1/2\) and \(p_Z \le p_X, p_Y\), then there exists \(\epsilon \in (0,1/2)\) such that \(q = \frac{1}{2} + \epsilon\) and \(\bar{q} = \frac{1}{2} - \epsilon\).
  The formulas then simplify to the following:
  \begin{align*}
    A &= q^2+\bar{q}^2 \\
    q'\equiv p_I' + p'_Z = \frac{q^2}{q^2+\bar{q}^2} &= \frac{1}{2} + \frac{2 \epsilon}{1+4\epsilon^2}\\
    \bar{q}'\equiv p_X' + p'_Y = \frac{\bar{q}^2}{q^2+\bar{q}^2} &= \frac{1}{2} - \frac{2 \epsilon}{1+4\epsilon^2}\\
    p'_Z        = \frac{2(q-p_Z)p_Z}{q^2+\bar{q}^2} &= 2 p_Z \frac{1- 2p_Z +2\epsilon}{1+4\epsilon^2}
  \end{align*}

  We proceed with the following lower bound using \(p_Z \le p_X, p_Y \implies p_Z \le 1/4 - \epsilon/2\).
  \begin{align}
      p_I'-p_I &= (q' - p_Z') - (q - p_Z) \\
      &= \left(\frac{2 \epsilon}{1+4\epsilon^2} - \left(2 p_Z \frac{1- 2p_Z +2\epsilon}{1+4\epsilon^2}\right)\right) - (\epsilon - p_Z) \\
      &= \frac{(p_Z-\epsilon)(4\epsilon^2 + 4p_Z -1)}{1+4 \epsilon^2} \\
      &\ge -\epsilon \frac{(4\epsilon^2 + 4p_Z -1)}{1+4 \epsilon^2} \\
      &\ge -\epsilon \frac{(4\epsilon^2 + (2\epsilon-1) -1)}{1+4 \epsilon^2} \\
      &= \frac{2\epsilon - 2 \epsilon^2 - 4 \epsilon^3}{1+4\epsilon^2} \label{thm:2-epp:error-reduction}\\
      &> 0
  \end{align}
\end{proof}

\section{Miscellaneous Bounds}
In this section, we include miscellaneous bounds that are needed in some of the proofs of the main results.
\begin{lemma}\label{lemma:1mx-bound}
  For \(n \ge 1\), \(x \in (0,1)\),
  \begin{align}
    (1-x)^{-n} \le \exp \left(n \frac{x}{1-x}\right)
  \end{align}
\end{lemma}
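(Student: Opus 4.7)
The plan is to reduce to a single-variable inequality by taking logarithms. Taking $\log$ of both sides and dividing by $n$ (which is legal since $n \ge 1 > 0$), the claim is equivalent to
\begin{align*}
  -\log(1-x) \le \frac{x}{1-x} \quad \text{for } x \in (0,1),
\end{align*}
and this no longer depends on $n$, so the dependence on $n$ is a red herring.

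I would prove this one-dimensional inequality in one of two ways. The slickest route is a termwise comparison of power series: on $(0,1)$ one has the absolutely convergent expansions
\begin{align*}
  -\log(1-x) = \sum_{k=1}^\infty \frac{x^k}{k}, \qquad \frac{x}{1-x} = \sum_{k=1}^\infty x^k,
\end{align*}
and since $1/k \le 1$ for every $k \ge 1$, the inequality follows coefficient by coefficient.

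Alternatively (and this will be the fallback if I want a self-contained calculus proof), I would set $f(x) = \tfrac{x}{1-x} + \log(1-x)$, observe that $f(0) = 0$, and compute
\begin{align*}
  f'(x) = \frac{1}{(1-x)^2} - \frac{1}{1-x} = \frac{x}{(1-x)^2} \ge 0
\end{align*}
for $x \in [0,1)$, so $f$ is non-decreasing on $[0,1)$ and hence $f(x) \ge f(0) = 0$ throughout, giving the desired inequality. I do not expect any serious obstacle: both arguments are one or two lines and the only subtlety is remembering that the logarithm and geometric series converge on $(0,1)$, which matches the stated range of $x$.
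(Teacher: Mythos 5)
Your proposal is correct and matches the paper's argument: the paper's proof consists of exactly the reduction you describe, citing the inequality \(-\log(1-x) \le \frac{x}{1-x}\) (which follows upon exponentiating and raising to the \(n\)-th power). Your power-series and calculus arguments simply fill in the elementary verification of that inequality, which the paper leaves implicit.
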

\begin{proof}
  From \(-\log(1-x) \le \frac{x}{1-x}\).
\end{proof}

\begin{lemma}\label{lemma:1mx-linear-bound}
  For \(x\in (0,1/n)\), and \(n\ge 4\)
  \begin{align}
    (1-x)^{-n} \le 3 n x + 1
  \end{align}
\end{lemma}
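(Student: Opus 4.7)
My approach is to exploit the convexity of $g(x) = (1-x)^{-n}$ on $[0,1)$. Since a convex function lies below its chord on any closed subinterval, applying this on $[0, 1/n]$ immediately gives
\[
(1-x)^{-n} \;\le\; 1 \;+\; nx\bigl((1-1/n)^{-n} - 1\bigr)
\]
for all $x \in [0, 1/n]$. Comparing with the desired bound $1 + 3nx$, it therefore suffices to prove $(1-1/n)^{-n} \le 4$ for $n \ge 4$, equivalently $(1-1/n)^n \ge 1/4$.

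For the latter, I will invoke the standard monotonicity fact that $(1-1/n)^n$ is nondecreasing in $n$ with limit $1/e$. Thus the minimum over $n \ge 4$ is attained at the base case $n=4$, where $(3/4)^4 = 81/256 > 1/4$ (since $4 \cdot 81 = 324 > 256$). Chaining with the convexity inequality produces the claimed linear bound.

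An alternative route would be to start from Lemma \ref{lemma:1mx-bound}, which gives $(1-x)^{-n} \le \exp(nx/(1-x))$, bound $nx/(1-x) \le (4/3)nx$ using $x \le 1/n$ and $n \ge 4$, and then verify the inequality $e^{4u/3} \le 1 + 3u$ for $u \in [0,1]$ by checking the endpoints of the function $u \mapsto 1 + 3u - e^{4u/3}$. This works but requires more calculus than the chord argument. Either way, there is no real obstacle: the entire content of the lemma is that the constant $3$ on the right-hand side is loose enough to absorb the $n=4$ case, where $(1-1/n)^{-n} = 256/81$ is already strictly less than $4$ but larger than the asymptotic value $e$. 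The convexity approach makes this observation quantitative in one line.
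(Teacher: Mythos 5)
Your proof is correct and takes essentially the same route as the paper: both arguments use convexity of $(1-x)^{-n}$ to bound the function by its chord on $[0,1/n]$ and then control the endpoint value $(1-1/n)^{-n}$ via the worst case $n=4$, where it equals $256/81$. The only cosmetic difference is that you round this up to $4$ (giving slope $3n$ directly) while the paper keeps the sharper constant $\frac{175}{81}n$ before relaxing to $3n$.
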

\begin{proof}
  The function \((1-x)^{-n}\) is convex on the interval \((-\infty,1)\), so we can use Jensen's inequality for the interval \((0,1/n)\): \(f(x) \le f(0) + x \frac{f(1/n) - f(0)}{1/n}\).
  \begin{align}
    (1-x)^{-n}  &\le 1 + x \frac{(1-1/n)^{-n}-1}{1/n}\\
               &\le 1 + \frac{175}{81} n x \\
               &\le 1 + 3 n x
  \end{align}
  where we have used that \((1-1/n)^{-n} \le \frac{256}{81}\) for \(n \ge 4\).
\end{proof}

\begin{lemma}\label{lemma:expm1mx-quadratic}
  For \(x \in [0,1]\),
  \begin{align}
    e^x-1-x \le (e-2)x^2
  \end{align}
\end{lemma}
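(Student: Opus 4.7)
The plan is to prove the inequality by expanding both sides as power series in $x$ and comparing coefficients term by term. Define $h(x) = (e-2)x^2 - (e^x - 1 - x)$; the goal is to show $h(x) \geq 0$ on $[0,1]$.

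First I would write the Taylor series $e^x - 1 - x = \sum_{k=2}^{\infty} x^k/k!$, and use the identity $e - 2 = \sum_{k=2}^{\infty} 1/k!$, so that
\begin{align*}
h(x) = \sum_{k=2}^{\infty} \frac{x^2 - x^k}{k!}.
\end{align*}
The $k=2$ term vanishes, leaving $h(x) = \sum_{k=3}^{\infty} (x^2 - x^k)/k!$. For $x \in [0,1]$ and $k \geq 3$, the factor $x^2 - x^k = x^2(1 - x^{k-2})$ is non-negative, so every summand is non-negative and hence $h(x) \geq 0$. This gives the desired bound.

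An alternative, which I would mention as a sanity check, is to note that both sides agree at $x=0$ and $x=1$ (both differences are zero), which motivates looking at the ratio $g(x) = (e^x - 1 - x)/x^2$ (with $g(0) = 1/2$ by L'Hopital or the series expansion). The series shows $g(x) = 1/2 + x/6 + x^2/24 + \cdots$ is monotonically increasing on $[0,1]$, attaining its maximum $e-2$ at $x=1$; this gives the same conclusion by a slightly different route. The main (mild) obstacle is just being careful at $x=0$, where the ratio formulation has an apparent singularity resolved by the power series; the termwise comparison above avoids this issue entirely and is the cleanest route.
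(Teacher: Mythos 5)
Your termwise series comparison is correct and is essentially the paper's approach: both expand \(e^x-1-x=\sum_{k\ge 2}x^k/k!\) and compare the tail \(\sum_{k\ge 3}x^k/k!\) against \((e-\tfrac{5}{2})x^2=\sum_{k\ge 3}x^2/k!\). In fact, your version is the correct form of that argument: the paper's written proof bounds \(x^k\le 1\) rather than \(x^k\le x^2\), which leaves \(\bigl(e-\tfrac{5}{2}\bigr)(1-x^2)\ge 0\) and so does not actually deliver the claimed \(\le 0\), whereas your observation \(x^2-x^k=x^2(1-x^{k-2})\ge 0\) for \(k\ge 3\), \(x\in[0,1]\) is exactly the bound needed to close that step.
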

\begin{proof}
  \begin{align}
    (e^x-1-x)-(e-2)x^2 &=\sum^{\infty}_{k=3}\frac{x^k}{k!} - \left(e-\frac{5}{2}\right)x^{2} \\
                       &\le\sum^{\infty}_{k=3}\frac{1}{k!} - \left(e-\frac{5}{2}\right)x^{2} \\
                       &= \left( e-\frac{5}{2}\right) (1-x^2) \\
                       &\le 0
  \end{align}
\end{proof}

\begin{lemma}\label{lemma:xlikelihood-sqr-upper}
  For \(\epsilon \in (0,1)\), \(x \in (0,\epsilon)\),
  \begin{align}
    \frac{x}{1-x} \le \frac{x}{1-\epsilon}
  \end{align}
\end{lemma}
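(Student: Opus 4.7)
The plan is to prove the inequality by a direct monotonicity argument on the denominator, since no cancellation or clever manipulation is needed. First I would observe that the hypothesis $x < \epsilon$ together with $\epsilon < 1$ implies both $1-x$ and $1-\epsilon$ are strictly positive, with $1-x > 1-\epsilon > 0$. Taking reciprocals of a pair of positive numbers reverses the inequality, so $\frac{1}{1-x} \le \frac{1}{1-\epsilon}$.

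Next I would multiply both sides of this reciprocal inequality by $x$, which is nonnegative by hypothesis ($x \in (0,\epsilon)$), so that the direction of the inequality is preserved. This immediately gives $\frac{x}{1-x} \le \frac{x}{1-\epsilon}$, which is the claimed bound.

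The statement is essentially a one-line monotonicity observation, so there is no real obstacle to overcome. The only mild subtlety worth noting explicitly in the write-up is that one must confirm $1-\epsilon > 0$ (hence the hypothesis $\epsilon < 1$ rather than merely $\epsilon \le 1$) so that the reciprocal is well defined and that $x \ge 0$ to preserve the inequality direction under multiplication; both of these come directly from the stated hypotheses. I would therefore present the proof in two short lines: reciprocate, then multiply by $x$.
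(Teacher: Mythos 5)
Your proof is correct: from $0 < x < \epsilon < 1$ you get $1-x > 1-\epsilon > 0$, so taking reciprocals and multiplying by $x > 0$ gives the bound, which is exactly the standard monotonicity argument; the paper itself states this lemma without proof, treating it as immediate in just this way. No gaps to report.
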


\begin{lemma}\label{lemma:error-recurrence}
  For \(c\in \mathbb{R}\), \(m \in \mathbb{N}\), the recurrence relation \(a_{k}=c k^{2^m} a_{k-1}^2\) has the closed form solution
  \begin{align}
    a_k = \left(\prod_{i=1}^k i^{2^{k+m-i}}\right) c^{2^k-1}a_0^{2^k}
  \end{align}
\end{lemma}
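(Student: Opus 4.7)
The plan is to verify the closed form by a straightforward induction on $k$, with no hidden subtleties beyond careful bookkeeping of the nested exponents.

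For the base case $k=0$, the right-hand side of the claimed formula is an empty product times $c^{2^0-1} a_0^{2^0} = c^0 a_0 = a_0$, which matches trivially.

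For the inductive step, I would assume
\begin{align*}
a_{k-1} = \left(\prod_{i=1}^{k-1} i^{2^{k-1+m-i}}\right) c^{2^{k-1}-1} a_0^{2^{k-1}}
\end{align*}
and substitute into $a_k = c k^{2^m} a_{k-1}^2$. Squaring doubles every exponent, turning each $2^{k-1+m-i}$ into $2^{k+m-i}$ inside the product, turning $c^{2^{k-1}-1}$ into $c^{2^k-2}$, and turning $a_0^{2^{k-1}}$ into $a_0^{2^k}$. The leading $c \cdot k^{2^m}$ then combines as $c \cdot c^{2^k-2} = c^{2^k-1}$, and the factor $k^{2^m}$ is exactly the $i=k$ term of the product $\prod_{i=1}^{k} i^{2^{k+m-i}}$ since $2^{k+m-k}=2^m$. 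Assembling these pieces reproduces the claimed formula at index $k$.

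The only potentially error-prone step is the index arithmetic in the exponent tower, specifically verifying that squaring the exponent $2^{k-1+m-i}$ yields $2^{k+m-i}$ and that the $k^{2^m}$ prefactor slots in as the missing $i=k$ term; both are routine and the induction closes immediately. There is no real obstacle, so the proof should amount to these few lines of algebra presented in order: state the hypothesis, square it, simplify the $c$-exponent via $1 + 2(2^{k-1}-1) = 2^k - 1$, and identify $k^{2^m}$ with the new top-index term in the product.
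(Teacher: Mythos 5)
Your proof is correct and matches the paper's own argument: both proceed by the same direct induction, squaring the hypothesis so that each exponent $2^{k-1+m-i}$ becomes $2^{k+m-i}$, combining $c\cdot c^{2^k-2}=c^{2^k-1}$, and absorbing $k^{2^m}$ as the $i=k$ term of the product. The only cosmetic difference is that you anchor the induction at $k=0$ (empty product) while the paper starts at $k=1$.
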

\begin{proof}
  We proceed inductively.
  \begin{align}
    a_1 = \left(1^{2^{1+m-1}}\right) c^{2-1}a_0^{2} = c a_0^2
  \end{align}
  and
  \begin{align}
    a_k &= c k^{2^m} a_{k-1}^2 \\
        &= c k^{2^m} \left(\left(\prod_{i=1}^{k-1} i^{2^{k-1-i+m}}\right) c^{2^{k-1}-1}a_0^{2^{k-1}}\right)^2\\
        &= c k^{2^m} \left(\prod_{i=1}^{k-1} i^{2^{k-i+m}}\right) c^{2^k-2}a_0^{2^k}\\
        &= \left(\prod_{i=1}^k i^{2^{k-i+m}}\right) c^{2^k-1}a_0^{2^k}\\
  \end{align}
\end{proof}

\begin{lemma}\label{lemma:error-recurrence-product}
  For \(c \in \mathbb{R}_+\)
  \begin{align}
    \prod_{i=1}^k i^{2^{c-i}} \le 2^{2^c}
  \end{align}
\end{lemma}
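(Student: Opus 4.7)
The plan is to reduce the multiplicative inequality to an additive one by taking logarithms, and then bound the resulting series by a standard arithmetico--geometric sum.

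First I would take $\log_2$ of both sides, which turns the claim into
\begin{align*}
\sum_{i=1}^{k} 2^{c-i} \log_2 i \;\le\; 2^c,
\end{align*}
or equivalently, after dividing through by $2^c$,
\begin{align*}
\sum_{i=1}^{k} \frac{\log_2 i}{2^i} \;\le\; 1.
\end{align*}
Because every term on the left is non-negative, it suffices to prove the infinite version $\sum_{i=1}^{\infty} \log_2(i)/2^i \le 1$; this gives a uniform bound in $k$ and makes the positivity assumption on $c$ irrelevant (the $c$ dependence factors out cleanly).

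The key inequality I would use is that for every positive integer $i$,
\begin{align*}
\log_2 i \;\le\; i-1,
\end{align*}
which is immediate for $i=1,2$ (equality) and follows for $i \ge 3$ since $2^{i-1} \ge i$. Substituting this bound,
\begin{align*}
\sum_{i=1}^{\infty} \frac{\log_2 i}{2^i} \;\le\; \sum_{i=1}^{\infty} \frac{i-1}{2^i} \;=\; \sum_{j=0}^{\infty} \frac{j}{2^{j+1}} \;=\; \frac{1}{2}\cdot 2 \;=\; 1,
\end{align*}
using the standard identity $\sum_{j=0}^{\infty} j x^j = x/(1-x)^2$ at $x=1/2$. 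Exponentiating recovers the desired bound $\prod_{i=1}^{k} i^{2^{c-i}} \le 2^{2^c}$.

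There is no real obstacle here: the only non-obvious move is recognizing that the elementary bound $\log_2 i \le i-1$ is sharp enough to give a clean tail estimate, which happens to sum to exactly $1$. One could instead try to prove the finite statement by induction on $k$, but that would require carrying around partial-sum tail terms; the infinite-sum route above is cleaner and tight enough.
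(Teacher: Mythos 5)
Your proof is correct and follows essentially the same route as the paper: take logarithms, factor out $2^c$, and bound the resulting series $\sum_{i\ge 1} 2^{-i}\log_2 i$ by $1$. The only difference is that you justify this last bound explicitly via $\log_2 i \le i-1$ and the arithmetico--geometric sum, whereas the paper simply asserts it, so your write-up is if anything slightly more complete.
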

\begin{proof}
  \begin{align}
    \log_2\left(\prod_{i=1}^k i^{2^{c-i}}\right) &= \sum_{i=1}^k2^{c-i}\log_2 i \\
                                                 &\le 2^c\sum_{i=1}^\infty 2^{-i}\log_2 i \\
                                                 &\le 2^c
  \end{align}
\end{proof}

\section{Modified Depth-First Search Algorithm For Finding Best Overhead Sequence}
\label{app:DFS_algorithm}

\begin{algorithm}
\caption{Find Best Overhead Sequence (Depth-First Search)}
\label{alg:tree_search}
\begin{algorithmic}[1]
\State \textbf{Require:} $\text{codes}, \text{max\_levels}, p_{\text{target}}, M_{\text{max}}, p_0$
\State \textbf{Initialize:} $\text{Viable\_Sequences} \gets []$
\State \textbf{Call:}  $\text{DFS\_search}(1, \text{current\_sequence}=[])$
\State \textbf{Return:} Sequence with best overhead in $\text{Viable\_Sequences}$
\\
\State \textbf{Procedure} $\text{DFS\_search}(i, \text{current\_sequence})$
\If{$i = \text{max\_levels}$}
    \State \Return
\EndIf
\For{code $n, k, d$ in $\text{codes}$}
    \State Append $\text{code}$ to $\text{current\_sequence}$
    \State Evaluate code: Calculate output error rate $p_i$ and memory $M_i$
    \If{$p_{i} > p_{i-1}$ or $M_{i} > M_{\text{max}}$}
        \State \textbf{Continue}: Discard this branch and continue to the next node at this level.
    \ElsIf{$\frac{p_{i}}{K_{i}} < p_{\text{target}}$}
        \State \textbf{Record and Continue}: Add sequence into $\text{Viable\_Sequences}$ and continue to the next node at this level.
    \Else
        \State \textbf{Explore Further}: $\text{DFS\_search}(i + 1, \text{current\_sequence})$
    \EndIf
\EndFor
\end{algorithmic}
\end{algorithm}

\section{Memory overhead - comparison between schemes for distributed QC}
\label{app:memory_overhead}
To estimate the memory requirements for our distillation scheme with state injection, we need to find the required distance for each surface code patch such that no logical level gate fails.
We use the relation between physical and logical errors for the rotated surface code based on simulations using Stim ~\cite{gidney2021stim}:
\begin{align}
    P_\textrm{fail}(p_\textrm{gate}) \sim 0.02  \Big( \frac{p_\textrm{gate}}{p_\textrm{gate}^\textrm{th}} \Big)^{d_e},
\end{align}
where $d_e = d/2$ for even distance, and $(d+1)/2$ for odd distance.
In calculating the patch size, we do not need to take into account the Bell pair infidelity, as it is included in the state injection output infidelity. Using $p_\textrm{gate}^\textrm{th}=1.1\%$ \cite{Stephens2014SurfaCecodeThresholds},
we find that $d=19$ is required to achieve logical error rate below $10^{-12}$. Therefore, the total number of qubits in each patch is $2d^2 -1 = 721$, and the total memory of qubits required for distillation can be estimated by the buffer size multiplied by each patch size.
To save more memory, the surface code distances may be changed during the distillation sequence such that small surface codes are used in the initial steps and large surface codes are used in the final steps. We leave this optimization for future work.

The resulting memory requirement for each scheme is presented in Table ~\ref{table:comparison_memory}. Note that for all distillation schemes, the values in each row are not monotonic with Bell pair infidelity, as they are calculated from a protocol optimized for communication overhead, which might be utilizing fewer logical qubits than the buffer size constraint.

To estimate the memory requirements for the lattice surgery scheme, we find the required patch size for each set of error rates accounting for both Bell pair errors and gate errors.
We make the optimistic assumption that, when used for lattice surgery, physical Bell pairs do not experience memory errors while they are waiting to be used.
Since there is no injection step, the network error rate is then equal to the error rate input to the distillation scheme.
We describe this calculation in \cref{app:overheads_additional_Bell_errors}.
In \cref{table:comparison_memory}, the memory overhead is taken to be equal to the patch size in each node.

\begin{table*}
    \centering
    \begin{tabular}{| c || c | c | c | c | c|}
    \hline
        \diagbox{Scheme}{Network error rate} & 0.1\% & 1\% & 5\% & 10\% & 15\%\\
     \hline \hline
         Distillation input error rate & 0.35\% & 1.25\% & 5.2\% & 10.2\% & 15.2\% \\
         \hline \hline
         BDSW-2EPP scheme - overhead  & 5,047 & 5,047 & 6,489 & 6,489 & 7,931 \\
         \hline
         BDSW-2EPP scheme with Y basis - memory overhead  & 4,326 & 4,326 & 5,768 & 5,768 & 6,489 \\
         \hline \hline
         Constant-overhead distillation (buffer = 10) - memory overhead & 7210  & 7210 & 7210 & 6489 & 7210 \\
         \hline
         Constant-overhead distillation (buffer = 30) - memory overhead & 21,630  & 20,909 & 21,630 & 21,630 & 21,630 \\
         \hline
         Constant-overhead distillation (buffer = 50) - memory overhead &  36,050 & 20,909 & 21,630 & 22,351 & 23,072 \\
         \hline
         Constant-overhead distillation (buffer = 100) - memory overhead & 70,658  & 62,006 & 72,100 & 64,169 & 23,072 \\
         \hline\hline
         Lattice surgery - memory overhead & 1,089  & 1,369 & 5,329 & 22,201 & 142,129 \\
         \hline \hline
    \end{tabular}
    \caption{Physical network memory requirements for different schemes for distributed quantum computation, considering encoding single logical qubits on surface codes. Lattice surgery and state injection values are estimated according to gate error rate $p_{\text{gate}}=0.1\%$ and target output error rate $p_{\text{target}} = 10^{-12}$}.
    \label{table:comparison_memory}
\end{table*}

\section{In-Depth Overview of the Distillation Pipeline}
\label{app:Distillation_pipeline_in_depth_overview}

In this section, we present a detailed analysis of the space-time cost of our entanglement distillation procedure, this time taking full account of considerations such as throughput, memory usage throughout the procedure, and parallelism.
To minimize the space-time cost and achieve good distillation throughput, we structure the distillation process as a pipeline, in which later stages of distillation are performed in parallel with further repetitions of earlier distillation stages.
This yields a slightly larger memory footprint for distillation, but reduces the overall space-time cost and ensures that all computation resources are well utilized at all times.

Our analysis here focuses primarily on the average throughput, neglecting more detailed scheduling and handling of distillation failure events and intermediate storage needed to ensure that each stage can start without waiting.
We leave detailed analysis of these aspects to future work, as they will likely be highly dependent on details of the quantum computing system.
However, we expect our conclusions to be accurate to within a small constant factor.

\subsection{Parallel Unencoding Circuit}
As shown in Fig.~\ref{fig:generic_encoding}, the QED distillation protocol consists of stabilizer projection and post-selection on the patterns on the two sides being identical, followed by unencoding.
There are multiple ways that this procedure could be performed.
Here, we focus on the basic protocol of directly running the unencoding circuit, which simultaneously unencodes stabilizers onto single qubits in a deterministic state, while also unencoding logical qubits of the distillation code into single qubits and logical Bell pairs of the distillation code into individual Bell pairs (still encoded in some inner QEC code).
We perform the unencoding circuit in place to minimize the space cost.
Other approaches with different space-time trade-offs exist, and we briefly comment on them at the end of this section.
\begin{figure}
\centering
\includegraphics[width=0.8\columnwidth]{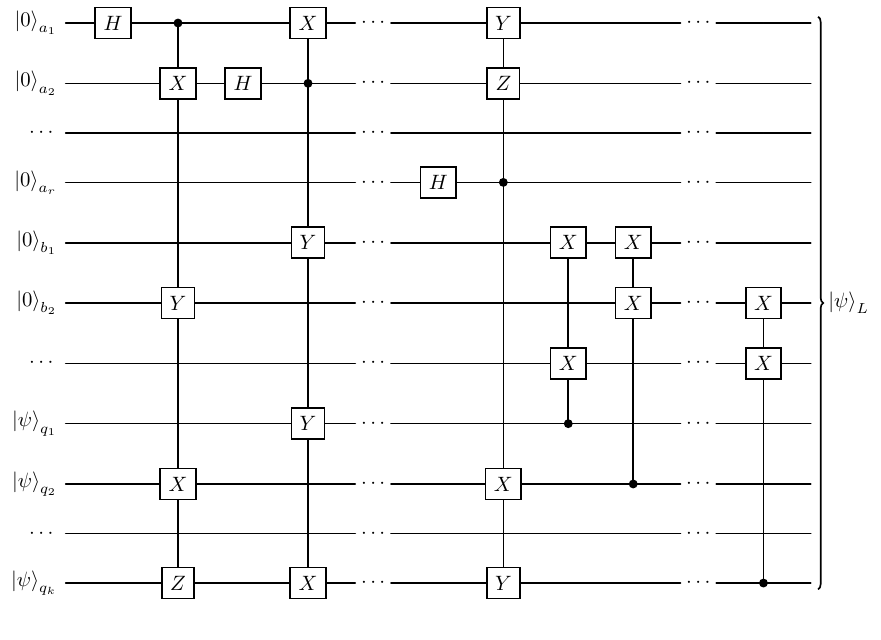}
\caption{Illustration of the structure of a generic encoding circuit and its parallelized realization in $3n-3$ two-qubit gate layers. Here, the stabilizers and logical operators have been put in a canonical form following Ref.~\cite{cleve1997efficient}, where the first $r$ qubits (labeled $a$) correspond to stabilizers that have $X$ support, the next $n-k-r$ qubits (labeled $b$) correspond to stabilizers that only have $Z$ support, and the last $k$ qubits correspond to the physical qubit state $\ket{\psi}$ that is encoded. The output is a logical encoded state $\ket{\psi}_L$. Running this circuit in reverse performs the desired unencoding.
\label{fig:generic_encoding}}
\end{figure}

The unencoding circuit we choose to run is a parallelized version of the encoding circuit described in Ref.~\cite{cleve1997efficient}, run in reverse, see Fig.~\ref{fig:generic_encoding}.
In this construction, one first puts the stabilizers and $\bar{X}$ logical representatives in a suitable canonical form.
Suppose that in this canonical form, there are $r_X$ stabilizers that have nontrivial $X$ support on some qubits.
One then applies $I+M_i$ for each of the stabilizers $M_i$ ($i=1,2,...,r_X$), which is achieved by a Hadamard gate on the $i$th qubit followed by controlled operations from the $i$th qubit to other qubits.
This is followed by controlled operations from each of the last $k$ qubits to the first $n-k$ qubits, encoding each of the basis states of those logical qubits into the corresponding logical operators.
The remaining $Z$-only stabilizers are automatically satisfied for the given initial state and commute through the circuit, thereby remaining valid stabilizers.

Running this circuit in reverse, the unencoding circuit has the following structure: in the $i$th layer, perform operations controlled by the $i$th qubit and targeting other qubits, with the middle $n-k-r_X$ layers corresponding to $Z$-only stabilizers trivial.
For simplicity, we assume that the Hadamard operations are merged into the two qubit gate when counting the number of gate layers, although we remark that this will depend on the detailed error correction scheme employed for the inner logical qubits.
The single-control, multi-target operations can be efficiently performed in a constant number of timesteps using lattice surgery~\cite{fowler2018low} or transversal CNOTs and a GHZ state.
To reduce space usage, we instead decompose it into individual two-qubit gate layers in-place with the following strategy, as is also illustrated in Fig.~\ref{fig:generic_encoding}.
Here, we denote a two-qubit gate controlled by qubit $i$ and with qubit $j$ as target as $C_{i,j}$, regardless of the basis.

\begin{enumerate}
\item[$1$.] Perform gate $C_{1,2}$.
\item[$2$.] Perform gate $C_{1,3}$.
\item[$3$.] Perform gates $C_{1,4}$ and $C_{2,3}$.
\item[$4$.] Perform gates $C_{1,5}$ and $C_{2,4}$.
\item[$5$.] Perform gates $C_{1,6}$, $C_{2,5}$, and $C_{3,4}$.
\item[...] Continue.
\item[$n-1$.] Perform gates $C_{1,n}$, $C_{2,n-1}$, ...
\item[$n$.] Perform gates $C_{2,n}$, $C_{3,n-1}$, ...
\item[$n+1$.] Perform gates $C_{2,1}$, $C_{3,n}$, ...
\item[$n+2$.] Perform gates $C_{3,1}$, $C_{4,n}$, ...
\item[...] Continue.
\item[$2n-1$.] Perform gates $C_{n,1}$, $C_{n-1,2}$, ...
\item[...] Continue.
\item[$3n-5$.] Perform gates $C_{n,n-3}$, $C_{n-1,n-2}$.
\item[$3n-4$.] Perform gate $C_{n,n-2}$.
\item[$3n-3$.] Perform gate $C_{n,n-1}$.
\end{enumerate}

It is easy to verify that this scheduling performs all desired gate operations, and in the same order as required for each qubit.
Since not all gates in the circuit are needed, this circuit depth is only an upper bound.
Inspecting Fig.~\ref{fig:generic_encoding}, we see that there are no gates between the last $k$ qubits, so the last gate we need to perform is $C_{n,n-k}$, which results in a total gate depth of $3n-2-k$.
We thus assume that the distillation procedure outputs all Bell pairs in time $(3n-2-k)T_{\text{gate}}$, regardless of the success or failure of the procedure.
Here, $T_{\text{gate}}$ is the logical gate speed.
For a given code, it may be possible to further reduce the number of layers required although we do not use this in our circuit size models.
For example, in Fig.~\ref{fig:parity_code_unencoding}, we illustrate the unencoding circuit for the $[[n,n-2,2]]$ quantum parity code, which only requires $n-2$ two-qubit gate layers for even $n$.

As a final note, we briefly comment on other methods to perform the distillation step, whose detailed analysis we leave for future work.
Instead of performing the unitary un-encoding circuit in-place in linear depth, we can use various methods for space-time trade-offs~\cite{fowler2012timeoptimal} to perform these operations in lower depth.
For example, using an ancilla GHZ state, we can apply single-control, multi-target operations~\cite{fowler2018low}, which dis-entangles the stabilizer qubits in constant depth.
The measurement results of qubits that correspond to stabilizers allow for error detection, while the unencoded logical Bell pairs, upon heralded success, can be used for further layers of distillation.
We leave more careful analysis of the scheduling and space-time tradeoffs, together with bottlenecks due to the probabilistic nature of the distillation and entanglement distribution process, to future research.

\begin{figure}
\centering
\includegraphics[width=0.5\columnwidth]{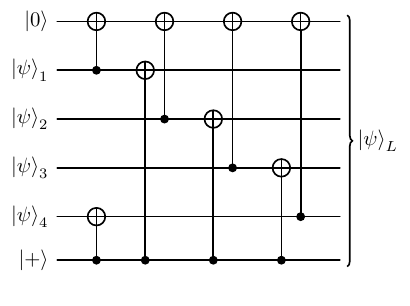}
\caption{Illustration of an (un)-encoding circuit for the [[6,4,2]] quantum parity code. The same structure can be easily generalized to any quantum parity code with an even number of qubits.}
\label{fig:parity_code_unencoding}
\end{figure}

\subsection{Space-Time Cost of a Single Distillation Stage}
We now consider the throughput and memory footprint of a single distillation stage consisting of a $[[n,k,d]]$ quantum code.
In practice, we may also be interested in the latency, i.e. the time it takes for the first high-fidelity logical Bell pair to be output from the process.
However, it is the throughput that ultimately determines the rate at which high-fidelity logical Bell pairs are produced, so we focus on this quantity.
As inputs to the analysis, we have the time between arrival of consecutive inputs $T_{\text{in}}$, the time it takes to perform the unencoding step (distillation step) $T_{\text{dis}}$, and the maximum logical memory size $B$ allocated to this stage.
Analysis of the distillation procedure then provides an estimate of $T_{\text{out}}$, the average time between consecutive outputs of the distillation procedure.
Because we are focusing on the throughput here, both $T_{\text{in}}$ and $T_{\text{out}}$ are defined as the average time \textit{per logical Bell pair}.

With these assumptions, different situations can arise (Fig.~\ref{injection_pipeline}), depending on what resource is the primary bottleneck.
\begin{itemize}
\item Slow operations ($T_{\text{dis}} >  nT_{\text{in}}$) and large space ($B \geq n\lceil \frac{T_{\text{dis}}}{nT_{\text{in}}} \rceil$), or fast operations ($T_{\text{dis}} <  nT_{\text{in}}$): In this case, the throughput of the distillation is sufficiently large, and we are bottlenecked by the average time between inputs $T_{\text{in}}$. The average time between outputs is given by
\begin{align}
T_{\text{out}}=\frac{nT_{\text{in}}}{(1-p_{\text{fail}})k}.
\end{align}
\item Slow operations ($T_{\text{dis}} >  nT_{\text{in}}$) and limited space ($B \leq n\lceil \frac{T_{\text{dis}}}{nT_{\text{in}}} \rceil$): In this case, we are bottlenecked instead by the distillation step, and because the distillation cannot keep up with the input rate, we will be discarding some fraction of the inputs in the steady state. The average time bewteen outputs is given by 
\begin{align}
T_{\text{out}}=
\left\lceil \frac{B}{n_{\text{total}}}\right\rceil\frac{T_{\text{dis}}}{(1-p_{\text{fail}})k},
\end{align}
since we can perform $B/n_{\text{total}}$ copies of distillation in parallel, each taking time $T_{\text{dis}}$ and outputting $k$ Bell pairs, and failures in the distillation will cause us to need to retry.
$n_{\text{total}}$ also accounts for possible ancilla qubits used to help with the distillation process.
\end{itemize}

For a typical distillation stage, as discussed above, we can estimate $T_{\text{dis}}=(3n-2-k)T_{\text{gate}}$.
The concatenated distillation process should ensure that earlier stages of distillation are not going to waste; therefore, the throughput of earlier stages of the pipeline should not exceed that of later stages, and ideally the throughput in all stages are matched.
For classical codes we calculate $p_{\text{fail}}$ exactly, and for quantum codes, we take the bound $1 - p_{\text{fail}} \geq (1-p_{\text{in}})^{n}$ discussed above.

Although our analysis here was formulated in terms of a single stage of distillation, we can apply the same considerations to the injection of physical Bell pairs into logical Bell pairs, setting $T_{\text{in}}=T_{\text{Bell}}$, the average time separation between consecutive physical Bell pairs (we ignore possible stochasticity in this process), $T_{\text{dis}}=T_{\text{gate}}$, assuming that state injection takes a similar amount of time as a single logical gate, which is true to within a small constant factor for both lattice surgery~\cite{fowler2018low} and transversal gate~\cite{zhou2024algorithmic} architectures.
Each state injection requires a space footprint of a single logical qubit and we set $n=1$ above.

\begin{figure}[h]
  \centering
  \includegraphics[width=\columnwidth]{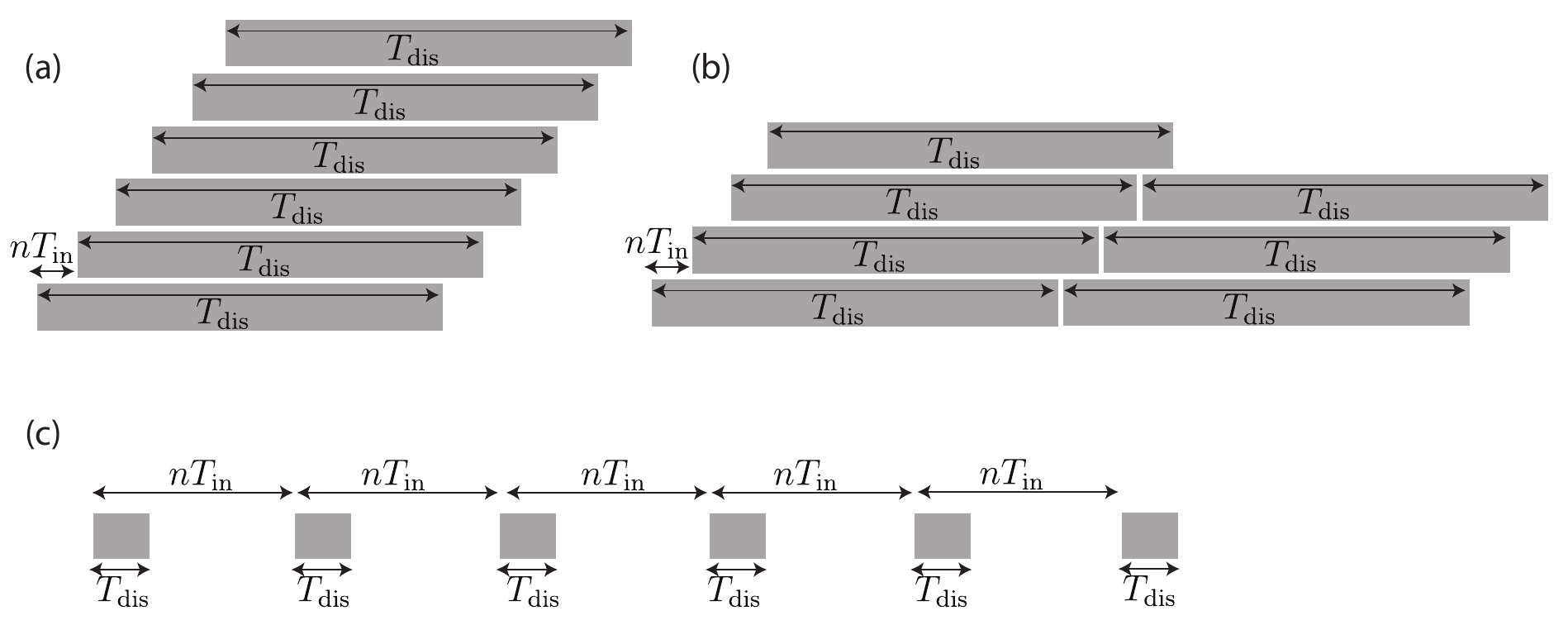}
  \caption{Illustration of the distillation throughput in different cases. (a) Slow operations, large space. (b) Slow operations, small space. (c) Fast operations or slow Bell pair rate. In (a,c), there is sufficient workspace to process all input Bell pairs, and the process is bottlenecked by the input; in (b), the distillation process cannot keep up with the input rate, and some input Bell pairs will be discarded in steady state.
  \label{injection_pipeline}
  }
\end{figure}

\subsection{Space-Time Cost of Pipelined Distillation}
In this section, we estimate the space-time cost of a distillation pipeline.
We focus on the case where the distillation throughput is limited by the incoming rate of Bell pairs.
This is likely the most relevant regime for several reasons: First, as discussed in the main text, current and projected technologies tend to have relatively slow internode entanglement distribution speeds, compared to local operation speeds; Second, later stages of distillation in our high-rate distillation scheme will naturally have multiple copies executing in parallel to avoid correlated errors from earlier stages (Fig.~\ref{fig:concat-distillation-process}), and therefore are unlikely to be the bottleneck.

For state injection, if the input Bell pair rate is slow, then we can inject them one at a time, and only need space corresponding to a single logical qubit.
Otherwise, we can parallelize so that the throughput of state injection keeps up with the input Bell pair rate.
The combined space usage of this stage is thus
\begin{align}
B_0=\max\left(\left\lceil \frac{T_{\text{inject}}}{T_{\text{Bell}}}\right\rceil, 1\right)=\left\lceil \frac{T_{\text{inject}}}{T_{\text{Bell}}}\right\rceil,
\end{align}
and we produce a logical Bell pair in time $T_{\text{Bell}}/(1-p_{\text{fail},0})$ on average.
Here, $p_{\text{fail},0}$ is the success rate of Bell state injection, and the state injection takes an amount of time comparable to the logical gate time $T_{\text{inject}}\approx T_{\text{gate}}$, which is the case for both common schemes based on lattice surgery and recent schemes based on transversal gates and algorithmic fault tolerance~\cite{li2015magic,fowler2018low,zhou2024algorithmic}.

At subsequent stages of distillation, a single copy of the distillation can take as input $n_i$ qubits and operates on it for time $(3n_i-2-k_i)T_{\text{gate}}$ to produce an output, with some failure probability $p_{\text{fail},i}$.
In order to process all the inputs from previous stages, we will need to have enough space to store the $K_{i-1}=\prod_{j=1}^{i-1}k_j$ copies that need to be distilled in parallel.
The time required to prepare the inputs, based on the available number of input physical Bell pairs and distillation rates at each level, is given by
\begin{align}
T_{\text{input},i}=n_i\prod_{j=0}^{i-1} \frac{n_j}{1-p_{\text{fail},j}},
\end{align}
with $n_0=1$.
At higher levels in the distillation, it becomes increasingly likely that the physical Bell pair production rate becomes the bottleneck.
In either case, we can estimate the number of qubits required as 
\begin{align}
B_i=\qty(\frac{(3n_i-2-k_i)T_{\text{gate}}}{T_{\text{input},i}}+1)n_i K_{i-1}.
\end{align}
This accounts for both the possible parallelization of distillation across multiple copies when the input time is shorter than the distillation time
\begin{align}
T_{\text{input},i}\leq (3n_i-2-k_i)T_{\text{gate}},
\end{align}
as well as the amount of time it takes for the workspace to become available again for accumulating qubits following a round of distillation, see also Fig.~\ref{injection_pipeline}(a,c).
The total space usage is then the sum of the space usage at all stages
\begin{align}
B_{\text{all}}=\sum_{i=0}^l B_i,
\end{align}
and we produce $\prod_{j=1}^l k_l$ Bell states every $T_{\text{Bell}}\prod_{j=0}^l\frac{n_i}{1-p_{\text{fail},i}}$.

Our analysis here has focused on the regime where the throughput is limited by the input physical Bell pair rate.
However, the analysis can be straightforwardly generalized to other cases, such as those bottlenecked by memory constraints and the throughput of the un-encoding step.
In this case, we can try to construct a well-matched pipeline, in which the throughput of each distillation stage is equal, as illustrated in Fig.~\ref{fig:example_pipeline_uniform}.

\begin{figure}[h]
\centering
\includegraphics[width=0.9\columnwidth]{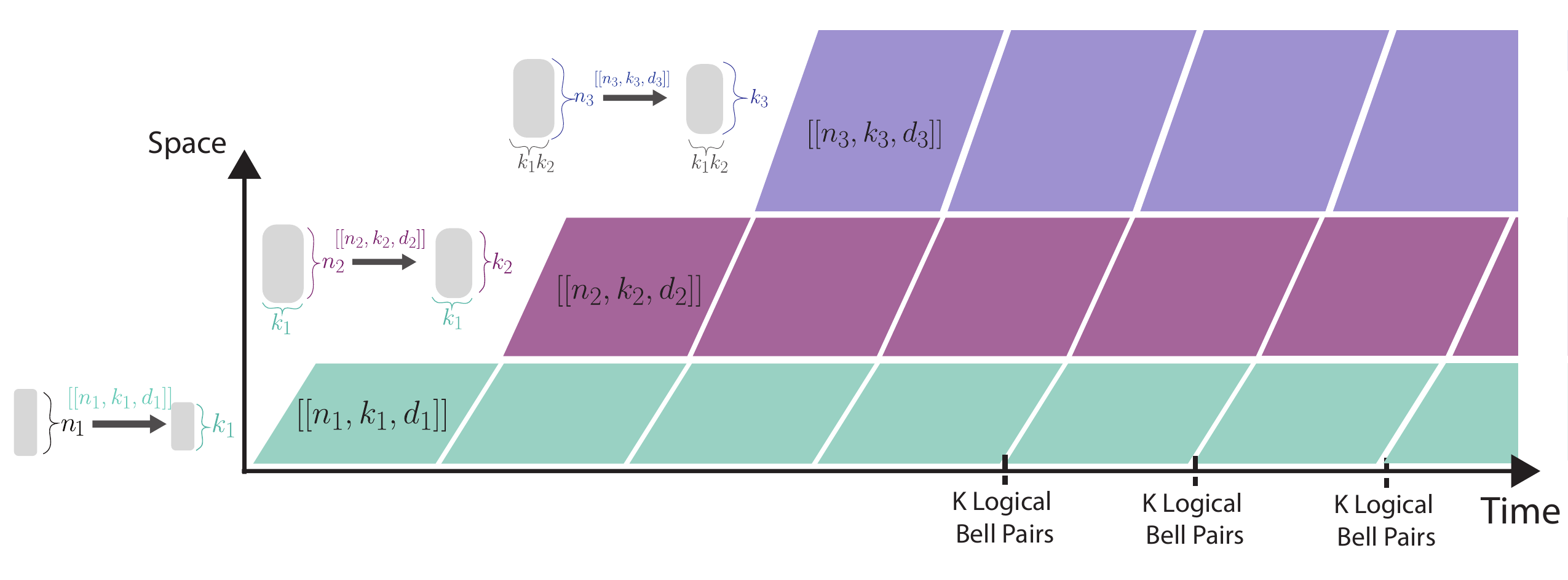}
\caption{A well-matched pipeline in which the throughput of each stage is the same, thereby ensuring that no individual stage is a bottleneck for the whole process.
\label{fig:example_pipeline_uniform}
}
\end{figure}

\section{Small Buffer Sizes in Distillation Sequences}
\label{app:small_buffer}
While our study primarily details distillation schemes with buffer sizes of 30 or more logical qubits, it is noteworthy that efficient schemes exist for smaller buffer sizes as well.
As an extreme example, Ref.~\cite{gidney2023tetrationally} shows that very high fidelity entanglement distillation is possible with very low buffer size, at the cost of overhead in the number of Bell pairs used.

Consider a scenario with a Bell pair error rate of 1\%. In such a case, an optimized distillation sequence can be formulated even with a buffer size as small as 8 qubits. This optimized sequence is as follows:
\begin{equation}
    [3,1,3]_X \rightarrow [2,1,2]_Y \rightarrow [2,1,2]_X \rightarrow [[4,2,2]].
\end{equation}
Here, the total memory required is 8 logical qubits used for the network on each node. The overhead for generating one logical gate is 25 Bell pairs, allowing for relatively high rates of distributed logical gates in systems constrained by smaller buffer sizes.

\section{Constant-overhead distillation schemes additional results}
\label{app:overheads_additional_Bell_errors}

\begin{figure}[h]
  \centering
  \includegraphics[width=0.8\columnwidth]{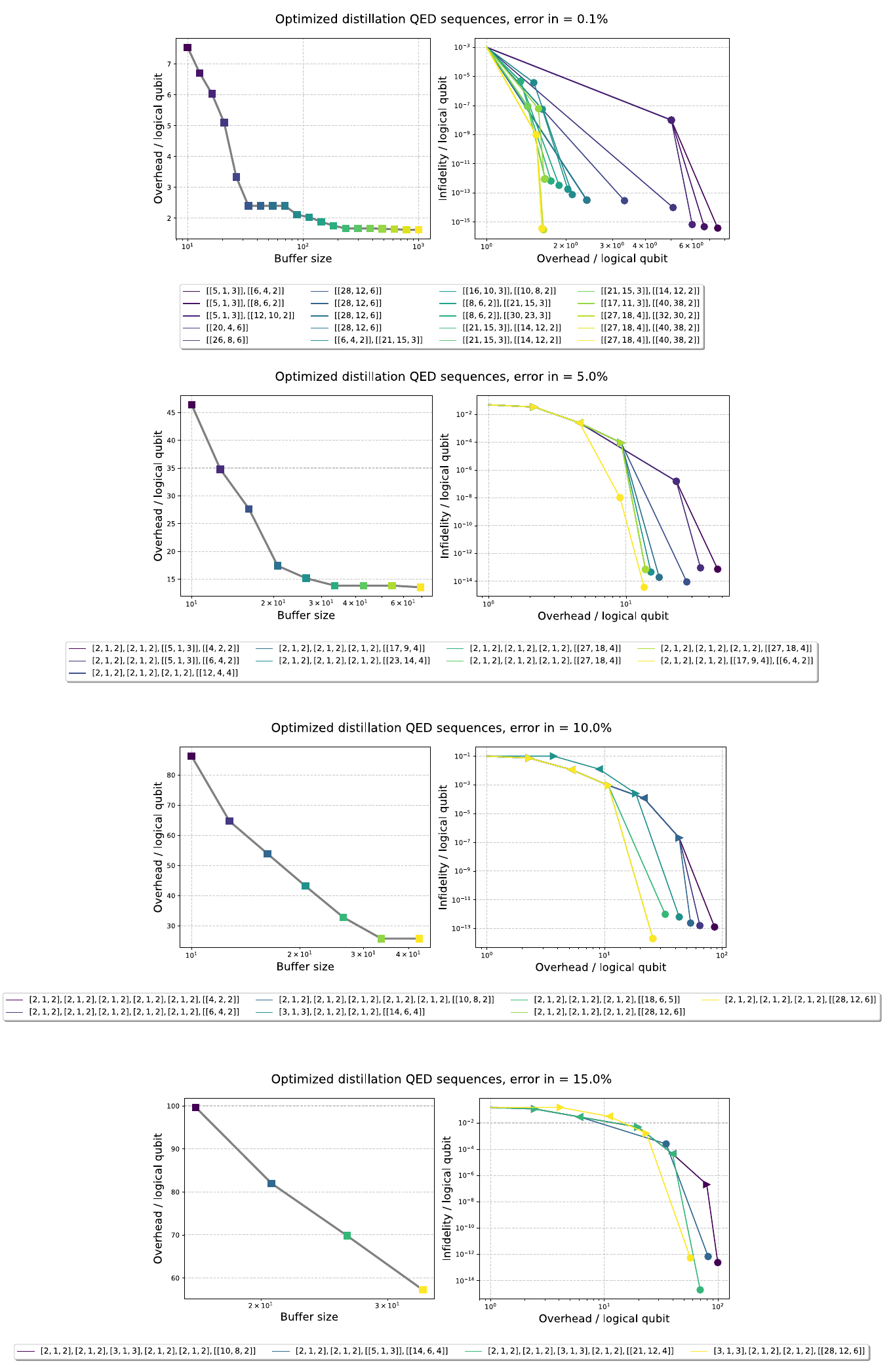}
  \caption{Optimized distillation Quantum Error Detection (QED) sequences at various Bell pair error rates. The x-axis represents the buffer size, while the y-axis shows the overhead per logical qubit. Various points on the plot correspond to different optimized sequence configurations, each uniquely identified and color-coded for clarity. Quantum codes are represented by circles, and classical codes are represented by triangles, pointing upwards, to the right, or left, corresponding to the Z, X, and Y bases of the classical codes, respectively.
  \label{sequence_p_in_many}
  }
\end{figure}

\section{Lattice Surgery Quantitative Analysis}
\label{app:lattice_surgery}
For analysis of the lattice surgery overhead, we adopt the model from \cite{ramette2023fault, sinclair2024fault} which omits single qubit errors. The logical error rate is taken to be
\begin{align}
    P_\textrm{fail}(p_\textrm{bulk},p_\textrm{boundary}) \approx 0.03\left( \frac{p_\textrm{bulk}}{p_\textrm{bulk}^\textrm{th}} \right)^{d_e} + 0.03\left( \frac{p_\textrm{boundary}}{p_\textrm{boundary}^\textrm{th}} \right)^{d_e},
\end{align}
where \(d_e\) is the effective distance of the code: \(d/2\) for even distances and \((d+1)/2\) for odd distances. The phenomenological thresholds are \(p_\textrm{boundary}^\textrm{th} = 10\%\), \(p_\textrm{bulk}^\textrm{th} = 3\%\). The prefactor 0.03 is accurate for subthreshold error rates as detailed in \cite{ramette2023fault}. However, for Bell pair error rates exceeding \(1\%\), the prefactor increases, rendering our Bell pair overhead estimation as merely a lower bound, and further distillation is necessary before lattice surgery to reduce Bell pair errors.

The bulk and boundary noise for the surface code is modeled using gate and Bell pair error rates \cite{ramette2023fault, sinclair2024fault}:
\begin{equation}
    \begin{split}
        p_{\text{bulk}} &= 2 p_{\text{gate}}, \\
        p_{\text{boundary}} &= 2.5 p_{\text{gate}} + 0.5 p_{\text{Bell}}.
    \end{split}
\end{equation}

Using the above, we calculate the necessary code distance to achieve a logical error rate below \(10^{-12}\), and thereby estimate a lower bound for the physical Bell pair overhead per logical gate as \(2(d^{2}+(d-1)^{2})-1\). Here, the standard (not rotated) surface code is used due to its favorable boundary conditions for distributing the gates required for stabilizer measurements~\cite{ramette2023fault}. The calculated overheads for different Bell pair error rates are as follows:
\begin{table}[h]
\begin{center}
\begin{tabular}{|c|c|c|}\hline
  \(p_{\text{Bell}}\) & \(d\) & Bell pairs \\\hline\hline
  \(0.1\%\) & \(17\)  &  \(1089\)\\\hline
  \(1\%\)   & \(19\)  &  \(1369\)\\\hline
  \(5\%\)   & \(37\)  &  \(5329\)\\\hline
  \(10\%\)  & \(75\)  &  \(22,201\)\\\hline
  \(15\%\)  & \(189\) &  \(142,129\)\\\hline
\end{tabular}
\end{center}
\end{table}

\end{document}